\newif\ifobsolete
\newenvironment{obsoleteenv}{\noindent\textbf{\textcolor{red}{Obsolete: }}}{
}
\newcommand{\obs}[1]{\ifobsolete{\begin{obsoleteenv}\color{red}#1\end{obsoleteenv}}\fi}
\newtheorem{thm}{Theorem}[section]
\newtheorem{conj}[thm]{Conjecture}
\newtheorem{claim}[thm]{Claim}
\newtheorem{lemma}[thm]{Lemma}
\newtheorem{cor}[thm]{Corollary}
\theoremstyle{definition}
\newtheorem{defn}[thm]{Definition}
\newtheorem{rmrk}[thm]{Remark}
\newtheorem{notation}[thm]{Notation}
\newcommand{\alphaitems}{\renewcommand{\labelenumi}{\alph{enumi}\/\mbox{)}}}
\newcommand*{\nats}{\mathbb{N}}
\newcommand*{\complexes}{\mathbb{C}}
\newcommand*{\F}{\mathbb{F}}
\renewcommand*{\poly}{\mathit{poly}}
\newcommand*{\map}[3]{{{#1}:{#2}\rightarrow{#3}}}
\newcommand*{\st}{\;\middle\vert\;}
\newcommand*{\set}[1]{{\left\{#1\right\}}}
\newcommand{\sett}[2]{\left\{\,#1 \mid #2\,\right\}}
\newcommand*{\binary}{\set{0,1}}
\newcommand*{\bits}[1]{\binary^{#1}}
\newcommand*{\wt}{\textup{wt}}
\newcommand*{\ctrlX}[1]{{\textup{C}_{#1}X}}
\newcommand*{\ctrlZ}[1]{{\textup{C}_{#1}Z}}
\newcommand*{\cnot}{\textup{C-NOT}}
\renewcommand*{\p}{\varphi}
\newcommand*{\cH}{\mathcal{H}}
\newcommand*{\cJ}{\mathcal{J}}
\newcommand*{\calP}{\mathcal{P}}
\newcommand*{\cZ}{\mathcal{Z}}
\newcommand*{\mat}[1]{{\left[\begin{matrix}#1\end{matrix}\right]}}
\newcommand{\1}{\mathbf{1}}
\newcommand*{\nums}{\mathbb{N}}
\newcommand*{\field}{\mathbb{F}}
\newcommand*{\var}{\textup{var}}
\newcommand*{\two}{{\{0,1\}}}
\newcommand{\ones}{{\boldsymbol 1}}
\renewcommand{\a}{{\boldsymbol a}}
\renewcommand{\b}{{\boldsymbol b}}
\renewcommand{\c}{{\boldsymbol c}}
\newcommand{\w}{{\boldsymbol w}}
\newcommand{\x}{{\boldsymbol x}}
\newcommand{\y}{{\boldsymbol y}}
\newcommand{\z}{{\boldsymbol z}}
\title{Tight bounds on depth-$2$ $\QAC$-circuits computing parity}
\author{Daniel Pad\'e\hspace{0.625in}Stephen Fenner\\University of South Carolina\thanks{Computer Science and Engineering Department, Columbia, SC 29208 USA\@.   \texttt{djpade@gmail.com}, \texttt{fenner.sa@gmail.com}.  Part of the work was done while the first author visited the fourth author in June and July, 2019.} \and Daniel Grier\\IQC\thanks{Institute for Quantum Computing, University of Waterloo, Waterloo, ON N2L3G1 Canada. \texttt{daniel.grier@uwaterloo.ca}} \and Thomas Thierauf\\Ulm University\thanks{Department of Engineering, Computer Science and Psychology, Ulm, Germany.  \texttt{thomas.thierauf@uni-ulm.de}. Supported by DFG grant TH~472/5-2.}}
\begin{document}
\maketitle

\begin{abstract}
We show that the parity of more than three non-target input bits cannot be computed by $\QAC$-circuits of depth-$2$, not even uncleanly, regardless of the number of ancilla qubits.  This result is incomparable with other recent lower bounds on constant-depth $\QAC$-circuits by Rosenthal~[ICTS~2021,arXiv:2008.07470] and uses different techniques which may be of independent interest:
\begin{enumerate}
\item
We show that all members of a certain class of multivariate polynomials are irreducible.  
The proof applies a technique of Shpilka \& Volkovich [STOC 2008].
\item
We give a tight-in-some-sense characterization of when a multiqubit CZ gate creates or removes entanglement from the state it is applied to.
\end{enumerate}
The current paper strengthens an earlier version of the paper [arXiv:2005.12169].

\medskip

\noindent\textbf{Keywords:} multivariate polynomial, irreducible, indecomposable, justifying assignment, quantum circuit, QAC, QACC, parity gate, fanout gate, entanglement lemma
\end{abstract}

\section{Introduction}
\label{sec:intro}

Quantum decoherence is a major obstacle to maintaining long quantum computations.  Current quantum computers confront short decoherence times and so must act quickly to do useful computations, and this limitation is likely to continue long into the future.

A reasonable theoretical model of such computations are shallow quantum circuits, i.e., quantum circuits of small depth.  The decoherence dilemma has inspired much theoretical interest in the capabilities of these circuits, particularly circuits that have constant depth and polynomial size.  To solve useful problems, quantum circuits that are very shallow will require gates acting on several qubits at once. A major question then is this: do there exist multiple-qubit gates that are both potentially realizable and sufficient for powerful computation in small (even constant) depth?

It is known that, with the aid of \emph{fanout} gates (a certain multiqubit gate defined below), quantum circuits can do a variety of important tasks such as phase estimation and approximate Quantum Fourier Transform in essentially constant depth~\cite{HS:fanout}.  Are fanout gates necessary here?  If one only allows gates to act on $O(1)$ qubits each, it is clear that any decision problem computed by $o(\log n)$-depth quantum circuits with bounded error can only depend on $2^{o(\log n)}$ bits of the input (see Fang et al.~\cite{FFGHZ:fanout} for a discussion).  Thus without allowing \emph{some} class of quantum gates with unbounded width (arity), no nontrivial decision problem can be computed by such a circuit.
What if we restrict to constant-width quantum gates, but we allow measurement of several qubits at the end, followed by post-processing by a polynomial-size classical circuit?  Here the situation is more complicated.  For certain types of constant-depth circuits---particularly, for circuits with constant-width gates followed by a classical AND applied to the measured results of all the output qubits---one can compute in polynomial time the result, provided there is a wide enough gap in the probabilities of getting a $0$-result versus a $1$-result~\cite{FGHZ:constant-depth}.  In contrast, Bravyi, Gosset, \& K\"{o}nig presented a search problem\footnote{In a search problem (or relation problem) there may be several possible acceptable outputs, and the device is only required to produce one of them.} that can be computed exactly by a constant-depth quantum circuit with constant-width gates, and no classical probabilistic circuit of sublogarithmic depth can solve the same problem with high probability~\cite{BGK:quantum-advantage}.

Another type of multiqubit gate that has a natural definition is the quantum AND-gate, which flips the value of a target just when all the control qubits are on.\footnote{These gates are also called \emph{generalized Toffoli gates}.}  It is not clear whether such a gate will be easy to implement, but it is a natural question to compare the power of fanout versus quantum AND-gates with respect to constant-depth quantum computation.

A quantum circuit (actually a family of such circuits, one for each input size) using unbounded quantum AND-gates and single-qubit gates is called a \emph{$\QAC$-circuit}.  This is the quantum analogue of a classical $\AC$-circuit.    Takahashi \& Tani showed that the quantum AND-gate can be simulated exactly in constant depth by a quantum circuit with single-qubit gates and fanout gates~\cite{TT:constant-depth-collapse}.  The converse of the Takahashi \& Tani result---can a fanout gate be simulated exactly (or even approximately) by a constant-depth $\QAC$-circuit?---is still an open question, and is the main focus of this paper.  We conjecture that the answer is no, and our current results supply evidence in that direction, proving a separation between fanout and depth-2 $\QAC$-circuits.  It is known that quantum fanout gates are constant-depth equivalent to quantum parity gates~\cite{Moore:fanout}, and so the question at hand is a reasonable quantum analogue to the already proven separation between parity and $\AC^0$ in classical circuit complexity~\cite{Ajtai:AC0,FSS:AC0} (the superscript $0$ signifies constant-depth circuits).  This analogy is not perfect; in classical circuit complexity, fanout is usually taken for granted and used freely, and this is not the case with quantum circuits.

\begin{conj}\label{conj:QACneQACC}
Constant-depth $\QAC$-circuits cannot simulate quantum fanout gates.
\end{conj}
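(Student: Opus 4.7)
The plan is to reduce Conjecture~\ref{conj:QACneQACC} to the equivalent claim that no constant-depth $\QAC$-circuit computes the parity function $\mathrm{PARITY}_n$, exactly or even with bounded error, for arbitrary ancilla count and under unclean computation; this reduction is immediate from Moore's constant-depth equivalence between fanout and parity gates~\cite{Moore:fanout}. I would then proceed by induction on the circuit depth $d$, taking the depth-$2$ theorem of this paper (based on polynomial irreducibility and the entanglement lemma advertised in the abstract) as the base case.

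For the inductive step, suppose $C$ is a depth-$d$ $\QAC$-circuit supposedly computing parity on $n$ inputs with some set of ancillas. I would analyze the intermediate state $|\psi\rangle$ produced just before the final layer $L$ of generalized Toffoli gates and single-qubit unitaries. Since $L$ is itself a depth-$1$ $\QAC$ layer, the refined ``when does a multiqubit CZ create or remove entanglement'' characterization should restrict how $L$ can convert $|\psi\rangle$ into a state that exhibits the total correlations demanded by parity across the $n$ inputs and target. The goal is to show that $|\psi\rangle$ must already contain parity-like structure on a large subregister, in the sense that there exists a depth-$(d-1)$ $\QAC$-subcircuit of $C$ (possibly after projecting or relabelling some ancillas) that still solves parity on $\Omega(n)$ bits, contradicting the inductive hypothesis. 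The polynomial approach would enter by representing the amplitudes of $|\psi\rangle$ (as functions of the input bits) as multivariate polynomials and tracking how the irreducibility/indecomposability results of this paper propagate backwards through $L$.

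The principal obstacle is that a single layer of unbounded-fan-in AND gates can generate extremely rich entanglement structure, so a useful quantitative version of the entanglement lemma must be extraordinarily tight to survive iteration over $d$ layers; at depth $2$ the relevant polynomials are irreducible in a clean way, but after one further layer the corresponding polynomials acquire many new factors whose combinatorial structure is difficult to control. A secondary but serious difficulty is ancilla management: the conjecture allows arbitrarily many ancilla qubits, so the inductive hypothesis has to be robust enough that subcircuits extracted from $C$ still enjoy the full freedom to use ancillas unscrupulously. One route I would try in order to overcome both difficulties simultaneously is to combine the entanglement lemma with a quantum analogue of the random-restriction method used in the classical $\AC$ lower bounds of Ajtai and Furst--Saxe--Sipser~\cite{Ajtai:AC0,FSS:AC0}, applying a carefully chosen distribution of single-qubit measurements or Clifford operations to the inputs and ancillas so as to collapse many of the wide AND gates in the final layer while preserving enough of the parity computation to invoke the inductive hypothesis.
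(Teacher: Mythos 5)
This statement is a \emph{conjecture}: the paper does not prove it, and it remains open. What the paper actually establishes is only the depth-$2$ case (Theorem~\ref{thm:depth-2-unclean}), so there is no ``paper proof'' to compare against; your proposal must therefore stand on its own as a complete argument, and it does not. The inductive step is described only as a goal (``the goal is to show that $\ket{\psi}$ must already contain parity-like structure\ldots''), and you yourself identify the two obstacles --- controlling the entanglement structure after more than one layer of wide gates, and ancilla management --- that are precisely why the conjecture is open. Naming the obstacles is not the same as overcoming them; the proposed quantum random-restriction step is not constructed, and no argument is given that a restriction collapsing the last layer preserves a parity instance on $\Omega(n)$ bits while keeping the residual circuit in the form required by the inductive hypothesis.

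There is also a concrete false step at the very start. You propose to reduce the conjecture to the claim that no constant-depth $\QAC$-circuit computes parity ``exactly or even with bounded error, for arbitrary ancilla count.'' That strengthened claim is refuted by Rosenthal's result cited in this paper~\cite{Rosenthal:parity}: depth-$7$ $\QAC$-circuits \emph{can} approximate parity arbitrarily closely (hence certainly with bounded error) using exponentially many ancilla qubits. So any induction whose hypothesis includes the bounded-error version must fail at $d=7$ regardless of how the inductive step is engineered. A correct formulation would have to restrict to exact computation (or to polynomially many ancillae, per the line of work of~\cite{NPVY:QAC0-Pauli-Spectrum,ADOY:QAC0-superlinear-ancillae}), and even then the induction you sketch faces the difficulty that the paper's depth-$2$ argument uses the specific topology constraints of Lemmas~\ref{lem:no-three-input-qubits} and~\ref{lem:target-connects-only-one-input}, which do not obviously survive the extraction of a depth-$(d-1)$ subcircuit.
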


Partial progress on this conjecture
was made by Fang et al.~\cite{FFGHZ:fanout}, 
where it was shown that no constant-depth $\QAC$-circuit family (a.k.a.\ a $\QAC^0$-circuit family) \emph{with a sublinear number of ancilla qubits} can approximate a fanout gate.  Subsequent progress on this conjecture then stalled for several years.  
In 2014, E.~Pius~\cite{Pius:QAC} announced a result that parity (equivalently, fanout) of more than five qubits cannot be simulated cleanly by a $\QAC$-circuit with depth~2.\footnote{We ignore single-qubit gates in determining the depth of a circuit, counting only those layers containing multiqubit gates.  See Section~\ref{sec:prelims} for definitions.}  
We have been unable to verify his proof completely.  Nonetheless, some ideas in that paper have been helpful in a new push to prove the conjecture.

In an earlier version of our paper~\cite{PFGT:fanout} it was shown that no depth-$2$ $\QAC$-circuit on $n>3$ qubits can implement parity exactly.
This result improved upon that announced by Pius~\cite{Pius:QAC} by reducing the number of input qubits, and was tight in the sense that one can easily simulate the $3$-qubit parity gate cleanly with a depth-2 $\QAC$-circuit.

The current paper improves upon our earlier version~\cite{PFGT:fanout} by removing the cleanliness restriction, showing that no depth-2 $\QAC$-circuit can exactly compute the parity of more than three qubits, even uncleanly.  To do this, we introduce a new algebraic technique for our proof that is of independent interest and potentially useful for proving negative results for depth-$3$ and beyond.  Our technique is based on work of Shpilka \& Volkovich~\cite{SV:indecomposable} on variable-disjoint factors of a multivariate polynomial.  We show that a particular family of multivariate polynomials are all irreducible.  Using that, we prove a specific entangling property of the C-SIGN gate (a cousin of the generalized Toffoli gate; see Section~\ref{sec:prelims}).  Roughly, any essential application of a C-SIGN gate leaves all its qubits entangled, provided they were not so entangled to begin with.  By ``essential'' we mean that the gate does not disappear or simplify to a gate of smaller arity.

More recently and independently of us, improved bounds for depth-$d$ $\QAC$ circuits approximately computing $n$-qubit fanout/parity have been obtained by a number of people.  Rosenthal~\cite{Rosenthal:parity} proved that such circuits \emph{can} approximate parity arbitrarily closely when $d\ge 7$, albeit with an exponential number of ancilla qubits.  He also showed that depth-$2$ $\QAC$-circuits of arbitrary size cannot approximate parity (even uncleanly).  Nadimpalli, Parham, Vasconcelos, \& Yuen~\cite{NPVY:QAC0-Pauli-Spectrum} considered the Pauli spectra of polynomial-size $\QAC$ circuits, showing that such circuits of depth-$d$ using $n^{O(1/d)}$ ancilla qubits cannot compute parity on more than a $\left(\frac{1}{2} + 2^{-\Omega(n^{1/d})}\right)$-fraction of the inputs.  More recently, Anshu, Dong, Ou, \& Yao~\cite{ADOY:QAC0-superlinear-ancillae} obtained a slightly superlinear lower bound of $n^{1+3^{-d}}$ on the number of ancilla qubits needed to compute any function of linear approximate degree, including parity.  Improving this lower bound even slightly to $n^{1+\exp(-o(d))}$ would imply that parity is not in $\QAC^0$, leading to a separation of the language classes computed by these circuits: $\QAC^0 \ne \QACC^0$.  Here, $\QACC^0$-circuits are families of constant-depth, polynomial-size quantum circuits with single-qubit gates and unbounded mod-$q$ gates (for any $q>1$ constant across the circuits in the family).   (Parity gates were shown to be depth-$1$ equivalent to fanout gates~\cite{Moore:fanout}, so these circuits are layer-for-layer equivalent to circuits with fanout gates instead, and it is known that mod-$q$ gates are simulatable by $\QAC$-circuits with parity gates in constant depth, and vice versa~\cite{GHMP:QACC,HS:fanout,TT:constant-depth-collapse}.)  Echoing Rosenthal's result~\cite{Rosenthal:parity}, Grier \& Morris~\cite{GM:QAC-threshold} show that constant-depth, polynomial-size quantum circuits equipped with unbounded \emph{threshold} gates can compute fanout to arbitrarily close approximation.

Our results use techniques very different from all of those used above and address \emph{exact} computation of parity for non-asymptotic $n$, whereas those in~\cite{Rosenthal:parity,NPVY:QAC0-Pauli-Spectrum,ADOY:QAC0-superlinear-ancillae,GM:QAC-threshold} address approximations of various sorts that are asymptotic in nature.  Rosenthal's bounds on depth-$2$ circuits, for example, give asymptotic trade-offs between the closeness of the approximation and the maximum number of qubits allowed, and (as he implicitly admits) they are nontrivial only for $n$ at least roughly $10^{60,000}$.  Our current result is incomparable in that we give a \emph{tight} upper bound on $n$ allowing depth-$2$ circuits to \emph{exactly} compute parity (even uncleanly).

\section{Preliminaries}
\label{sec:prelims}

Let $n\in\nums$.
We define $[n]=\{1,\ldots,n\}$ and for
$s = s_1 s_2\cdots s_n \in \bits{n}$, 
we let $\wt(s)$ denote the \emph{Hamming weight} of~$s$, 
and  $\oplus s\in\binary$  the \emph{parity} of~$s$.
\obs{The \emph{bitwise complement} of~$s$ is denoted by~$\overline{s}$.}
\begin{align}
\wt(s) &= \sum_{k=1}^n s_i \\
\oplus s &= \wt(s) \bmod 2 \obs{\\
\overline{s} &= \overline{s_1}\, \overline{s_2}\, \cdots \, \overline{s_n}}
\end{align}

In a slight abuse of notation,
we use~$s$ also to denote the natural number in $[0,1,\dots,2^n-1]$ represented by $s$ in binary.  
(The correct interpretation will be clear from the context.)  
The binary string of length~$n$ of all $1$'s is denoted by~$\ones_n$.
If the length is clear from the context,
we sometimes write just~$\ones$.

\obs{
For a subset $I\subseteq [n]$ of the indices,
we define
$s|_I\in\two^n$ as the string that agrees with~$s$ on all bits in~$I$
and is set to~$1$ on all other bits
\[ 
(s|_I)_i = 
\begin{cases}
s_i & \text{if } i\in I, \\
1   & \text{if } i\notin I.
\end{cases} 
\]
So for example,
\[ 
(\overline{s|_I})_i = 
\begin{cases}
\overline{s}_i & \text{if } i\in I, \\
0   & \text{if } i\notin I.
\end{cases} 
\]
}

\begin{notation}\label{not:concat-convention}
Let $n = n_1 + n_2$.
For binary strings $s_1 \in \two^{n_1}$ and $s_2 \in \two^{n_2}$, 
we let $s = s_1\circ s_2 \in \two^{n}$ be the concatenation 
of~$s_1$ with~$s_2$.  
\end{notation}

\subsection{Algebraic Preliminaries}
\label{sec:alg-prelims}

For a field~$\field$ and variables $\x = (x_1,x_2, \dots, x_n)$,
let $\field[\x]$ denote the ring of $n$-variate polynomials over~$\field$.  
For $f\in\field[\x]$, 
an assigment $\a = (a_1,a_2\ldots,a_n) \in\field^n$, 
and a subset $I\subseteq [n]$, 
we define
$f|_{\x_I=\a}$ as the polynomial obtained from~$f$ by substituting~$a_i$ for~$x_i$, 
for each $i\in I$. 
Hence,
$f|_{\x_I=\a}$ is a polynomial in variables $x_j$, 
for $j\in \overline{I} = [n]\setminus I$.

We say that
$f\in\field[\x]$ \emph{depends on variable}~$x_i$, 
if there exists $\a\in\field^n$ such that 
$f|_{\x_{[n]\setminus\{i\}}= \a}$ is non-constant.
Note that $f|_{\x_{[n]\setminus\{i\}}=\a}$
is univariate in  variable~$x_i$.  
All variables that~$f$ depends on are denoted as~$\var(f)$,
\[
\var(f) = \sett{i\in [n]}{f \text{ depends on } x_i}.
\]
An assignment $\a\in \field^n$ that witnesses that~$f$ depends on all variables in~$\var(f)$ simultaneously
is called a justifying assignment for~$f$.

\begin{defn}[Justifying assignment]
For $f\in \field[\x]$, 
a \emph{justifying assignment} for~$f$ is an $\a\in \field^n$
such that
$f_{\x_{[n]\setminus\{i\}}= \a}$ depends on $x_i$, 
for \emph{all} $i\in\var(f)$.
\end{defn}

For fixed $f\in\field[\x]$, justifying assignments are known to exist provided $\field$ is big enough~\cite{SV:indecomposable},
in particular in infinite fields.

\begin{defn}
A polynomial $f\in\field[\x]$ is \emph{decomposable} 
if there exist nonconstant polynomials~$g,h$ over disjoint sets of variables such that $f = g h$.  
Otherwise $f$ is \emph{indecomposable}.
\end{defn}

Every polynomial $f\in\field[\x]$ factors uniquely (up to order and multiplication by nonzero elements of $\field$) into indecomposable factors over pairwise disjoint sets of variables. 
Since the factorization of a polynomial is unique
(up to the order of the factors),
the same holds for the decomposition. 
Irreducibility implies indecomposability.
The reverse implication holds for multilinear\footnote{By ``multilinear'' we mean that each variable has degree at most~$1$.} polynomials.  
Univariate polynomials are always indecomposable.  

A decomposition of $f\in\field[\x]$ induces a \emph{variable-partition}
of~$\var(f)$ by the factors,
where the sets correspond to the variables occurring in the indecomposable factors of~$f$.
Note that the partition is unique.
By convention, we extend this partition to a partition of $[n]$ by letting $\{i\}$ be an element of the partition for all $i\in [n]\setminus\var(f)$.

\subsection{Quantum Circuit Preliminaries}
\label{sec:q-circuit-prelims}


We write $z^*$ for the complex conjugate of $z\in\complexes$, and~$A^*$ for the adjoint (Hermitian conjugate) of an operator $A$ on a finite-dimensional Hilbert space.  Otherwise, our notation is fairly standard (see \cite{KLM:quantum-book,KSV:quantum-book,NC:quantumbook} for example).
For $m\ge 0$, we let $\cH_m$ denote the Hilbert space on $m$ qubits, labeled $1,\ldots,m$.  Thus $\cH_m$ has dimension $2^m$, and is isomorphic to $\left(\complexes^{2}\right)^{\otimes m}$ via the usual computational basis.
If $S$ is some subset of $[m]$, then we let $\cH_S$ denote the Hilbert space of the qubits with labels in $S$.  Thus for example, $\cH_m = \cH_{[m]}$.  For disjoint $S,T\subseteq [m]$, there is a natural isomorphism $\cH_{S\cup T} \cong \cH_S \otimes \cH_T$ obtained by merely permuting qubits as necessary, and so we will not distinguish between these two spaces.  For a subset $S$ of the qubits under consideration in the sequel, we let $\overline{S}$ denote the set of qubits not in $S$.

Our quantum circuit model with unitary gates is standard, found in several textbooks, including~\cite{NC:quantumbook,KLM:quantum-book}.  We assume our circuit acts on $\cH_m$ for some $m\in\nats$.  We assume qubits $1,\ldots,n$ are the \emph{input qubits}, for some $n\le m$, and the rest are \emph{ancilla qubits}.

All the quantum circuits we consider are allowed arbitrary single-qubit gates.  These gates do not count toward the depth of the circuit; only layers of multiqubit gates are counted for the depth.  For example, a depth-$1$ circuit many have multiqubit gates acting on disjoint sets of qubits simultanously (in a single layer), preceded and followed on each qubit with an arbitrary single-qubit gate.

The $1$-qubit Pauli gates are defined as usual:
\begin{align}
I &:= \mat{1&0\\0&1}\;, & X &:= \mat{0&1\\1&0}\;, & Y &:= \mat{0&-i\\i&0}\;, & Z &:= \mat{1&0\\0&-1}\;.
\end{align}
The $1$-qubit Hadamard gate is defined as $H := (X+Z)/\sqrt 2$.

The \emph{$k$-target fanout gate} $F_k$ acts on $k+1\ge 2$ qubits, where one qubit, the first, say, is the \emph{control} and the rest are targets:
\[ F_k\ket{c,x_1,x_2,\cdots,x_k} = \ket{c,\,c\oplus x_1,\,c\oplus x_2,\,\ldots,\,c\oplus x_k} \]
for all $c,x_1,\ldots,x_k\in\two$.  $F_k$ is equivalent to applying $k$ many $\cnot$-gates in succession, all with the same control qubit, and targets $1$ through $k$, respectively.  If the targets are initially all in the $\ket{0}$ state, then $F_k$ copies the classical value of the control qubit to each of the targets.\footnote{This does not violate the no-cloning theorem, because only the classical value is copied.}

The \emph{$k$-input parity gate} $\oplus_k$ acts on $k+1\ge 2$ qubits, where the first (say) is the target and the rest are control qubits:
\[ \oplus_k\ket{t,x_1,x_2,\ldots,x_k} = \ket{t\oplus x_1\cdots\oplus x_k,\,x_1,\,x_2,\,\ldots,\,x_k} \]
for any $t,x_1,\ldots,x_k\in\two$.  Note that we do not count the target as one of the inputs.  The parity gate $\oplus_k$ results from $F_k$ by conjugating each qubit with a Hadamard gate $H$, that is,
\[ \oplus_k = (H_1 H_2\cdots H_{k+1}) F_k (H_1 H_2\cdots H_{k+1}) \]
and vice versa \cite{Moore:fanout}.  We also use $\oplus_k$ to denote the classical Boolean parity function on $k$ input bits.

The \emph{$k$-qubit quantum AND-gate} (a.k.a.\ the generalized Toffoli gate) $\ctrlX{k}$ flips the value of the target (the first qubit, say) just when all control bits are $1$:
\[ \ctrlX{k}\ket{x_1,x_2,\ldots,x_k} = \ket{x_1\oplus {\textstyle\Pi_{j=2}^k x_j},\,x_2,\,\ldots,\,x_k} \]
for any $x_1,\ldots,x_k\in\two$.  For example $\ctrlX{2} = F_1 = \cnot$.

The gates mentioned above are all ``classical'' in the sense that they map computational basis states to computational basis states.  This is not true of the C-SIGN gate.

The \emph{$k$-qubit C-SIGN gate} $\ctrlZ{k}$ flips the overall phase just when all bits are $1$:
\[ \ctrlZ{k}\ket{x_1,\ldots,x_k} = (-1)^{\Pi_{j=1}^k x_j}\ket{x_1,\ldots,x_k} \]
for any $x_1,\ldots,x_k\in\two$.  The C-SIGN gate results from the quantum AND-gate by conjugating the target qubit with Hadamard gates: $\ctrlZ{k} = H_1(\ctrlX{k}) H_1$, and vice versa, $\ctrlX{k} = H_1(\ctrlZ{k}) H_1$.  A technical advantage of the C-SIGN gate over the quantum AND-gate is that the C-SIGN gate has no distinguished target or control qubits; all qubits incident to the gate are on the ``same footing;'' more precisely, the C-SIGN gate commutes with the SWAP operator applied to any pair of its qubits.  For example, we depict a $\ctrlZ{3}$-gate acting on adjacent qubits in a circuit diagram as follows:
\begin{center}
\begin{quantikz}
& \control{} &\qw \\
& \ctrl{-1}  &\qw \\
& \ctrl{-1}  &\qw
\end{quantikz}
\end{center}
With that in mind we define, for any subset $S$ of the qubits of a multiqubit register, the gate $\ctrlZ{S}$ as the C-SIGN gate acting on the qubits in $S$.  Note, however, that $\ctrlZ{S}$ is a unitary operator on the entire register, being the tensor product of a C-SIGN gate on the qubits in $S$ with the identity operator on the other qubits.  We define $\ctrlZ{\emptyset} := -I$ by convention, where $I$ is the identity operator on the entire register.  We also refer to a C-SIGN gate acting on an unspecified set of qubits as a $\ctrlZ{}$-gate.

\begin{defn}
A \emph{$\QAC$-circuit} is a quantum circuit that includes $\ctrlZ{}$-gates and (arbitrary) single-qubit gates.  For $\QAC$-circuit~$C$, we define the \emph{depth} of~$C$ in the standard way, except we do not include single-qubit gates as contributing to the depth, i.e., as if all single-qubit gates are removed.
\end{defn}

\begin{defn}\label{def:gate-positions}
A depth-$d$ $\QAC$-circuit can have $d$ layers of $\ctrlZ{}$-gates, which we call \emph{layers~$1$ through $d$}, respectively, layer~$1$ lying to the left of layer~$2$, etc.  To the left, right, and in between these layers are arbitrary $1$-qubit gates.  Viewing the circuit as acting from left to right, the leftmost $1$-qubit gates are applied first; we say these gates are on layer~$0.5$.  Then the layer-$1$ $\ctrlZ{}$-gates are applied, followed by the $1$-qubit gates between layers~$1$ and $2$ (layer~$1.5$), followed by the $\ctrlZ{}$-gates on layer~$2$, and so on, then finally the rightmost layer of $1$-qubit gates (layer~$d+\frac{1}{2}$).

For a given layer~$\ell$ and qubit label~$j$, we denote by $G_j^{(\ell)}$ the gate on layer~$\ell$ that is incident to qubit~$j$.  If no such gate exists, then $G_j^{(\ell)} := I$.  Thus for integral $\ell$, \ $G_j^{(\ell)}$ is either $I$ or a $\ctrlZ{}$~gate, and for non-integral $\ell$, \ $G_j^{(\ell)}$ is a $1$-qubit gate.  For non-integral $\ell$, if $S$ is a subset of the qubits in the circuit, we let $G_S^{(\ell)}$ denote the tensor product $\bigotimes_{j\in S} G_j^{(\ell)}$ of the $G_j^{(\ell)}$ for $j\in S$.  For any $\ell$, we let $G^{(\ell)}$ denote the tensor product of all gates on layer~$\ell$, acting on all the qubits.
\end{defn}

Depending on the context, we can interpret $G_S^{(\ell)}$ as acting on the space $\cH_S$ or on the entire space, where it is then the tensor product of $G_S^{(\ell)}$ with the identity operator on the rest of the qubits.

\begin{defn}
If $G$ is an $n$-qubit unitary operator and $C$ is a quantum circuit on $n+m$ qubits for some $m\ge 0$, we say that \emph{$C$ cleanly simulates $G$} if, for all $x\in\two^n$,
\[ C(\ket{x}\otimes\ket{0^m}) = (G\ket{x})\otimes\ket{0^m}\;. \]
\end{defn}

So particularly, when the ancilla qubits are initially all $0$, they are returned to being all $0$ at the end.  We end this section by defining ways a quantum circuit computes a classical $1$-output function.

\begin{defn}\label{def:compute-f}
Let $\map{f}{\two^n}{\two}$ be a Boolean function, for some $n\ge 1$.  Let $\ket{\alpha}\in\cH_m$ be an $m$-qubit state, for some $m\ge 0$.  A quantum circuit $C$ on $1+n+m$ qubits \emph{$\ket{\alpha}$-computes} $f$ if, for all $x\in\two^n$, there exists an $(n+m)$-qubit state $\ket{\p_x}$ such that
\[ C(\ket{0}\otimes\ket{x}\otimes\ket{\alpha}) = \ket{f(x)}\otimes\ket{\p_x}\;. \]

We say that $C$ \emph{computes} $f$ if $C$ $\ket{0^m}$-computes $f$.

We say that $C$ \emph{weakly computes} $f$ if there exists $\ket{\alpha}\in\cH_m$ such that $C$ $\ket{\alpha}$-computes $f$.
\end{defn}

When using a quantum circuit to $\ket{\alpha}$-compute a function $f$ as in Definition~\ref{def:compute-f}, we label the qubits of $C$ with numbers from $0$ to $n+m$, with qubit~$0$ being the \emph{target}, qubits~$1,\ldots,n$ the \emph{input qubits}, and the qubits $n+1,\ldots,n+m$ the \emph{ancilla qubits}.  Note that for $\ket{\alpha}$-computing $f$, we do not make any ``cleanliness'' restrictions; we assume that the target starts in state $\ket{0}$ and ancilla qubits start in state $\ket{\alpha}$ and that the non-target qubits end in an arbitrary state.

Clearly, any circuit that cleanly simulates $\oplus_k$ also computes $\oplus_k$, and any circuit computing $\oplus_k$ also weakly computes $\oplus_k$.  Recall that we do not count the target qubit (qubit~$0$) as an input qubit, even though one could plausibly do this for the parity function.

\subsection{Representing Quantum States by Polynomials}
\label{sec:states-vs-polynomials}

Fix a $k\ge 1$ and let $\cH$ be a $k$-qubit Hilbert space.  $\cH$ has dimension $2^k$ with computational basis $\{ \ket{s} : s\in\two^k \}$ whose elements are indexed by binary strings of length $k$.  For each such basis state $\ket{s}$ we introduce a unique formal variable $x_s$ and define $\poly_\cH(\ket{s}) := x_s$.  The choice of the letter $x$ is not important and will depend on $\cH$.  We extend $\poly_\cH$ to all of $\cH$ by linearity, yielding a unique linear map $\map{\poly_\cH}{\cH}{\complexes[\{x_s :s\in\two^k\}]}$ so that, for any $v\in\cH$, writing $v = \sum_{s\in\two^k} \alpha_s\ket{s}$ for some coefficients $\alpha_s\in\complexes$, we have
\[ \poly_\cH(v) = \sum_{s\in\two^k} \alpha_s x_s\;. \]
(Here, $\complexes[S]$ is the ring of polynomials over variables in the set $S$.)  The map $\poly_\cH$ is clearly one-to-one, and its image is the set of homogeneous degree-$1$ polynomials in $\complexes[\{x_s\}]$.

Given a $k$-qubit Hilbert space $\cH$ and an $\ell$-qubit space $\cJ$, let $x_s := \poly_{\cH}(\ket{s})$ and $y_t := \poly_{\cJ}(\ket{t})$ for all $s\in\two^k$ and $t\in\two^\ell$.  The letters $x$ and $y$ are not important except that they must represent disjoint sets of variables.  We define $\map{\poly_{\cH,\cJ}}{\cH\otimes\cJ}{\complexes[\{x_s\}\cup\{y_t\}]}$ to be the unique linear map such that
\[ \poly_{\cH,\cJ}(\ket{s}\otimes\ket{t}) = \poly_{\cH}(\ket{s})\cdot\poly_{\cJ}(\ket{t}) = x_sy_t \]
for all $s\in\two^k$ and $t\in\two^\ell$.  Since the variable sets $\{x_s\}$ and $\{y_t\}$ are disjoint, we have that $\poly_{\cH,\cJ}$ is one-to-one.  It is also easily checked that for all $u\in\cH$ and $v\in\cJ$,
\begin{equation}\label{eqn:separability-implies-decomposability}
\poly_{\cH,\cJ}(u\otimes v) = \poly_\cH(u)\cdot \poly_\cJ(v)\;.
\end{equation}
Note that $\poly_{\cH,\cJ}$ is \emph{not} the same as $\poly_{\cH\otimes\cJ}$; the former maps to quadratic polynomials and the latter to linear polynomials.  We can extend this idea to tensor products of several spaces (we will need two, three, and four), choosing disjoint sets of variables for each: For example, letting $\cH_1,\cH_2,\cH_3,\cH_4$ be spaces of $k,\ell,m,n$ qubits, respectively, we define
\begin{equation}\label{eqn:poly-four}
\poly_{\cH_1,\cH_2,\cH_3,\cH_4}(\ket{s}\otimes\ket{t}\otimes\ket{u}\otimes\ket{v}) = x_sy_tz_uw_v
\end{equation}
for all binary strings $s,t,u,v$ of length $k,\ell,m,n$, respectively, where $\{x_s\},\{y_t\},\{z_u\},\{w_v\}$ are disjoint set of variables, and extend by linearity to all of $\cH_1\otimes\cH_2\otimes\cH_3,\otimes\cH_4$.


\section{Irreducibility Results}
\label{sec:irred-results}

In this section we present results used to prove the Entanglement Lemma (Lemma~\ref{lem:S-entangled}), which in turn is used to prove our depth-$2$ $\QAC$-circuit lower bound (Theorem~\ref{thm:depth-2-unclean}).  The results here may be of independent interest, however, and can potentially be used to prove stronger versions of the lemma and stronger circuit lower bounds.
We state the lemmas for field $\F = \complexes$,
but they hold as well over any sufficiently large field.\footnote{Some trivial modifications are needed for fields with characteristic~$2$ or $3$.}

Shpilka and Volkovich~\cite{SV:indecomposable}
gave a characterization of when a set $I\subseteq [n]$
is a union of sets from the variable-partition of~$f$.

\begin{lemma}[{\cite[Lemma 3.2]{SV:indecomposable}}]\label{lem:sv}
Let $f \in \field[\x]$ be a polynomial and let $\a\in\field^n$ be a justifying assignment of~$f$.  
Then $I\subseteq [n]$ satisfies  
$f(\a) \cdot f \equiv f|_{\x_I=\a}\cdot f|_{\x_{[n]\setminus I}=\a}$, 
if and only if 
$I$ is a union of sets from the variable-partition of~$f$.
\end{lemma}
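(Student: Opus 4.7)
The plan is to work directly with the unique factorization $f = c\prod_{j=1}^{k}g_j$ of $f$ into indecomposable factors over pairwise disjoint variable sets $V_j=\var(g_j)$, with $c\in\field^*$; the variable-partition of $[n]$ is then $\{V_1,\ldots,V_k\}$ together with singletons $\{i\}$ for $i\notin\var(f)$, which play no role in the identity because $f$ is constant in those variables, so without loss of generality $I\subseteq\var(f)$. For the $(\Leftarrow)$ direction, if $I=\bigcup_{j\in J}V_j$, then each $g_j$ with $j\in J$ is fully substituted by $\x_I=\a$ and collapses to the scalar $g_j(\a|_{V_j})$, while the others are left alone, with the symmetric statement for $\x_{[n]\setminus I}=\a$; multiplying the two factored expressions gives $c^2\prod_j g_j(\a|_{V_j})\cdot\prod_j g_j = f(\a)\cdot f$.

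For the $(\Rightarrow)$ direction I would argue contrapositively: assume some $V_k$ is split by $I$ (i.e.\ both $V_k\cap I$ and $V_k\setminus I$ are nonempty) and derive a contradiction. The justifying-assignment hypothesis is used twice. First, for any $i\in V_k\setminus I$ the polynomial $g_k|_{\x_{V_k\setminus\{i\}}=\a}$ is nonconstant in $x_i$, but it equals $(g_k|_{\x_{V_k\cap I}=\a})|_{\x_{(V_k\setminus I)\setminus\{i\}}=\a}$; since further substitution can only remove variables, $g_k|_{\x_{V_k\cap I}=\a}$ must already be nonconstant, and symmetrically so is $g_k|_{\x_{V_k\setminus I}=\a}$. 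Second, when $k\ge 2$, for each $m$ fix some $m'\neq m$ and some $i\in V_{m'}$ and unpack the justifying-assignment condition at $i$: the expansion $f|_{\x_{[n]\setminus\{i\}}=\a}=c\prod_{\ell\neq m'}g_\ell(\a|_{V_\ell})\cdot g_{m'}|_{\x_{V_{m'}\setminus\{i\}}=\a}$ must be nonconstant in $x_i$, forcing $g_m(\a|_{V_m})\neq 0$.

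Now compare both sides of the identity using the disjoint-variable decomposition $\field[\x]\cong\bigotimes_j\field[\x_{V_j}]$. For $k\ge 2$, the nonvanishing just established lets us cancel all slots $j\neq k$ and extract the single-slot identity $g_k(\a|_{V_k})\cdot g_k = \lambda\cdot g_k|_{\x_{V_k\cap I}=\a}\cdot g_k|_{\x_{V_k\setminus I}=\a}$ for some $\lambda\in\field^*$; for $k=1$ the original identity is already this single-slot identity with $\lambda=1$. Two cases then finish: if $g_k(\a|_{V_k})\neq 0$, divide through to exhibit $g_k$ as a product of two nonconstant polynomials over the disjoint variable sets $V_k\setminus I$ and $V_k\cap I$, contradicting indecomposability; if $g_k(\a|_{V_k})=0$, the left side vanishes while the right is a nonzero product of two nonzero polynomials in the integral domain $\field[\x_{V_k}]$, again impossible.

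The main delicate point is the first use of the justifying-assignment hypothesis: lifting the one-variable-at-a-time guarantee to the potentially large substitution $\x_{V_k\cap I}=\a$. The observation that $\var$ can only shrink under substitution does this cleanly, since a constant polynomial cannot become nonconstant under further specialization. A secondary technicality is isolating the $k$-th tensor slot when several $V_j$'s are simultaneously split, which is where the second use of the justifying assignment (nonvanishing of the remaining scalar factors) comes in and keeps the argument uniform, including the edge case $k=1$.
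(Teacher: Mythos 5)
This lemma is imported verbatim from Shpilka \& Volkovich (their Lemma~3.2); the paper gives no proof of it, so there is no in-paper argument to compare against. Your blind reconstruction is correct. The $(\Leftarrow)$ direction is the routine computation. For $(\Rightarrow)$, the two places where the justifying-assignment hypothesis enters are exactly the right ones and are used correctly: (i) expanding $f|_{\x_{[n]\setminus\{i\}}=\a}$ as a scalar times $g_k|_{\x_{V_k\setminus\{i\}}=\a}$ shows that restriction is nonconstant, and the ``substitution can only shrink $\var$'' observation correctly lifts this to nonconstancy of $g_k|_{\x_{V_k\cap I}=\a}$ and $g_k|_{\x_{V_k\setminus I}=\a}$; (ii) the same expansion at an $i$ lying in a \emph{different} block forces $g_m(\a|_{V_m})\neq 0$ for every $m$ when $k\ge 2$, which is what legitimizes isolating the split slot (most cleanly done by substituting $\a$ into all slots $j\neq k$ of the assumed identity and cancelling the resulting nonzero scalar $c^2\prod_{j\neq k}g_j(\a|_{V_j})^2$, which in fact gives your single-slot identity with $\lambda=1$). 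The endgame dichotomy --- either $g_k(\a|_{V_k})\neq 0$ and $g_k$ decomposes over the disjoint sets $V_k\setminus I$ and $V_k\cap I$, or $g_k(\a|_{V_k})=0$ and a zero polynomial equals a product of two nonzero polynomials in an integral domain --- is airtight, and the second horn only actually arises when $k=1$, consistent with Corollary~\ref{cor:zero-assignment}. This matches the spirit of the original Shpilka--Volkovich argument (work blockwise in the unique variable-disjoint factorization); the only cosmetic caveat is the double use of $k$ both for the number of blocks and for the index of the split block.
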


We will use the following consequence of Lemma~\ref{lem:sv}.

\begin{cor}\label{cor:zero-assignment}
Let $f\in\field[\x]$ be a polynomial.  
If there exists a justifying assignment~$\a$ of~$f$ such that $f(\a) = 0$, 
then~$f$ is indecomposable.
\end{cor}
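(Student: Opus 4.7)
The plan is to prove the contrapositive: assuming $f$ is decomposable, I show that no justifying assignment $\a$ can satisfy $f(\a)=0$. So suppose $f = g\cdot h$ with nonconstant $g,h$ over disjoint variable sets, and let $I\subseteq[n]$ be the set of variables appearing in $g$ (augmented by the appropriate singletons from $[n]\setminus\var(f)$, per the convention extending the variable-partition to all of $[n]$). By the uniqueness of the indecomposable factorization, $I$ is a union of blocks of the variable-partition of $f$, so Lemma~\ref{lem:sv} applies and yields
\[ f(\a)\cdot f \;\equiv\; f|_{\x_I=\a}\cdot f|_{\x_{[n]\setminus I}=\a}. \]

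Now let $\a$ be a justifying assignment with $f(\a)=0$. The left side is identically zero, and since $\field[\x]$ is an integral domain, one of the two restrictions on the right must be the zero polynomial. I would next unpack these restrictions using $f=gh$: plugging $\a$ into the variables indexed by $I$ turns $g$ into the scalar $g(\a|_I)$ while leaving $h$ intact, so $f|_{\x_I=\a} = g(\a|_I)\cdot h$, and symmetrically $f|_{\x_{[n]\setminus I}=\a} = h(\a|_{[n]\setminus I})\cdot g$. Since $g$ and $h$ are both nonzero, the vanishing of one of these restrictions forces either $g(\a|_I)=0$ or $h(\a|_{[n]\setminus I})=0$.

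I would close by contradicting the justifying property. Without loss of generality, assume $g(\a|_I)=0$ (the other case is symmetric). Because $h$ is nonconstant, $\var(h)$ is nonempty; pick any $j\in\var(h)\subseteq\var(f)$. Then
\[ f|_{\x_{[n]\setminus\{j\}}=\a} \;=\; g(\a|_I)\cdot h|_{\x_{\var(h)\setminus\{j\}}=\a}(x_j) \;=\; 0, \]
a constant (indeed the zero polynomial) in $x_j$, so $f|_{\x_{[n]\setminus\{j\}}=\a}$ does not depend on $x_j$. This contradicts $\a$ being a justifying assignment for $f$.

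The main obstacle is purely a bookkeeping one: verifying that $I$ really is a union of variable-partition blocks of $f$ so that Lemma~\ref{lem:sv} is applicable. This hinges on the uniqueness (up to order and units) of the indecomposable factorization of $f$ combined with the stated convention of adding singleton blocks for indices outside $\var(f)$. Once that point is settled, the remainder is a direct application of Lemma~\ref{lem:sv} together with the integral-domain property of $\field[\x]$.
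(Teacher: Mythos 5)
Your proposal is correct and follows essentially the same route as the paper's proof: both argue by contraposition from a decomposition $f=gh$, apply Lemma~\ref{lem:sv} with $I=\var(g)$, and use the justifying property to conclude that the two restricted polynomials are nonzero, forcing $f(\a)\ne 0$. Your explicit unpacking of the restrictions as $g(\a)\cdot h$ and $h(\a)\cdot g$ is just a more detailed rendering of the paper's one-line appeal to the justifying assignment, so the two arguments are the same in substance.
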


\begin{proof}
For simplicity of notation let $\var(f) = [n]$.  
Let $\a$ be any justifying assignment of~$f$.  
Suppose~$f$ decomposes into $f = gh$,
where $g,h$ are non-constant and variable-disjoint.
Let $I = \var(g)$.
Then~$I \not= \emptyset$ is the disjoint union of sets from the variable-partition 
of~$f$ and 
$\var(h) = [n] \setminus I$.
Hence,
by Lemma~\ref{lem:sv} we have 
\[
f(\a)\cdot f \equiv f|_{\x_I=\a}\cdot f|_{\x_{[n]\setminus I}= \a}.
\]  
Because $\a$ is justifying,
we have $f|_{\x_I=\a}\not\equiv 0$ and $f|_{\x_{[n]\setminus I}= \a} \not\equiv 0$.
Therefore $f(\a)\cdot f \not\equiv 0$, whence $f(\a)\ne 0$.
\end{proof}

We apply Corollary~\ref{cor:zero-assignment} to prove Lemma~\ref{lem:all-contact-two-zeros}, below.  That and the next two lemmas (Lemmas~\ref{lem:all-contact-one-zero} and \ref{lem:all-contact}) will only be used to prove analogous but more general lemmas (Lemmas~\ref{lem:most-general-two-zeros}, \ref{lem:most-general-one-zero}, and \ref{lem:most-general}) that we will use in Section~\ref{sec:entanglement-lemma}.

\begin{lemma}\label{lem:all-contact-two-zeros}
Let $k,m\in\nums$ be positive.  Define the polynomials
\begin{align}
T_1(\x) &= \sum_s c_s x_s & T_2(\z) &= \sum_u d_u z_u\;,
\end{align}
where
\begin{itemize}
\item
$s$ and $u$ run over the strings in $\two^k$ and $\two^m$, respectively,
\item
$x_s$ is a variable for each $s \in \two^k$ and $z_u$ is a variable for each $u \in \two^m$, and
\item
$c_s, d_u\in\complexes$ are coefficients such that
\begin{itemize}
\item
$c_{\ones} \not= 0$ ~and~ $d_{\ones} \not= 0$, 
\item
$\exists s \ne\ones~~ c_s\ne 0$,
\item
$\exists u \ne\ones~~ d_u\ne 0$.
\end{itemize}
\end{itemize}
Fix a nonzero $\alpha\in\complexes$ and define
\begin{equation}\label{eqn:C-generic-two-zeros}
P = T_1T_2 - \alpha c_{\ones}d_{\ones}x_\ones z_\ones\,.
\end{equation}
Then $P$ is indecomposable and hence irreducible.
\end{lemma}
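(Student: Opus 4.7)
The plan is to invoke Corollary~\ref{cor:zero-assignment}: I will exhibit a justifying assignment $\a$ for $P$ at which $P(\a) = 0$, which yields that $P$ is indecomposable. Irreducibility follows because $P = T_1T_2 - \alpha c_\ones d_\ones x_\ones z_\ones$ is multilinear: every monomial in $T_1T_2$ has the form $x_s z_u$, and the subtracted term is also of this form, so every variable appears with degree at most $1$, and the paper has already noted that indecomposability implies irreducibility for multilinear polynomials.

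First I would pin down $\var(P)$. Since the monomial $x_s z_u$ in $P$ has coefficient $c_s d_u$ (or $(1-\alpha)c_\ones d_\ones$ in the exceptional case), and since $c_\ones, d_\ones\ne 0$, each $x_s$ with $c_s\ne 0$ appears in a nonzero monomial of $P$ (pair it with $z_\ones$ if $s\ne\ones$, else with some $z_u$, $u\ne\ones$, $d_u\ne 0$); symmetrically for $z_u$. Hence $\var(P) = \{x_s : c_s\ne 0\}\cup\{z_u : d_u\ne 0\}$. Next I compute, for any assignment $\a$, the coefficient of each variable in $P$ restricted so the other variables are set according to $\a$:
\begin{align}
\text{coeff of } x_s \ (s\ne\ones) &= c_s T_2(\a), \\
\text{coeff of } x_\ones &= c_\ones\bigl(T_2(\a) - \alpha d_\ones a_{z_\ones}\bigr),
\end{align}
and symmetrically for $z_u$. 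Thus $\a$ is justifying iff $T_1(\a), T_2(\a)\ne 0$, \ $T_1(\a)\ne\alpha c_\ones a_{x_\ones}$, and $T_2(\a)\ne\alpha d_\ones a_{z_\ones}$.

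The construction is then: set $a_{x_\ones} = a_{z_\ones} = 1$, and note that the condition $P(\a)=0$ becomes $T_1(\a) T_2(\a) = \alpha c_\ones d_\ones$. Using the freedom in the other $a_{x_s}$ (which exists because some $c_s\ne 0$ with $s\ne\ones$) and other $a_{z_u}$ (some $d_u\ne 0$ with $u\ne\ones$), I can independently set $T_2(\a) =: \gamma$ to any desired value in $\complexes$ and then set $T_1(\a) = \alpha c_\ones d_\ones/\gamma$. It suffices to pick $\gamma\notin\{0,\,d_\ones,\,\alpha d_\ones\}$: this makes $T_2(\a)\ne 0$ and $T_2(\a)\ne\alpha d_\ones$; moreover $T_1(\a)\ne 0$ automatically, and $T_1(\a) = \alpha c_\ones$ would force $\gamma = d_\ones$, which is excluded. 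Since $\complexes$ is infinite (and indeed any field with more than three elements suffices, with a minor workaround in small characteristic) such a $\gamma$ exists.

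The only real subtlety is verifying that the justifying conditions for $x_\ones$ and $z_\ones$ (the ``twisted'' coefficients) can be simultaneously satisfied together with the equation $T_1(\a)T_2(\a)=\alpha c_\ones d_\ones$; this is why the forbidden set $\{0, d_\ones, \alpha d_\ones\}$ appears, and why the hypotheses $c_\ones, d_\ones\ne 0$ together with the existence of some other nonzero $c_s$ and $d_u$ are both needed. With $\a$ in hand, Corollary~\ref{cor:zero-assignment} gives indecomposability, and multilinearity upgrades this to irreducibility.
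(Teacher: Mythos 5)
Your proposal is correct and follows essentially the same route as the paper: both invoke Corollary~\ref{cor:zero-assignment} by exhibiting a justifying assignment that is a root of $P$, with the justifying conditions reducing to $T_1,T_2\ne 0$, $T_1\ne\alpha c_\ones$, $T_2\ne\alpha d_\ones$. Your variant of tuning $T_2(\a)=\gamma$ freely and solving for $T_1(\a)$ is a slightly cleaner way to organize the same avoidance argument that the paper carries out with the two-parameter family $\a(A,B)$.
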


\begin{proof}
We find a justifying assignment $\a = \a(A,B)$ of $P$ such that $P(\a) = 0$, 
satisfying Corollary~\ref{cor:zero-assignment}, where $\a$ depends on values $A\in\{1,2,3,4,5\}$ and $B\in\complexes$ is yet to be determined.
Fix $s_0\ne\ones$ and $u_0\ne\ones$ such that $c_{s_0} \ne 0$ and $d_{u_0} \ne 0$.  
We define~$\a$ by
the following assignment to the $\x$- and $\z$-variables:
\begin{align}
x_s &:= \begin{cases}
A & \mbox{if $s = s_0$,}\\
1 & \mbox{if $s = \ones$,} \\
0 & \mbox{otherwise,}
\end{cases}
&
z_u &:= \begin{cases}
B & \mbox{if $u = u_0$,}\\
1 & \mbox{if $u = \ones$,} \\
0 & \mbox{otherwise.}
\end{cases}
\end{align}
This makes
\begin{align}
T_1 &= c_{\ones} + c_{s_0}A\;, & T_2 &= d_{\ones} + d_{u_0}B\;, \label{eqn:Ts-AB}
\end{align}
and
\begin{equation}\label{eqn:root}
P(\a) = T_1T_2 - \alpha c_{\ones}d_{\ones}\;.
\end{equation}
We consider the projections of~$P$ to univariate polynomials,
for every variable of~$P$,
where the other variables are set according to~$\a$.
For the $\x$- and $\z$-variables,
let the projections be polynomials~$P_s^{(1)}(x_s)$
and $P_u^{(2)}(z_u)$.
We have
\begin{align}
P_s^{(1)}(x_s) &= 
\begin{cases}
(T_2 - \alpha d_{\ones})\, c_{\ones} x_\ones  + C_\ones, & \text{for } s = \ones, \\[2ex]
 c_s T_2x_s  + C_s, & \text{for } s\ne\ones , \label{eqn:xs-two-zeros} 
\end{cases}
\\
P_u^{(2)}(z_u) &= 
\begin{cases}
(T_1 - \alpha c_{\ones})\, d_{\ones} z_\ones + D_\ones, & \text{for } u = \ones,  \\[2ex]
T_1 d_u z_u + D_u, & \text{for } u\ne\ones , \label{eqn:zu-two-zeros}
\end{cases}
\end{align}
for constants $C_s,D_u \in \complexes$.

We choose $A,B$ such that 
for the assignment $\a = \a(A,B)$,
all the polynomials~$P_s^{(1)}(x_s)$
and~$P_u^{(2)}(z_u)$ are nonconstant and  $P(\a) = 0$.  
By Eqs.~(\ref{eqn:xs-two-zeros},\ref{eqn:zu-two-zeros}),
we must have
\begin{align}
T_1,T_2 &\ne 0 , & T_2 &\ne \alpha d_{\ones},  & T_1 &\ne \alpha c_{\ones}.
\end{align}
By Eq.~(\ref{eqn:Ts-AB}),
this excludes two values for each of~$A$ and~$B$.

Setting $P(\a) = 0$ and using Eqs.~(\ref{eqn:Ts-AB},\ref{eqn:root}),
we get
\begin{equation}\label{eqn:P(a)=0}
(c_{\ones} + c_{s_0}A)\, (d_{\ones} + d_{u_0}B )  = \alpha c_{\ones}d_{\ones}.
\end{equation}
Now observe that for any~$A$ such that $T_1 = c_{\ones} + c_{s_0}A \not= 0$,
there is a unique~$B$ that fulfills~(\ref{eqn:P(a)=0}),
namely
\begin{equation}\label{eqn:B-from-A}
B = \frac{\alpha c_{\ones}d_{\ones}}{d_{u_0}(c_{\ones} + c_{s_0}A)} - \frac{d_{\ones}}{d_{u_0}}.
\end{equation}
Moreover the mapping of~$A$ to solution~$B$ is injective.
Recall that we have to avoid two values for each of~$A$ and~$B$.
Hence,
when we select~$A$ out of~$5$ values, say $A \in \{0,1,2,3,4\}$,
one of the five values for~$A$ must give an appropriate~$B$
according to~(\ref{eqn:B-from-A})
such that
$\a = \a(A,B)$ is a justifying assignment for~$P$ and $P(\a) = 0$.
\end{proof}


The next two lemmas extend the polynomial~$P$ in Lemma~\ref{lem:all-contact-two-zeros} to more variables, but still being multilinear.
The first extension introduces $\w$-variables in~$T_2$.

\begin{lemma}\label{lem:all-contact-one-zero}
Let $k,m,n\in\nums$ be positive.  Define the polynomials
\begin{align}
T_1(\x) &= \sum_s c_s x_s & T_2(\z,\w) &= \sum_{u,v} d_{u,v} z_uw_v\;,
\end{align}
where
\begin{itemize}
\item
$s$, $u$, and $v$ run over the strings in $\two^k$, $\two^m$, and $\two^n$, respectively,
\item
$x_s$ is a variable for each $s\in\two^k$ and
similarly for the $z_u$ and $w_v$, and
\item
$c_s, d_{u,v}\in\complexes$ are coefficients such that
\begin{itemize}
\item
$c_{\ones} \not= 0$ ~and~ $d_{\ones,\ones} \not= 0$, 
\item
$\exists s\ne\ones~~  c_s\ne 0$,
\item
$\exists u \ne\ones ~ \exists v ~~ d_{u,v}\ne 0$ ~and~ 
$\exists u ~ \exists v\ne\ones ~~ d_{u,v}\ne 0$.
\end{itemize}
\end{itemize}
Fix a nonzero $\alpha\in\complexes$ and define
\begin{equation}\label{eqn:C-generic-one-zero}
P = T_1T_2 - \alpha c_{\ones}d_{\ones,\ones}x_\ones z_\ones w_\ones\;.
\end{equation}
Then $P$ is indecomposable and hence irreducible.
\end{lemma}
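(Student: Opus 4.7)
Plan: The approach is to apply Corollary~\ref{cor:zero-assignment} by exhibiting a justifying assignment $\a$ of $P$ with $P(\a)=0$. The proof parallels that of Lemma~\ref{lem:all-contact-two-zeros}, but because $T_2$ is now bilinear in $\z$ and $\w$, the sparse ``two-active-variable'' assignment used there would in general cause some $z_u$- or $w_v$-projection to collapse to a constant, even though $P$ genuinely depends on that variable (e.g., if the only nonzero $d_{u,v}$ for a given $u$ sits outside the two chosen $\w$-coordinates). I therefore keep the $\x$-part sparse, but pick the $\z$- and $\w$-entries generically from $\complexes$.

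Concretely, fix $s_0\ne\ones$ with $c_{s_0}\ne 0$, and set $x_\ones:=1$, $x_{s_0}:=A$ (to be determined), $x_s:=0$ for all other $s$, $z_\ones:=w_\ones:=1$, $z_u:=b_u$ for $u\ne\ones$, and $w_v:=c_v$ for $v\ne\ones$, with $b_u,c_v\in\complexes$ yet to be chosen. Then $T_1|_\a=c_\ones+c_{s_0}A$ and $T_2|_\a=\sum_{u,v}d_{u,v}b_uc_v$ (with the convention $b_\ones=c_\ones=1$). Solving $P(\a)=0$ forces
\[
A \;=\; \frac{1}{c_{s_0}}\!\left(\frac{\alpha c_\ones d_{\ones,\ones}}{T_2|_\a}-c_\ones\right),
\]
which is defined whenever $T_2|_\a\ne 0$ and automatically gives $T_1|_\a=\alpha c_\ones d_{\ones,\ones}/T_2|_\a\ne 0$. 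The $b_u$'s and $c_v$'s will be selected so that each of the following polynomials in $\{b_u,c_v\}$ is nonzero at the chosen point: $T_2|_\a$ and $T_2|_\a-\alpha d_{\ones,\ones}$; $\sum_v d_{u,v}c_v$ for every $u\ne\ones$ with $z_u\in\var(P)$; $\sum_u d_{u,v}b_u$ for every $v\ne\ones$ with $w_v\in\var(P)$; $\sum_{u\ne\ones,v}d_{u,v}b_uc_v$; and $\sum_{u,v\ne\ones}d_{u,v}b_uc_v$. The three bulleted hypotheses on the $d_{u,v}$'s are exactly what make each of these polynomials not identically zero, so a generic choice of $\{b_u,c_v\}$ in the infinite field $\complexes$ avoids the finite union of bad hypersurfaces.

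With $A$ and $\{b_u,c_v\}$ chosen in this way, a direct calculation shows that every projection of $P$ onto a variable in $\var(P)$ is nonconstant: the $x_s$-coefficient is $c_s T_2|_\a$ (or $c_\ones(T_2|_\a-\alpha d_{\ones,\ones})$ when $s=\ones$); the $z_u$-coefficient for $u\ne\ones$ is $T_1|_\a\sum_v d_{u,v}c_v$ (symmetric for $w_v$, $v\ne\ones$); and, after substituting $\alpha c_\ones d_{\ones,\ones}=T_1|_\a T_2|_\a$ from $P(\a)=0$, the $z_\ones$- and $w_\ones$-coefficients simplify to $-T_1|_\a\sum_{u\ne\ones,v}d_{u,v}b_uc_v$ and $-T_1|_\a\sum_{u,v\ne\ones}d_{u,v}b_uc_v$, respectively. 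Hence $\a$ is a justifying assignment at which $P$ vanishes, so $P$ is indecomposable by Corollary~\ref{cor:zero-assignment}, and multilinearity upgrades indecomposability to irreducibility. The main (mostly bookkeeping) obstacle is handling the two ``$\ones$'' projections: verifying that the $\alpha$-correction rewrites their nonconstancy into exactly the two bilinear sums above, which is the precise point where the two separate existence hypotheses on $T_2$ (one giving $u\ne\ones$, one giving $v\ne\ones$) are both needed.
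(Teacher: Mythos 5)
Your proof is correct, but it takes a genuinely different route from the paper's. The paper proves this lemma by reduction to Lemma~\ref{lem:all-contact-two-zeros}: it substitutes a sparse assignment $\b(r)$ for the $\w$-variables (supported only on $w_{v_0}$ and $w_\ones$, with $r\in\{0,1,2\}$ chosen by a short case analysis) so that $P(\x,\z,\b)$ satisfies the hypotheses of Lemma~\ref{lem:all-contact-two-zeros} and is therefore indecomposable; a putative decomposition $P=gh$ then forces one factor, say $g$, to live entirely in the $\w$-variables, and a symmetric substitution for the $\z$-variables yields a contradiction. You instead argue directly via Corollary~\ref{cor:zero-assignment}, exhibiting a justifying assignment at which $P$ vanishes --- in effect generalizing the \emph{proof} of Lemma~\ref{lem:all-contact-two-zeros} rather than invoking its statement. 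Your observation that a naive sparse assignment can kill the projections onto variables $z_u,w_v$ outside its support is exactly the reason a direct imitation fails, and replacing sparsity by a generic choice over $\complexes$ (avoiding the finitely many proper hypersurfaces you list, each not identically zero precisely because of the bulleted hypotheses on the $d_{u,v}$) repairs it; your computation of the $z_\ones$- and $w_\ones$-coefficients after substituting $\alpha c_\ones d_{\ones,\ones}=T_1|_\a\, T_2|_\a$ checks out, and it is exactly there that the two separate existence hypotheses on $T_2$ are consumed. The paper's reduction is shorter given Lemma~\ref{lem:all-contact-two-zeros} and iterates mechanically to Lemma~\ref{lem:all-contact}; your direct argument is self-contained and makes the role of each hypothesis explicit. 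One cosmetic caveat: your use of $c_v$ for the values assigned to $w_v$ collides with the coefficients $c_s$ of $T_1$ (so that ``$c_\ones$'' is overloaded in your displayed formula for $A$); rename those values to avoid confusion.
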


\begin{proof}
We define an assignment for the $\w$-variables
such that~$P$ gets projected to the $\x$- and $\z$-variables
and fulfills the assumptions from Lemma~\ref{lem:all-contact-two-zeros}.
Then we can conclude that~$P$ is indecomposable.

Let $u_0 \not= \ones$ and $v_0$ be such that $d_{u_0,v_0} \ne 0$.
For $r = 0,1,2$ define
\begin{equation}
b_v(r) = 
\begin{cases}
1, & \text{for } v = v_0,\\
r, & \text{for } v = \ones,\\
0, & \text{otherwise}.
\end{cases}
\end{equation}
Define $d_u(r) = d_{u,v_0} + r d_{u,\ones}$.
Then we have
\begin{equation}
T_2(\z,\b(r)) = \sum_{u} d_u(r)\, z_u. 
\end{equation}
We next show that there is an $r \in \{0,1,2\}$ such that
$T_2(\z,\b)$ fulfills the assumption in Lemma~\ref{lem:all-contact-two-zeros},
i.e.,
$d_{u_0}(r) \ne 0$
and
$d_{\ones}(r) \ne 0$:
\begin{itemize}
\item
For $r=0$, we have $d_{u_0}(0) = d_{u_0,v_0} \ne 0$ by assumption.
If $d_{\ones}(0) = d_{\ones,v_0} \ne 0$,
then $r=0$ works.

\item
So suppose now that $d_{\ones,v_0} = 0$.
Then we consider $r=1$.
We have $d_{\ones}(1) = d_{\ones,\ones} \ne 0$ by assumption.
If $d_{u_0}(1) = d_{u_0,v_0} + d_{u_0,\ones} \ne 0$,
then we may choose $r=1$.
 \item
So suppose now that  $d_{\ones,v_0} = 0$ and $d_{u_0}(1) = 0$.
Then we consider $r=2$.
We still have $d_{\ones}(2) = 2 d_{\ones,\ones} \ne 0$.
And now
also $d_{u_0}(2) = d_{u_0,v_0} + 2d_{u_0,\ones} = d_{u_0,\ones} = -d_{u_0,v_0} \ne 0$. 
\end{itemize}

For this~$r$, 
define $d_u = d_u(r)$ and $\b = \b(r)$ and
\begin{equation}\label{eq:P'xz}
P'(\x,\z) = P(\x,\z,\b)\;.
\end{equation}
Then $P'$ is an indecomposable polynomial by
Lemma~\ref{lem:all-contact-two-zeros}.

Assume that~$P$ is decomposable.
That is, we can write $P = gh$ for non-constant polynomials~$g,h$
on disjoint set of variables.
By~(\ref{eq:P'xz}),
we conclude that
\[
P'(\x,\z) = g|_{\w=\b}\, h|_{\w=\b}.
\]
Since
$P'$ is indecomposable,
it follows that one of the two factors is constant, say~$g|_{\w=\b}$.
Hence,
$g$ depends only on $\w$-variables, $g \in \F[\w]$.
Thus we can write
\begin{equation}\label{eq:P}
P(\x,\z,\w) = g(\w)\, h(\x,\z,\w)\;.
\end{equation}

Define similarly as above for~$\w$ an assignment~$\b'$ for~$\z$
such that
$T_2(\b',\w)$ fulfills the assumption in Lemma~\ref{lem:all-contact-two-zeros}.
Then
\begin{equation}\label{eq:P''}
P''(\x,\w) = P(\x,\b',\w) 
\end{equation}
is an indecomposable polynomial by
Lemma~\ref{lem:all-contact-two-zeros}.
But by~(\ref{eq:P}) we have
\begin{equation}
P''(\x,\w) = g(\w)\, h(\x,\b',\w),
\end{equation}
a contradiction.
\end{proof}


The second extension of 
Lemma~\ref{lem:all-contact-one-zero}
can also be seen as an extension of 
Lemma~\ref{lem:all-contact-one-zero}
where we introduce $\y$-variables for~$T_1$.

\begin{lemma}\label{lem:all-contact}
Let $k,\ell,m,n\in\nums$ be positive.  Define the multilinear polynomials
\begin{align}
T_1(\x,\y) &= \sum_{s,t} c_{s,t} x_sy_t & 
T_2 (\z,\w) &= \sum_{u,v} d_{u,v} z_u w_v\;,
\end{align}
where
\begin{itemize}
\item
$s$, $t$, $u$, and $v$ run over the strings in $\two^k$, $\two^\ell$, $\two^m$, and $\two^n$, respectively,
\item
$x_s$ is a variable for each $s\in\two^k$ and similarly for the $y_t$, $z_u$, and $w_v$, and
\item
$c_{s,t}, d_{u,v}\in\complexes$ are coefficients such that
\begin{itemize}
\item
$c_{\ones,\ones} \not=0$ ~and~ $d_{\ones,\ones} \not= 0$, 
\item
$\exists s \not= \ones ~\exists t~~ c_{s,t}\ne 0$ ~and~ 
$\exists s ~\exists t\not= \ones ~~c_{s,t}\ne 0$,
\item
$\exists u \not= \ones ~ \exists v~~ d_{u,v}\ne 0$ ~and~ 
$\exists u ~\exists v \ne\ones ~~  d_{u,v}\ne 0$.
\end{itemize}
\end{itemize}
For any $0 \not= \alpha\in\complexes$, define polynomial~$P(\x,\y,\z,\w)$ as
\begin{equation}\label{eqn:C-generic}
P = T_1T_2 - \alpha\, c_{\ones,\ones}d_{\ones,\ones}x_\ones y_\ones z_\ones w_\ones\;.
\end{equation}
Then $P$ is indecomposable and hence irreducible.
\end{lemma}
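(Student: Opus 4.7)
The plan is to mirror the structure of the proof of Lemma~\ref{lem:all-contact-one-zero} one level higher. That lemma reduced its $P$ to Lemma~\ref{lem:all-contact-two-zeros} by substituting the $\w$-variables and then derived a contradiction via a second substitution of the $\z$-variables. Here I would substitute the $\y$-variables to reduce to Lemma~\ref{lem:all-contact-one-zero}, then use a second substitution on $\w$ (invoking the relabeled version of Lemma~\ref{lem:all-contact-one-zero} in which the single-variable factor is $T_2$) to force both hypothesized factors into disjoint single variable groups, which contradicts the coefficient hypotheses.

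For the first step, I construct a $\y$-assignment $\b(r)$ of the same shape used in the proof of Lemma~\ref{lem:all-contact-one-zero}: set $b_{t_0} = 1$, $b_\ones = r$, and $b_t = 0$ for $t \notin \{t_0, \ones\}$, with $t_0$ chosen according to the nonzero pattern of the $c_{s,t}$'s. Writing $T_1(\x,\b) = \sum_s (\sum_t c_{s,t} b_t) x_s$, the conditions I need to engineer are (i) $b_\ones \ne 0$ (so the subtracted term of $P(\x,\b,\z,\w)$ retains a nonzero coefficient), (ii) $\sum_t c_{\ones,t} b_t \ne 0$ (the coefficient of $x_\ones$ in $T_1(\x,\b)$), and (iii) $\sum_t c_{s,t} b_t \ne 0$ for some $s \ne \ones$. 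A case split on $t_0$ (much like the one in Lemma~\ref{lem:all-contact-one-zero}'s proof for the $d_{u,v}$'s) together with the hypotheses of Lemma~\ref{lem:all-contact} shows that all three conditions hold simultaneously for all but finitely many choices of $r$. For any such $r$, Lemma~\ref{lem:all-contact-one-zero} applies and yields that $P(\x,\b,\z,\w)$ is indecomposable.

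Now assume for contradiction that $P = gh$ with $g,h$ nonconstant and variable-disjoint. Then $P(\x,\b,\z,\w) = g|_{\y=\b}\, h|_{\y=\b}$ and these two factors are in disjoint subsets of $\var(P)\setminus\{\y\text{-vars}\}$; by indecomposability one factor must be constant, say $g|_{\y=\b}$. Following the same implicit template as in the proof of Lemma~\ref{lem:all-contact-one-zero}, this forces $g \in \F[\y]$. Next I repeat the trick in a different variable group: pick $\b'$ for $\w$ by a symmetric case split on the $d_{u,v}$'s, so that $P(\x,\y,\z,\b') = T_1(\x,\y)\,\tilde T_2(\z) - \tilde\alpha c_{\ones,\ones}\tilde d_\ones x_\ones y_\ones z_\ones$ satisfies the relabeled version of Lemma~\ref{lem:all-contact-one-zero} (in which $T_1$ is the two-group polynomial and $T_2$ the one-group polynomial), giving $P(\x,\y,\z,\b')$ indecomposable. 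Since $g\in\F[\y]$ is unaffected by the $\w$-substitution and is nonconstant, the constant factor must be $h|_{\w=\b'}$, so $h\in\F[\w]$. But then $P = g(\y)h(\w)\in\F[\y,\w]$ depends on neither $\x$- nor $\z$-variables, contradicting the hypotheses that some $c_{s,t}\ne 0$ with $s\ne\ones$ (forcing $P$ to depend on $x_s$) and some $d_{u,v}\ne 0$ with $u\ne\ones$ (forcing $P$ to depend on $z_u$).

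The main obstacle is the implicit deduction ``$g|_{\y=\b}$ constant $\Rightarrow g\in\F[\y]$,'' which is precisely the analogous step left implicit in Lemma~\ref{lem:all-contact-one-zero}'s proof. It rests on the multihomogeneity of $P$—each monomial has multidegree $(1,1,1,1)$ in $(\x,\y,\z,\w)$, so $g$ inherits a well-defined multidegree in $\{0,1\}^4$—together with a sufficiently generic $\b$ within the parameterized family to rule out degenerate cancellations that would let a nonzero $\x$-, $\z$-, or $\w$-part of $g$ vanish at $\y=\b$. The rest (the casework choosing $\b$ and $\b'$) is routine, directly paralleling the $r$-selection in Lemma~\ref{lem:all-contact-one-zero}.
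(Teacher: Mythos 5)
Your proof is correct and follows exactly the route the paper intends: the paper gives no written proof of this lemma, saying only that it is ``completely analogous'' to the proof of Lemma~\ref{lem:all-contact-one-zero} via a reduction to that lemma, which is precisely the two-substitution argument you carry out. Your added care in requiring $b_{\ones}\ne 0$ (so the subtracted term survives with a nonzero coefficient after the $\y$-substitution) and your multihomogeneity justification of the step ``$g|_{\y=\b}$ constant $\Rightarrow g\in\F[\y]$'' fill in details that the paper's template proof leaves implicit.
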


The proof is completely analogous to the proof of 
Lemma~\ref{lem:all-contact-one-zero}.
There we have seen a reduction to Lemma~\ref{lem:all-contact-two-zeros}.
Here we can similarly reduce to the case of 
Lemma~\ref{lem:all-contact-one-zero}.


The next lemma generalizes Lemma~\ref{lem:all-contact-two-zeros}.  Lemma~\ref{lem:all-contact-two-zeros} is the special case of Lemma~\ref{lem:most-general-two-zeros} where $k_2=m_2=0$.

\begin{lemma}\label{lem:most-general-two-zeros}
Let $k = k_1 + k_2$ and $m = m_1 + m_2$,
where $k_1,m_1 \geq 1$.
Define the polynomials
\begin{align}
T_1(\x) &= \sum_s c_s x_s & T_2(\z) &= \sum_u d_u z_u\;,
\end{align}
where
\begin{itemize}
\item
$s$ and $u$ run over the strings in $\two^k$ and $\two^m$, respectively,
\item
$x_s$ is a variable for each $s \in \two^k$ and $z_u$ is a variable for each $u \in \two^m$, and
\item
$c_s, d_u\in\complexes$ 
such that for $s = s_1 \circ s_2$,
where $s_1 \in \two^{k_1}$ and $s_2 \in \two^{k_2}$,
and
$u = u_1 \circ u_2$,
where $u_1 \in \two^{m_1}$ and $u_2 \in \two^{m_2}$,
\begin{itemize}
\item
$\exists s_2 ~~ c_{\ones \circ s_2} \not= 0$
~and~
$\exists u_2 ~~ d_{\ones \circ u_2} \not= 0$
\item
$\exists s_1 \ne\ones ~ \exists s_2 ~~ c_s \ne 0$,
\item
$\exists u_1 \ne\ones ~ \exists u_2 ~~ d_u\ne 0$.
\end{itemize}
\end{itemize}

Fix a nonzero $\alpha\in\complexes$ and define
\[ P = T_1T_2 - \alpha\sum_{s:s_1=\ones}\,\sum_{u:u_1=\ones} c_s d_u x_s z_u\;. \]
Then $P$ is indecomposable and hence irreducible.
\end{lemma}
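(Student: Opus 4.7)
The plan is to mimic the proof of Lemma~\ref{lem:all-contact-two-zeros} by constructing a justifying assignment $\a$ for $P$ with $P(\a) = 0$ and then invoking Corollary~\ref{cor:zero-assignment}. The reduction from the general case to the base case is essentially notational: each canonical variable $x_\ones$ (resp.\ $z_\ones$) in Lemma~\ref{lem:all-contact-two-zeros} will be replaced by a single representative from the $s_1 = \ones$ (resp.\ $u_1 = \ones$) group in the present lemma, and similarly for the off-canonical side.

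First I would use the hypotheses to fix witnesses $s_2^*, s_2^{**} \in \two^{k_2}$ and $s_1^{**} \in \two^{k_1}\setminus\{\ones\}$ such that $c^* := c_{\ones \circ s_2^*} \ne 0$ and $c^{**} := c_{s_1^{**} \circ s_2^{**}} \ne 0$, and symmetrically $u_2^*, u_2^{**}, u_1^{**}$ with $d^* := d_{\ones \circ u_2^*} \ne 0$ and $d^{**} := d_{u_1^{**} \circ u_2^{**}} \ne 0$. For parameters $A, B \in \complexes$ to be chosen, define $\a = \a(A,B)$ by setting $x_{\ones \circ s_2^*} = 1$, $x_{s_1^{**} \circ s_2^{**}} = A$, and all other $x_s = 0$, and symmetrically (with $B$ in place of $A$) for the $z_u$. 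Then $T_1(\a) = c^* + c^{**} A$ and $T_2(\a) = d^* + d^{**} B$. Because $s_1^{**} \ne \ones$ and $u_1^{**} \ne \ones$, the only term that survives in the subtracted part of $P$ is the one with $s = \ones \circ s_2^*$ and $u = \ones \circ u_2^*$, yielding $\alpha c^* d^*$. Therefore $P(\a) = T_1(\a)T_2(\a) - \alpha c^* d^*$, in exact parallel with Eq.~(\ref{eqn:root}).

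Next I would compute, for each variable $v \in \var(P)$, the univariate polynomial obtained by projecting $P$ to $v$ under $\a$. A short case split shows that the coefficient of $x_s$ in the projection (for $c_s \ne 0$) is $c_s T_2(\a)$ when $s_1 \ne \ones$ and $c_s(T_2(\a) - \alpha d^*)$ when $s_1 = \ones$, with analogous formulas for the $z_u$. Hence $\a$ is justifying provided the four quantities $T_1(\a)$, $T_1(\a) - \alpha c^*$, $T_2(\a)$, $T_2(\a) - \alpha d^*$ are all nonzero. Imposing in addition $P(\a) = 0$, i.e., $T_1(\a)T_2(\a) = \alpha c^* d^*$, determines $B$ uniquely from any valid $A$, and the injectivity of this parametrization together with the bad values to avoid reduces the task to excluding finitely many $A$-values. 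The same five-value counting argument as in Lemma~\ref{lem:all-contact-two-zeros} then produces a valid pair $(A,B)$.

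The main obstacle is the preliminary but essential verification that $\var(P) = \{x_s : c_s \ne 0\} \cup \{z_u : d_u \ne 0\}$, so that no cancellation inside the contact term can kill the dependence of $P$ on a variable of $T_1T_2$. A direct computation shows that the coefficient of $x_s z_u$ in $P$ is $(1-\alpha)c_s d_u$ when $(s_1,u_1) = (\ones,\ones)$ and $c_s d_u$ otherwise; the off-canonical hypotheses then guarantee that every $x_s$ (resp.\ $z_u$) with nonzero coefficient pairs nontrivially with at least one $z_u$ (resp.\ $x_s$), regardless of whether $\alpha = 1$. With this in hand, Corollary~\ref{cor:zero-assignment} yields indecomposability, and multilinearity of $P$ upgrades this to irreducibility.
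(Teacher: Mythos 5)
Your proof is correct, but it takes a different route from the paper's. The paper proves this lemma by \emph{reduction} to Lemma~\ref{lem:all-contact-two-zeros}: it fixes one representative variable per $s_1$-block (namely $x_{\ones\circ\dot s_2}$ and, for $s_1\ne\ones$, $x_{s_1\circ\ddot s_2}$), sets all other $\x$- and $\z$-variables to $0$, observes that the resulting projection $P'$ satisfies the hypotheses of the base lemma and is therefore indecomposable, and then rules out a decomposition $P=gh$ by noting that a factor $g$ whose restriction is constant must depend only on the zeroed variables; since $P$ is homogeneous, $g$ is homogeneous too, so that restriction is in fact $0$, contradicting $P'\ne 0$. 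You instead rerun the base lemma's argument \emph{directly} on the general $P$: you build a justifying assignment supported on two $x$-variables and two $z$-variables, verify $P(\a)=T_1(\a)T_2(\a)-\alpha c^*d^*$ (the whole double sum collapses to the single term $\alpha c^*d^*$ because only one $s_1=\ones$ representative and one $u_1=\ones$ representative are nonzero), check that the univariate projections are nonconstant under the same four conditions $T_1\ne 0$, $T_1\ne\alpha c^*$, $T_2\ne 0$, $T_2\ne\alpha d^*$, and close with the identical five-value counting argument and Corollary~\ref{cor:zero-assignment}. Your preliminary computation of $\var(P)$ (coefficient of $x_sz_u$ equal to $(1-\alpha)c_sd_u$ in the doubly-canonical case and $c_sd_u$ otherwise, with the off-canonical hypotheses supplying a non-cancelling partner even when $\alpha=1$) is exactly what is needed to certify that the assignment is justifying for every variable $P$ actually depends on. What each approach buys: the paper's reduction avoids redoing the assignment computation but must invoke homogeneity of $P$ to dispose of a factor supported on the zeroed variables; your direct construction avoids the homogeneity step entirely at the cost of re-verifying that the justifying-assignment calculus survives the passage from single distinguished indices $\ones$ to blocks $s_1=\ones$, which you do correctly.
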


\begin{proof}
We define a partial assignment to the $\x$- and $\z$-variables
so that~$P$ gets projected to the form in 
Lemma~\ref{lem:all-contact-two-zeros}.

Recall that each index~$s$ of an $\x$-variable
is split as $s = s_1 \circ s_2$,
where $s \in \{0,1\}^k$,
$s_1 \in \{0,1\}^{k_1}$, and  $s_2 \in \{0,1\}^{k_2}$,
where $k = k_1 + k_2$.
By our assumptions,
there are 
$\dot{s}_2, \ddot{s}_2 \in \{0,1\}^{k_2}$
such that
$c_{\ones \circ \dot{s}_2} \not= 0$ and 
$c_{s_1 \circ \ddot{s}_2} \not= 0$, 
for some $s_1 \in \{0,1\}^{k_1}$
such that $s_1\ne\ones$.
Now the projection is defined as follows:
We maintain one variable for each $s_1 \in \{0,1\}^{k_1}$,
namely $x_{\ones \circ \dot{s}_2}$,
and 
$x_{s_1 \circ \ddot{s}_2}$, 
for $s_1 \not= \ones$.
All other $\x$-variables we set to~$0$.
Let~$\b$ be this assignment.
Similarly,
we project the $\z$-variables via an assignment~$\c$
in an analogous way.

Observe that $P'= P|_{\x=\b,\z = \c}$ is of the form
as in Lemma~\ref{lem:all-contact-two-zeros} and
fulfills the assumptions made there.
Hence, 
we have that $P'$ is indecomposable.

Now assume that $P$ is decomposable, 
That is, we can write $P = gh$, for non-constant polynomials~$g, h$
on disjoint sets of variables. 
Hence, for $g'= g|_{\x=\b,\z = \c}$ and $h'= h|_{\x=\b,\z = \c}$,
we have that
\begin{equation}\label{eq:P'general}
P'= g' h'.
\end{equation}
Since $P'$ is indecomposable,
it follows that one of the two factors, say~$g'$, is constant.
Hence,
$g$ depends only on the variables that are set to~$0$ by 
assignments~$\b$ and~$\c$.
However,
since~$P$ is a homogeneous polynomial,
the factors~$g$ and~$h$ are homogeneous as well,
and therefore $g'= 0$.
But this contradicts~(\ref{eq:P'general}).
\end{proof}


The next two lemmas generalize
Lemma~\ref{lem:all-contact-one-zero} and~\ref{lem:all-contact}
in the same way as Lemma~\ref{lem:most-general-two-zeros}
generalizes Lemma~\ref{lem:all-contact-two-zeros}. 
Their proofs follow the proof of 
Lemma~\ref{lem:most-general-two-zeros} almost literally,
so we omit them.

\begin{lemma}\label{lem:most-general-one-zero}
Let $k = k_1 + k_2$, $m = m_1 + m_2$, and $n = n_1 + n_2$,
where $k_1,m_1, n_1 \geq 1$.
Define the polynomials
\begin{align}
T_1(\x) &= \sum_s c_s x_s & T_2(\z,\w) &= \sum_{u,v} d_{u,v} z_uw_v\;,
\end{align}
where
\begin{itemize}
\item
$s$, $u$, and $v$ run over the strings in $\two^k$, $\two^m$, and $\two^n$, respectively,
\item
$x_s$ is a variable for each $s\in\two^k$ and
similarly for the $z_u$ and $w_v$, and

\item
$c_s, d_{u,v}\in\complexes$
such that for $s = s_1 \circ s_2$,
where $s_i \in \two^{k_i}$,
and
$u = u_1 \circ u_2$,
where $u_i \in \two^{m_i}$,
and
$v = v_1 \circ v_2$,
where $v_i \in \two^{n_i}$, for $i =1,2$, we have
\begin{itemize}
\item
$\exists s_2 ~~ c_{\ones \circ s_2} \not= 0$
~and~ 
$\exists u_2,v_2 ~~ d_{\ones \circ u_2,\ones \circ v_2} \not= 0$
\item
$\exists s_1 \ne\ones ~ \exists s_2 ~~ c_s \ne 0$,
\item
$\exists u_1 \ne\ones ~ \exists u_2 ~ \exists v ~~ d_{u,v}\ne 0$
~and~ $\exists u ~ \exists v_1 \ne\ones ~ \exists v_2 ~~ d_{u,v}\ne 0$.
\end{itemize}

\end{itemize}
Fix a nonzero $\alpha\in\complexes$ and define
\[ P = T_1T_2 - \alpha\sum_{s:s_1=\ones}\,\sum_{u:u_1=\ones}\,\sum_{v:v_1=\ones} c_sd_{u,v}x_sz_uw_v\;. \]
Then $P$ is indecomposable and hence irreducible.
\end{lemma}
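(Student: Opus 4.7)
The plan is to follow the proof of Lemma~\ref{lem:most-general-two-zeros} almost verbatim, reducing to the base case Lemma~\ref{lem:all-contact-one-zero} by a $0/1$ partial assignment.

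Specifically, for each $s_1\in\two^{k_1}$ I choose a distinguished $s_2^{(s_1)}\in\two^{k_2}$ and retain only the variable $x_{s_1\circ s_2^{(s_1)}}$ among the $\x$'s, zeroing out every other $x_{s_1\circ s_2}$. Analogously for each $u_1\in\two^{m_1}$ and each $v_1\in\two^{n_1}$, producing an overall partial assignment~$\b$ to the $\x,\z,\w$ variables. Following the template in Lemma~\ref{lem:most-general-two-zeros}, I pick a $\dot s_2$ witnessing the first $c$-hypothesis and a $\ddot s_2$ witnessing the second, and set $s_2^{(\ones)} = \dot s_2$ together with $s_2^{(s_1)} = \ddot s_2$ for all $s_1\ne\ones$; similarly introduce $\dot u_2,\ddot u_2,\dot v_2,\ddot v_2$. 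The projected polynomial $P':=P|_{\b}$ is then of the shape required by Lemma~\ref{lem:all-contact-one-zero}: its $x$-, $z$-, and $w$-variables are indexed by $\two^{k_1},\two^{m_1},\two^{n_1}$, respectively, and the subtracted term collapses to the single monomial $\alpha\, c_{\ones\circ\dot s_2}\, d_{\ones\circ\dot u_2,\,\ones\circ\dot v_2}\, x'_{\ones} z'_{\ones} w'_{\ones}$.

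The main step is checking that the hypotheses of Lemma~\ref{lem:all-contact-one-zero} hold for $P'$. The $\ones$-block conditions $c_{\ones\circ\dot s_2}\ne 0$ and $d_{\ones\circ\dot u_2,\,\ones\circ\dot v_2}\ne 0$ are supplied directly by the first hypotheses on $c$ and $d$, and the ``$s\ne\ones$'' condition on $c$ is ensured by the choice of $\ddot s_2$ via the second $c$-hypothesis. The subtle part---and the main obstacle---is the two non-$\ones$ conditions on $d$: the projection keeps $d_{u,v}$ only when both $z_u$ and $w_v$ survive, so we need $\ddot u_2,\ddot v_2$ chosen so that some $d_{u_1\circ\ddot u_2,\,v_1\circ v_2^{(v_1)}}\ne 0$ for a $u_1\ne\ones$, and symmetrically for~$v$. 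The complication is that the witnesses supplied by the second and third $d$-hypotheses may have their $v$ (resp.\ $u$) component sitting in the $v_1=\ones$ (resp.\ $u_1=\ones$) slab, in which case compatibility with $\dot v_2$ (resp.\ $\dot u_2$) from the first hypothesis is forced. I would resolve this through a finite case analysis on which slab each witness falls in, in the spirit of the $r\in\{0,1,2\}$ enumeration in the proof of Lemma~\ref{lem:all-contact-one-zero}---falling back, if a pure $0/1$ projection genuinely fails, to a direct complex-valued justifying assignment for~$P$ with $P(\a)=0$ via Corollary~\ref{cor:zero-assignment}.

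Once $P'$ is shown to fit Lemma~\ref{lem:all-contact-one-zero} and is therefore indecomposable, the conclusion goes through exactly as in Lemma~\ref{lem:most-general-two-zeros}. Assume $P=gh$ with $g,h$ nonconstant on disjoint variable sets. Since every monomial of $P$ is linear in exactly one $x$-, one $z$-, and one $w$-variable, $P$ is tri-homogeneous, and so are $g$ and $h$, both of positive total degree. The projection gives $P'=g'h'$ with $g',h'$ still variable-disjoint, so indecomposability of $P'$ forces one of them, say $g'$, to be constant. Then every variable of $g$ must be among those set to~$0$ by $\b$, and positive-degree homogeneity forces $g'=0$, contradicting $P'\ne 0$.
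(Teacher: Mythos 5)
Your outline is the same as the paper's: the paper omits this proof entirely, stating only that it ``follows the proof of Lemma~\ref{lem:most-general-two-zeros} almost literally,'' i.e., project by a $0/1$ partial assignment down to the form of Lemma~\ref{lem:all-contact-one-zero}, then use variable-disjointness plus homogeneity of the factors to conclude. Your final paragraph (tri-homogeneity of $P$, hence of $g$ and $h$; a constant $g'$ must be $0$ because every monomial of $g$ either meets a killed variable or survives as a nonconstant monomial) is correct and matches the paper's closing argument for Lemma~\ref{lem:most-general-two-zeros}.

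However, the difficulty you flag in the middle step is not merely a ``subtle part'' to be dispatched by choosing suffixes more cleverly --- it is a genuine obstruction to the literal transfer, and your proposal does not resolve it. Concretely, take $m_1=m_2=n_1=n_2=1$ and let the only nonzero $d$-coefficients be $d_{11,11}$, $d_{01,10}$, and $d_{10,01}$; these satisfy all three $d$-hypotheses of the lemma. Any projection that keeps exactly one $z$-variable and one $w$-variable in the $u_1=\ones$ and $v_1=\ones$ slabs (which is forced if the subtracted sum is to collapse to a single monomial $x'_\ones z'_\ones w'_\ones$) must keep $z_{11}$ and $w_{11}$ to get $d'_{\ones,\ones}\ne 0$; but then the unique witness $d_{01,10}$ of the ``$u_1\ne\ones$'' condition is destroyed, since $w_{10}$ lies in the $v_1=\ones$ slab and is killed. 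So no $0/1$ partial assignment reduces this instance to Lemma~\ref{lem:all-contact-one-zero}, and the ``finite case analysis on which slab each witness falls in'' cannot succeed as stated. Your fallback --- a direct justifying assignment $\a$ with $P(\a)=0$ via Corollary~\ref{cor:zero-assignment}, or an assignment with values outside $\{0,1\}$ in the spirit of the $r\in\{0,1,2\}$ trick --- is the right kind of fix, but it is precisely the content that a proof of this lemma must supply, and it is absent from the proposal. As written, the proof is incomplete at its central step (a step the paper itself glosses over).
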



Finally,
we generalize Lemma~\ref{lem:all-contact} similarly.

\begin{lemma}\label{lem:most-general}
Let $k = k_1 + k_2$,  $\ell = \ell_1 + \ell_2$, $m = m_1 + m_2$,
and $n = n_1 + n_2$
where $k_1,m_1, l_1,n_1 \geq 1$.
Define the polynomials
\begin{align}
T_1(\x,\y) &= \sum_{s,t} c_{s,t} x_sy_t & T_2(\z,\w) &= \sum_{u,v} d_{u,v} z_uw_v\;,
\end{align}
where
\begin{itemize}
\item
$s$, $t$, $u$, and $v$ run over the strings in $\two^k$, $\two^\ell$, $\two^m$, and $\two^n$, respectively,
\item
$x_s$ is a variable for each $s\in\two^k$ and similarly for the $y_t$, $z_u$, and $w_v$, and
\item
$c_{s,t}, d_{u,v}\in\complexes$ such that
such that for $s = s_1 \circ s_2$,
where $s_i \in \two^{k_i}$,
and
$t = t_1 \circ t_2$,
where $t_i \in \two^{\ell_i}$,
and
$u = u_1 \circ u_2$,
where $u_i \in \two^{m_i}$,
and
$v = v_1 \circ v_2$,
where $v_i \in \two^{n_i}$, for $i =1,2$, we have
\begin{itemize}
\item
$\exists s_2,t_2 ~~ c_{\ones \circ s_2,\ones \circ t_2} \not= 0$
~and~
$\exists u_2,v_2 ~~ d_{\ones \circ u_2,\ones \circ v_2} \not= 0$
\item
$\exists s_1 \ne\ones ~ \exists s_2 ~ \exists t ~~ c_{s,t} \ne 0$ ~and~
$\exists s ~ \exists t_1 \ne\ones ~ \exists t_2 ~~ c_{s,t} \ne 0$, 
\item
$\exists u_1 \ne\ones ~ \exists u_2 ~ \exists v ~~ d_{u,v}\ne 0$ ~and~ 
$\exists u ~ \exists v_1 \ne\ones ~ \exists v_2 ~~ d_{u,v}\ne 0$.
\end{itemize}

\end{itemize}
Fix a nonzero $\alpha\in\complexes$ and define
\begin{equation}\label{eqn:C-no-zeros}
P = T_1T_2 - \alpha\sum_{s:s_1=\ones}\,\sum_{t:t_1=\ones}\,\sum_{u:u_1=\ones}\,\sum_{v:v_1=\ones} c_{s,t}d_{u,v}x_sy_tz_uw_v\;.
\end{equation}
Then $P$ is indecomposable and hence irreducible.
\end{lemma}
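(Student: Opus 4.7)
The plan is to mimic the proof of Lemma~\ref{lem:most-general-two-zeros}, reducing to Lemma~\ref{lem:all-contact} via a partial assignment and then transferring indecomposability back to $P$ through the same homogeneity argument. For each of the four variable families I fix constants $\dot{s}_2, \ddot{s}_2 \in \two^{k_2}$, $\dot{t}_2, \ddot{t}_2 \in \two^{\ell_2}$, $\dot{u}_2, \ddot{u}_2 \in \two^{m_2}$, and $\dot{v}_2, \ddot{v}_2 \in \two^{n_2}$, and define the partial assignment that in the $\x$ family keeps only $x_{\ones \circ \dot{s}_2}$ and $x_{s_1 \circ \ddot{s}_2}$ (for $s_1 \neq \ones$), zeroing out the rest; analogously for $\y, \z, \w$. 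Re-indexing the surviving variables by their first parts, the projected polynomial $P'$ has exactly the form of $P$ in Lemma~\ref{lem:all-contact}, with $k, \ell, m, n$ replaced by $k_1, \ell_1, m_1, n_1$ and coefficients $c'_{s_1, t_1} := c_{s_1 \circ \rho_x(s_1),\, t_1 \circ \rho_y(t_1)}$, where $\rho_x(\ones) = \dot{s}_2$ and $\rho_x(s_1) = \ddot{s}_2$ for $s_1 \neq \ones$, with $\rho_y$ defined analogously; the coefficients $d'$ come from the corresponding choices on the $\z, \w$ side.

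To satisfy the hypotheses of Lemma~\ref{lem:all-contact}, I use the first assumption of our lemma to pick $(\dot{s}_2, \dot{t}_2)$ with $c_{\ones \circ \dot{s}_2,\, \ones \circ \dot{t}_2} \neq 0$, and analogously $(\dot{u}_2, \dot{v}_2)$, guaranteeing $c'_{\ones,\ones} \neq 0$ and $d'_{\ones,\ones} \neq 0$. For the off-corner conditions $\exists s_1 \neq \ones,\, t_1: c'_{s_1,t_1} \neq 0$ and $\exists s_1, t_1 \neq \ones: c'_{s_1, t_1} \neq 0$, I examine the witnesses supplied by the second and third assumptions. If some witness $(\hat{s}_1, \hat{s}_2, \hat{t}_1, \hat{t}_2)$ of the second assumption already has $\hat{t}_1 \neq \ones$, then setting $\ddot{s}_2 := \hat{s}_2$ and $\ddot{t}_2 := \hat{t}_2$ handles both off-corner conditions at once. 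Otherwise, the witnesses from the second and third assumptions have a ``cross'' structure with one having $t_1 = \ones$ and the other $s_1 = \ones$, and I combine them with possibly re-chosen $\dot{s}_2, \dot{t}_2$ (still satisfying the first assumption) so that both off-corner conditions hold. An identical argument handles the $d'$ conditions.

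Once Lemma~\ref{lem:all-contact} yields $P'$ indecomposable, suppose for contradiction that $P = gh$ for non-constant variable-disjoint polynomials. Projecting, $P' = g' h'$, and the indecomposability of $P'$ forces one factor (say $g'$) to be constant. Then $g$ depends on no variable kept by the projection, so all of its variables were set to $0$. But $P$ is homogeneous of degree~$4$, so $g$ is homogeneous of positive degree, giving $g' = g|_{\mathbf{0}} = 0$; this contradicts $g' h' = P' \neq 0$. The main obstacle is the case analysis in the previous paragraph: verifying that compatible values of $\dot{s}_2, \dot{t}_2, \ddot{s}_2, \ddot{t}_2$ (and their $d$-analogues) always exist simultaneously when no single witness from the second or third assumption covers both off-corner conditions, and if the simple projection fails in some ``doubly cross'' sub-case, enriching the projection (e.g., keeping additional representative variables) so that Lemma~\ref{lem:all-contact} still applies after a relabeling.
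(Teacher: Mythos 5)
Your overall strategy is the one the paper intends: the paper omits this proof entirely, asserting only that it follows the proof of Lemma~\ref{lem:most-general-two-zeros} ``almost literally,'' and your final homogeneity argument for transferring indecomposability from $P'$ back to $P$ is correct. The genuine gap is precisely the one you flag and then leave open. In the four-family setting the hypotheses of Lemma~\ref{lem:all-contact} are \emph{joint} conditions on pairs of indices, and a projection keeping a single second-part representative per first-part value cannot always satisfy them all simultaneously. Concretely, suppose the only nonzero $c$'s are $c_{\ones\circ a,\,\ones\circ b}$, then $c_{\hat s_1\circ\hat s_2,\,\ones\circ\hat t_2}$ with $\hat s_1\ne\ones$ and $\hat t_2\ne b$, and $c_{\ones\circ\tilde s_2,\,\tilde t_1\circ\tilde t_2}$ with $\tilde t_1\ne\ones$ and $\tilde s_2\ne a$. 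All the coefficient hypotheses of the present lemma hold, yet $c'_{\ones,\ones}\ne 0$ forces $\dot t_2=b$, while survival of the unique witness with $s_1\ne\ones$ forces $\dot t_2=\hat t_2\ne b$; no choice of representatives works. Your fallback of ``enriching the projection'' does not repair this: keeping two $\y$-variables with first part $\ones$ turns the correction term into a sum of several monomials, so the projected polynomial no longer has the single-monomial form $\alpha\, c_{\ones,\ones}d_{\ones,\ones}x_\ones y_\ones z_\ones w_\ones$ of Eq.~(\ref{eqn:C-generic}), and Lemma~\ref{lem:all-contact} cannot be applied after any relabeling.

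A clean fix is to replace the $0/1$ indicator projection by a generic linear ``grouping'' substitution $x_{s_1\circ s_2}\mapsto\lambda_{s_2}x'_{s_1}$, and similarly with generic tuples $\mu,\nu,\xi$ for the $\y$-, $\z$-, and $\w$-variables. The correction term then collapses to $\alpha\bigl(\sum_{s_2,t_2}\lambda_{s_2}\mu_{t_2}c_{\ones\circ s_2,\ones\circ t_2}\bigr)\bigl(\sum_{u_2,v_2}\nu_{u_2}\xi_{v_2}d_{\ones\circ u_2,\ones\circ v_2}\bigr)x'_\ones y'_\ones z'_\ones w'_\ones=\alpha\, c'_{\ones,\ones}d'_{\ones,\ones}x'_\ones y'_\ones z'_\ones w'_\ones$, so $P'$ has exactly the form required by Lemma~\ref{lem:all-contact}; each of that lemma's coefficient hypotheses is the non-vanishing of a multilinear form in $(\lambda,\mu,\nu,\xi)$ which is not identically zero by the corresponding hypothesis here, so a generic choice over $\complexes$ satisfies all of them at once. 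The transfer-back step survives: a nonconstant homogeneous factor $g$ maps under the linear substitution to a homogeneous $g'$ of the same positive degree, so ``$g'$ constant'' forces $g'=0$ and hence $P'=0$, a contradiction.
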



\section{The Entanglement Lemma}
\label{sec:entanglement-lemma}


Sets $A,B \subseteq [r]$ are a \emph{bipartition of}~$[r]$,
if $A,B \not= \emptyset$,
 $A \cup B = [r]$, and $A \cap B = \emptyset$.

\begin{defn}[Separable and $S$-separable states]\label{def:separates-at}
Suppose we have an $r$-qubit register with qubits labeled $1,\ldots,r$.  Let $\ket{\psi}$ be some state of the $r$~qubits, and let $A,B\subseteq [r]$ be a bipartition of~$[r]$.
State~$\ket{\psi}$ \emph{separates at $\{A,B\}$}, if $\ket{\psi} = \ket{\psi}_A\otimes\ket{\psi}_B$, for some $\ket{\psi}_A\in\cH_A$ and $\ket{\psi}_B\in\cH_B$.

Let $S\subseteq [r]$ be a subset of the qubits with $|S| \ge 2$.  
State~$\ket{\psi}$ is \emph{$S$-separable}, if $\ket{\psi}$ separates at $\{A,B\}$, for some partition $A, B$ such that $A\cap S \ne \emptyset$ and $B\cap S\ne \emptyset$.  
If $\ket{\psi}$ is not $S$-separable, then  $\ket{\psi}$ is \emph{$S$-entangled}.
\end{defn}

Observe that separation at $\{A,B\}$ is not affected by gates that act on qubits entirely within one of the sets $A$ or $B$: If $\ket{\psi}$ separates at $\{A,B\}$ and $U$ is a gate touching only qubits in $A$, say, then $U\ket{\psi}$ separates at $\{A,B\}$.  If follows that such gates do not affect $S$-separability.

\begin{defn}[Simplification of states]\label{def:simplify}
Suppose we have an $r$-qubit register with qubits labeled $1,\ldots,r$, a set $S\subseteq [r]$, and an $r$-qubit state $\ket{\psi}$.  
\begin{enumerate}\alphaitems
\item
Gate $\ctrlZ{S}$ \emph{disappears on} 
(or \emph{is turned off by}) $\ket{\psi}$,
if $\ctrlZ{S}\ket{\psi} = \ket{\psi}$.
\item
Gate \emph{$\ctrlZ{S}$ simplifies to $\ctrlZ{T}$ on $\ket{\psi}$},
if $\ctrlZ{S}\ket{\psi} = \ctrlZ{T}\ket{\psi} \ne \ket{\psi}$, for some 
$T\subsetneq S$.  
\end{enumerate}
We say that $\ctrlZ{S}$ \emph{simplifies} on $\ket{\psi}$ if either~(a) or~(b) hold.
\obs{
We say that $\ctrlZ{S}$ \emph{simplifies} on $\ket{\psi}$ if either (a) $\ctrlZ{S}\ket{\psi} = \ket{\psi}$ or (b) $\ctrlZ{S}\ket{\psi} = \ctrlZ{T}\ket{\psi} \ne \ket{\psi}$ for some proper subset $T\subset S$.  
In case (a), we say that $\ctrlZ{S}$ \emph{disappears on} (or \emph{is turned off by}) $\ket{\psi}$; in case (b), we say that \emph{$\ctrlZ{S}$ simplifies to $\ctrlZ{T}$ on $\ket{\psi}$}.
}
\end{defn}

Observe that the two cases (a) and (b) in Definition~\ref{def:simplify} above are mutually exclusive, given $S$ and $\ket{\psi}$.  Also observe that $\ctrlZ{S}$ disappears on $\ket{\psi}$ if and only if $\braket{x}{\psi} = 0$ for every computational basis state $\ket{x}$ such that the string $x$ has $1$'s in all positions in $S$.  $\ctrlZ{S}$ simplifies to $\ctrlZ{T}$ on $\ket{\psi}$ if and only if $\braket{x}{\psi} = 0$ for every computational basis state $\ket{x}$ where $x$ has a $0$ in some position in $S-T$; equivalently, $\ket{\psi}$ factors into a tensor product of a $\ket{1}$ state of each qubit in $S-T$, along with some arbitrary state of the rest of the qubits.

We will use Lemmas~\ref{lem:most-general}, \ref{lem:most-general-one-zero}, and \ref{lem:most-general-two-zeros} to prove the next lemma, which is the main lemma of this section.

\begin{lemma}[Entanglement Lemma]\label{lem:S-entangled}
Suppose we have an $r$-qubit register as in Definition~\ref{def:simplify}, and let $S \subseteq[r]$.  Let $\ket{\psi}$ be any state of the register, and let $\ket{\p} := \ctrlZ{S}\ket{\psi}$.  Then at least one of the following must hold: 
\begin{enumerate}
\item
$\ket{\psi}$ is $S$-entangled,
\item 
$\ket{\p}$ is $S$-entangled,
\item 
$\ctrlZ{S}$ simplifies on $\ket{\psi}$.
\end{enumerate}
\end{lemma}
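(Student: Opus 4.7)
The proof strategy is to prove the contrapositive: assume that $\ket{\psi}$ is $S$-separable at some bipartition $\{A,B\}$ of $[r]$ \emph{and} $\ket{\p}$ is $S$-separable at some bipartition $\{A',B'\}$; we will conclude that $\ctrlZ{S}$ simplifies on $\ket{\psi}$. The first step is to refine both separations into the common four-cell partition $A\cap A'$, $A\cap B'$, $B\cap A'$, $B\cap B'$, and use the polynomial representation of Section~\ref{sec:states-vs-polynomials} with four disjoint variable families $\{x_s\},\{y_t\},\{z_u\},\{w_v\}$, one per cell. Each index will be split along the $S$-boundary in the style of Notation~\ref{not:concat-convention}, writing e.g.\ $s = s_1\circ s_2$ where $s_1$ indexes the computational basis of $\cH_{A\cap A'\cap S}$ and $s_2$ that of $\cH_{A\cap A'\cap \overline{S}}$.

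By~(\ref{eqn:separability-implies-decomposability}) applied twice, the $S$-separation of $\ket{\psi}$ at $\{A,B\}$ yields $\poly(\ket{\psi}) = T_1(\x,\y)\cdot T_2(\z,\w)$, and the $S$-separation of $\ket{\p}$ at $\{A',B'\}$ yields $\poly(\ket{\p}) = T'_1(\x,\z)\cdot T'_2(\y,\w)$.  Next I would compute $\poly(\ket{\p})$ directly from $\poly(\ket{\psi})$ using $\ket{\p} = \ctrlZ{S}\ket{\psi}$: since $\ctrlZ{S}$ flips the sign of every computational-basis amplitude whose $S$-restriction is $\ones$, the difference $\poly(\ket{\psi})-\poly(\ket{\p})$ equals twice the sum of the monomials $x_sy_tz_uw_v$ with $s_1 = t_1 = u_1 = v_1 = \ones$.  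Combined with the factorization of $\poly(\ket{\psi})$, this puts $\poly(\ket{\p})$ into precisely the form of the polynomial $P$ in Lemma~\ref{lem:most-general}, with $\alpha = 2$.

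The punch line is that Lemma~\ref{lem:most-general} then forces $\poly(\ket{\p})$ to be indecomposable, contradicting the decomposition $\poly(\ket{\p}) = T'_1\cdot T'_2$ into polynomials over disjoint variable sets.  Hence some hypothesis of the lemma must fail.  If one of the ``$\exists s_2,t_2\colon c_{\ones\circ s_2,\ones\circ t_2}\ne 0$''-type hypotheses fails, then $\ket{\psi}$ has no support on any basis state whose $S$-restriction is all ones, so $\ctrlZ{S}\ket{\psi} = \ket{\psi}$---this is case~(a) of Definition~\ref{def:simplify}.  Otherwise one of the ``$\exists s_1\ne\ones$''-type hypotheses fails, which forces every basis state in the support of $\ket{\psi}$ to be $\ket{1}$ on every qubit of the corresponding $S$-sub-cell, so $\ctrlZ{S}$ acts on $\ket{\psi}$ the same as $\ctrlZ{T}$ for a proper subset $T\subsetneq S$---this is case~(b).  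In either situation $\ctrlZ{S}$ simplifies on $\ket{\psi}$, completing the contrapositive.

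The main obstacle I foresee is the case analysis when some of the four cells $A\cap A',\ldots,B\cap B'$ (or their $S$-intersections) happen to be empty.  Because $A,B,A',B'$ are non-empty and each pair covers $[r]$, the $2{\times}2$ table of cells can have either all four cells non-empty, exactly one empty, or exactly two diagonally opposite cells empty; the same trichotomy applies to the intersections with $S$.  These three configurations correspond to applying Lemma~\ref{lem:most-general}, Lemma~\ref{lem:most-general-one-zero}, and Lemma~\ref{lem:most-general-two-zeros}, respectively.  In each case one must verify that the reduced coefficient/support hypotheses still translate cleanly into either ``$\ctrlZ{S}$ is turned off by $\ket{\psi}$'' or ``$\ctrlZ{S}$ simplifies to a strictly smaller $\ctrlZ{T}$,'' and one must confirm that in the reduced-variable settings the polynomial on the right of the derived identity genuinely matches the shape of $P$ in the corresponding lemma.
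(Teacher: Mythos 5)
Your proposal is correct and follows essentially the same route as the paper: represent the state by a four-variable-family polynomial over the common refinement of the two separating partitions, observe that $\ctrlZ{S}$ turns $\poly(\ket{\psi})=T_1T_2$ into the polynomial $P$ of Lemma~\ref{lem:most-general} (or its degenerate variants, Lemmas~\ref{lem:most-general-one-zero} and~\ref{lem:most-general-two-zeros}, when some $S$-cells are empty) with $\alpha=2$, and match the lemma's coefficient hypotheses to ``$\ctrlZ{S}$ does not simplify.'' The only difference is presentational---you argue by contrapositive (a failed hypothesis yields a simplification) while the paper argues directly (no simplification implies the hypotheses hold, hence $\ket{\p}$ is $S$-entangled)---and your emptiness trichotomy matches the paper's Cases~(1), (2)/(3), and~(4).
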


\begin{proof}
Let $\{A,B\}$ and $\{C,D\}$ be two partitions of~$[r]$
such that all four sets have nonempty intersection with~$S$.
Let $\ket{\psi}_A$ and $\ket{\psi}_B$ be arbitrary states of the qubits in $A$ and $B$, respectively, and let $\ket{\psi} := \ket{\psi}_A \otimes \ket{\psi}_B$.  Define $\ket{\p} := \ctrlZ{S}\ket{\psi}$.  Suppose that $\ctrlZ{S}$ does not simplify on $\ket{\psi}$.  We will show that $\ket{\p}$ cannot be written as a tensor product of two states---one on the qubits in $C$ and the other on the qubits in $D$.  As $C$ and $D$ were chosen arbitrarily, this shows that $\ket{\p}$ is $S$-entangled, hence the lemma follows.

The two partitions $\{A,B\}$ and $\{C,D\}$ lead to a $4$-partition of $[r]$ into sets~$A\cap C$, $A\cap D$, $B\cap C$, and $B\cap D$, some of which may be empty.  By rearranging qubits, we may assume WLOG that for some $k,\ell,m\in\nums$ we have
\begin{align}
A\cap C &= \{1,\ldots,k\}\;, & A\cap D &= \{k+1,\ldots,k+\ell\}\;, \\
B\cap C &= \{k+\ell+1,\ldots,k+\ell+m\}\;, & B\cap D &= \{k+\ell+m+1,\ldots,r\}\;.
\end{align}
Setting $n := r - k - \ell - m$, we then have
\begin{align}
|A\cap C| &= k\;, & |A\cap D| &= \ell\;, & |B\cap C| &= m\;, & |B\cap D| &= n\;.
\end{align}
By rearranging the qubits within these four sets if necessary, we may also assume that their intersections with $S$ occur \emph{first} within each set.  For example, $A\cap C\cap S = \{1,\ldots,k_1\}$ for some $0\le k_1\le k$, and we set $k_2 := k - k_1$; similarly for the other three sets.  The full layout is shown in Figure~\ref{fig:ABCD-layout}.
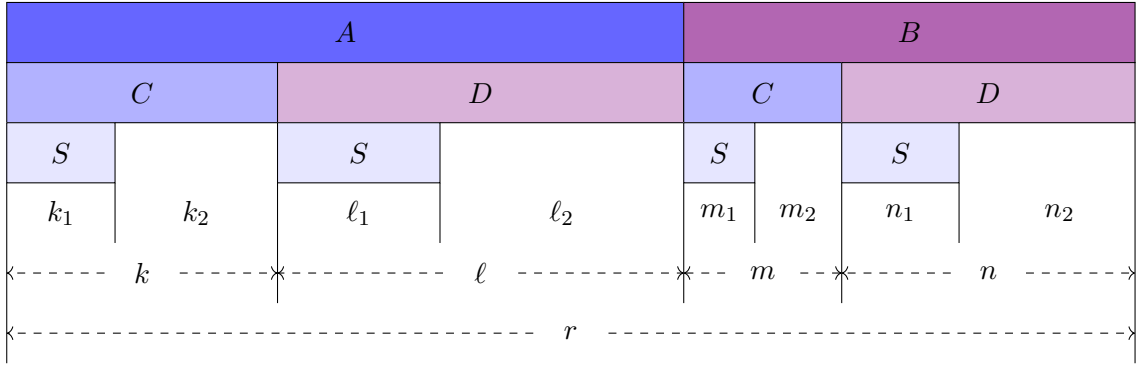
\begin{figure}
\begin{center}
\begin{tikzpicture}
\def\w{15}         
\def\h{0.8}        

\def\fa{0.6}       
\def\fca{0.4}      
\def\fcb{0.35}     
\def\fsca{0.4}      
\def\fsda{0.4}   
\def\fscb{0.45}   
\def\fsdb{0.4}   


\def\a{\fa*\w}     
\def\ca{\fca*\a}    
\def\cb{\fcb*\w-\fcb*\a}   
\def\sca{\fsca*\ca}  
\def\sda{\fsda*\a-\fsda*\ca}  
\def\scb{\fscb*\fcb*\w-\fscb*\fcb*\a} 
\def\sdb{\fsdb*\w -\fsdb*\a-\fsdb*\fcb*\w+\fsdb*\fcb*\a} 

\node (A1) at (0,-\h) {};
\node (A2) at (\a,0) {};

\node (B1) at (\a,-\h) {};
\node (B2) at (\w,0) {};


\node (CA1) at (0,-2*\h) {};
\node (CA2) at (\ca,-\h) {};

\node (DA1) at (\ca,-2*\h) {};
\node (DA2) at (\a,-\h) {};

\node (CB1) at (\a,-2*\h) {};
\node (CB2) at (\a+\cb,-\h) {};

\node (DB1) at (\a+\cb,-2*\h) {};
\node (DB2) at (\w,-\h) {};


\node (SCA1) at (0,-3*\h) {};
\node (SCA2) at (\sca,-2*\h) {};

\node (SDA1) at (\ca,-3*\h) {};
\node (SDA2) at (\ca+\sda,-2*\h) {};

\node (SCB1) at (\a,-3*\h) {};
\node (SCB2) at (\a+\scb,-2*\h) {};

\node (SDB1) at (\a+\cb,-3*\h) {};
\node (SDB2) at (\a+\cb+\sdb,-2*\h) {};


\draw[fill=blue!60] (A1) rectangle (A2) node[midway] {$A$};
\draw[fill=violet!60] (B1) rectangle (B2) node[midway] {$B$};

\draw[fill=blue!30] (CA1) rectangle (CA2) node[midway] {$C$};
\draw[fill=violet!30] (DA1) rectangle (DA2) node[midway] {$D$};
\draw[fill=blue!30] (CB1) rectangle (CB2) node[midway] {$C$};
\draw[fill=violet!30] (DB1) rectangle (DB2) node[midway] {$D$};

\draw[fill=blue!10] (SCA1) rectangle (SCA2) node[midway] {$S$};
\draw[fill=blue!10] (SDA1) rectangle (SDA2) node[midway] {$S$};
\draw[fill=blue!10] (SCB1) rectangle (SCB2) node[midway] {$S$};
\draw[fill=blue!10] (SDB1) rectangle (SDB2) node[midway] {$S$};


\draw
(0,-6*\h) -- (0,-3*\h)
(\sca,-4*\h) -- (\sca,-3*\h)
(\ca,-5*\h) -- (\ca,-3*\h)
(\ca+\sda,-4*\h) -- (\ca+\sda,-3*\h)
(\a,-5*\h) -- (\a,-3*\h)
(\a+\scb,-4*\h) -- (\a+\scb,-3*\h)
(\a+\cb,-5*\h) -- (\a+\cb,-3*\h)
(\a+\cb+\sdb,-4*\h) -- (\a+\cb+\sdb,-3*\h)
(\w,-6*\h) -- (\w,-2*\h)
;


\def\xk{0.5*\ca}

\node at (0.5*\sca,-3.5*\h) {$k_1$};
\node at (0.5*\ca+0.5*\sca,-3.5*\h) {$k_2$};
\node at (\xk,-4.5*\h) {$k$};
\draw
(\xk-0.5,-4.5*\h) edge[dashed,->] node[above,draw=none]{} (0,-4.5*\h)
(\xk+0.5,-4.5*\h) edge[dashed,->] node[above,draw=none]{} (\ca,-4.5*\h)
;

\def\xl{\ca+0.5*\a-0.5*\fca*\a}

\node at (\ca+0.5*\fsda*\a-0.5*\fsda*\ca,-3.5*\h) {$\ell_1$};
\node at (\ca+0.5*\a-0.5*\fca*\a+0.5*\fsda*\a-0.5*\fsda*\ca,-3.5*\h) {$\ell_2$};
\node at (\xl,-4.5*\h) {$\ell$};
\draw
(\xl-0.5,-4.5*\h) edge[dashed,->] node[above,draw=none]{} (\ca,-4.5*\h)
(\xl+0.5,-4.5*\h) edge[dashed,->] node[above,draw=none]{} (\a,-4.5*\h)
;

\def\xm{\a+0.5*\fcb*\w-0.5*\fcb*\a}

\node at (\a+0.5*\fscb*\fcb*\w-0.5*\fscb*\fcb*\a,-3.5*\h) {$m_1$};
\node at (\a+0.5*\fcb*\w-0.5*\fcb*\a+0.5*\fscb*\fcb*\w-0.5*\fscb*\fcb*\a,-3.5*\h) {$m_2$};
\node at (\xm,-4.5*\h) {$m$};
\draw
(\xm-0.5,-4.5*\h) edge[dashed,->] node[above,draw=none]{} (\a,-4.5*\h)
(\xm+0.5,-4.5*\h) edge[dashed,->] node[above,draw=none]{} (\a+\cb,-4.5*\h)
;

\def\xn{\a+\cb+0.5*\w-0.5*\a-0.5*\fcb*\w+0.5*\fcb*\a}

\node at (\a+\cb+0.5*\fsdb*\w -0.5*\fsdb*\a-0.5*\fsdb*\fcb*\w+0.5*\fsdb*\fcb*\a,-3.5*\h) {$n_1$};
\node at (\a+\cb+\scb+0.5*\w-0.5*\a-0.5*\fcb*\w+0.5*\fcb*\a,-3.5*\h) {$n_2$};
\node at (\xn,-4.5*\h) {$n$};
\draw
(\xn-0.5,-4.5*\h) edge[dashed,->] node[above,draw=none]{} (\a+\cb,-4.5*\h)
(\xn+0.5,-4.5*\h) edge[dashed,->] node[above,draw=none]{} (\w,-4.5*\h)
;

\node at (0.5*\w,-5.5*\h) {$r$};
\draw
(0.5*\w-0.5,-5.5*\h) edge[dashed,->] node[above,draw=none]{} (0,-5.5*\h)
(0.5*\w+0.5,-5.5*\h) edge[dashed,->] node[above,draw=none]{} (\w,-5.5*\h)
;

\end{tikzpicture}
\end{center}
\caption{The most general partitioning of $[r]$ into intersections of the sets $A,B,C,D,S$.  Some intersections may be empty.}\label{fig:ABCD-layout}
\end{figure}

The constraint that each of $A,B,C,D$ intersects $S$ implies that the quantities $k_1+\ell_1$, $m_1+n_1$, $k_1+m_1$, and $\ell_1+n_1$ are all positive.  By swapping the roles of $C$ and $D$ if necessary, we may further assume that $k_1$ and $n_1$ are both positive.

We now consider four cases: (1) $\ell_1>0$ and $m_1>0$; (2) $\ell_1=0$ and $m_1>0$; (3) $\ell_1>0$ and $m_1=0$; (4) $\ell_1=m_1=0$.  Cases~(2) and (3) are essentially the same case, because one can be converted to the other by simultaneously swapping the roles of $A$ and $B$ and swapping the roles of $C$ and $D$.  Thus we can ignore Case~(3) without loss of generality.

\paragraph{Case~(1):}  In this case, $A\cap C\cap S$, $A\cap D\cap S$, $B\cap C\cap S$, and $B\cap D\cap S$ are all nonempty.  Let $\cH_{AC},\cH_{AD},\cH_{BC},\cH_{BD}$ be the spaces on qubits in $A\cap C,A\cap D,B\cap C,B\cap D$, respectively.  Then $\ket{\psi}_A \in \cH_{AC}\otimes\cH_{AD}$ and $\ket{\psi}_B \in \cH_{BC}\otimes\cH_{BD}$.  We can then write $\ket{\psi}_A$ and $\ket{\psi}_B$ uniquely as
\begin{align}
\ket{\psi}_A &= \sum_{s\in\two^k}\sum_{t\in\two^{\ell}} c_{s,t}\ket{s}\otimes\ket{t}\;, & \ket{\psi}_B &= \sum_{u\in\two^m}\sum_{v\in\two^n} d_{u,v}\ket{u}\otimes\ket{v}\;,
\end{align}
where the $c_{s,t}$ and $d_{u,v}$ are coefficients in $\complexes$.  Then using Notation~\ref{not:concat-convention} conventions,
\begin{align}
\ket{\psi} &= \ket{\psi}_A\otimes\ket{\psi}_B = \sum_{s,t,u,v} c_{s,t} d_{u,v} \ket{s}\otimes\ket{t}\otimes\ket{u}\otimes\ket{v} \\
&=\sum_{s,t,u,v} c_{s,t} d_{u,v} \ket{s_1\circ s_2}\otimes\ket{t_1\circ t_2}\otimes\ket{u_1\circ u_2}\otimes\ket{v_1\circ v_2}\;.\label{eqn:ket-psi}
\end{align}
Applying $\ctrlZ{S}$ to $\ket{\psi}$ flips the sign of each term where $s_1,t_1,u_1,v_1$ (corresponding to the positions in $S$) are all $\ones$'s.  Thus
\begin{align}
\ket{\p} &= \ctrlZ{S}\ket{\psi} = \sum_{s,t,u,v} c_{s,t} d_{u,v} \ket{s_1\circ s_2}\otimes\ket{t_1\circ t_2}\otimes\ket{u_1\circ u_2}\otimes\ket{v_1\circ v_2} \nonumber\\
&\mbox{\hspace{1in}} - 2\sum_{s_2,t_2,u_2,v_2} c_{s,t} d_{u,v} \ket{\ones\circ s_2}\otimes\ket{\ones\circ t_2}\otimes\ket{\ones\circ u_2}\otimes\ket{\ones\circ v_2} \label{eqn:ket-phi}
\end{align}

Using Eq.~(\ref{eqn:poly-four}), we then get
\[ \poly_{\cH_{AC},\cH_{AD},\cH_{BC},\cH_{BD}}(\ket{\psi}) = \sum_{s,t,u,v} c_{s,t} d_{u,v}\, x_sy_tz_uw_v = T_1T_2\;, \]
where
\begin{align}
T_1 &= \sum_{s,t}c_{s,t}x_sy_t\;, & T_2 &= \sum_{u,v} d_{u,v}z_uw_v\;.
\end{align}
Thus
\[ \poly_{\cH_{AC},\cH_{AD},\cH_{BC},\cH_{BD}}(\ket{\p}) = T_1T_2 - 2\sum_{s_2,t_2,u_2,v_2} c_{s,t} d_{u,v}\,x_{\ones\circ s_2}y_{\ones\circ t_2}z_{\ones\circ u_2}w_{\ones\circ v_2} = P\;, \]
where $P$ is given by Eq.~(\ref{eqn:C-no-zeros}) of Lemma~\ref{lem:most-general} with $\alpha = 2$.  Assuming the hypotheses of that lemma hold, $P$ is irreducible.  One cannot write $\ket{\p}$ as a tensor product $\ket{\p}_C\otimes\ket{\p}_D$ then, for otherwise, $P = \poly_{\cH_{AC},\cH_{BC}}(\ket{\p}_C)\cdot\poly_{\cH_{AD},\cH_{BD}}(\ket{\p}_D)$ with each factor being nonconstant, contradicting the irreducibility of $P$.  Since $C$ and $D$ were chosen arbitrarily subject to the constraints of Case~(1), it follows that $\ket{\p}$ is $S$-entangled.  It remains to show that the hypotheses of Lemma~\ref{lem:most-general} hold in this case.

Our assumption that $\ctrlZ{S}$ does not simplify on $\ket{\psi}$ puts constraints on the coefficients $c_{s,t}$ and $d_{u,v}$.  Since $\ctrlZ{S}$ does not disappear, the expression for $\ket{\psi}$ in Eq.~(\ref{eqn:ket-psi}) must include at least one term in the sum with all $1$'s being fed into $\ctrlZ{S}$, that is, there is some nonzero term of the form
\[ c_{s,t}d_{u,v}\ket{\ones\circ s_2}\otimes\ket{\ones\circ t_2}\otimes\ket{\ones\circ u_2}\otimes\ket{\ones\circ v_2}\;. \]
Thus $c_{s,t}\ne 0$ and $d_{u,v} \ne 0$ for this choice of $s,t,u,v$.
This matches the hypotheses $2(a,b)$ of Lemma~\ref{lem:most-general}.

We also assume that $\ctrlZ{S}$ does not simplify to $\ctrlZ{T}$ on $\ket{\psi}$ for any  $T\subsetneq S$.  Such a simplification occurs when there is some position $p\in S$ such that $\ket{\psi}$ factors into a state with $\ket{1}$ on qubit~$p$, unentangled with the state of the other qubits, in which case we have $T\subseteq S\setminus\{p\}$.  For this \emph{not} to happen then, for every $p\in S$, there is a nonzero term in the sum of Eq.~(\ref{eqn:ket-psi}) whose basis state $\ket{s}\otimes\ket{t}\otimes\ket{u}\otimes\ket{v}$ has $0$ in position~$p$.  Since $S$ has nonempty intersection with all four sets $A\cap C, A\cap D, B\cap C, B\cap D$ (because we are in Case~(1)), hypotheses $2(c,d)$ of Lemma~\ref{lem:most-general} must hold.  This concludes the proof for Case~(1).

Cases~(2) and (4) are simpler but completely analogous to Case~(1).  Instead of using Lemma~\ref{lem:most-general}, Case~(2) uses Lemma~\ref{lem:most-general-one-zero} and Case~(4) uses Lemma~\ref{lem:most-general-two-zeros}.  We omit the details.
\end{proof}

Rather than using Lemma~\ref{lem:S-entangled} directly, we will use the following stronger corollary.

\begin{lemma}\label{lem:simplify}
Let $r$ and $S\subseteq [r]$ be as in Lemma~\ref{lem:S-entangled}, and let $\{A,B\}$ and $\{C,D\}$ be two partitions of~$[r]$.  Let $\ket{\psi}_A,\ket{\psi}_B,\ket{\p}_C,\ket{\p}_D$ be states in $\cH_A,\cH_B,\cH_C,\cH_D$, respectively.  If $\ctrlZ{S}(\ket{\psi}_A\otimes\ket{\psi}_B) = \ket{\p}_C\otimes\ket{\p}_D$, then either $\ctrlZ{S}$ disappears on $\ket{\psi}_A\otimes\ket{\psi}_B$ or $\ctrlZ{S}$ simplifies to $\ctrlZ{T}$ on $\ket{\psi}_A\otimes\ket{\psi}_B$, where $T\subseteq S$ is a subset of one of the sets $A,B,C,D$.
\end{lemma}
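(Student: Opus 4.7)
The plan is to proceed by strong induction on $|S|$, reducing each nontrivial case to the Entanglement Lemma (Lemma~\ref{lem:S-entangled}). Write $\ket{\psi} := \ket{\psi}_A \otimes \ket{\psi}_B$ and $\ket{\p} := \ket{\p}_C \otimes \ket{\p}_D$, so the hypothesis reads $\ctrlZ{S}\ket{\psi} = \ket{\p}$. Two cases are immediate. First, if $\ctrlZ{S}$ disappears on $\ket{\psi}$, the conclusion holds outright. Second, if $S$ is entirely contained in one of the four sets $A,B,C,D$ (say $S\subseteq A$), then $T := S$ already meets the stated conclusion: $\ctrlZ{S}\ket{\psi} = \ctrlZ{T}\ket{\psi}$ trivially, and this common value differs from $\ket{\psi}$ precisely because we have ruled out disappearance.

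The remaining case is that $S$ intersects both $A$ and $B$ and also both $C$ and $D$. Then $\ket{\psi}$ separates at $\{A,B\}$ with each side meeting $S$, so $\ket{\psi}$ is $S$-separable, i.e., not $S$-entangled; similarly $\ket{\p}$ is $S$-separable via $\{C,D\}$. Lemma~\ref{lem:S-entangled} then forces $\ctrlZ{S}$ to simplify on $\ket{\psi}$, and since the disappearance subcase was handled above, there exists a proper subset $T_0\subsetneq S$ with $\ctrlZ{T_0}\ket{\psi} = \ctrlZ{S}\ket{\psi} = \ket{\p}$ and $\ctrlZ{T_0}\ket{\psi} \neq \ket{\psi}$. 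Now the hypotheses of the present lemma hold verbatim with $T_0$ in place of $S$ — we still have $\ctrlZ{T_0}(\ket{\psi}_A\otimes\ket{\psi}_B) = \ket{\p}_C\otimes\ket{\p}_D$ — so the strong inductive hypothesis (applicable because $|T_0| < |S|$) produces some $T\subseteq T_0$ contained in one of $A,B,C,D$ with $\ctrlZ{T}\ket{\psi} = \ctrlZ{T_0}\ket{\psi}$, and this common value is not $\ket{\psi}$ because $\ctrlZ{T_0}$ does not disappear on $\ket{\psi}$. Since $T\subseteq T_0 \subsetneq S$ and $\ctrlZ{T}\ket{\psi} = \ctrlZ{S}\ket{\psi}$, this $T$ witnesses the conclusion.

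The main obstacle is purely bookkeeping: verifying that the third case is precisely the negation of the two easy cases, and checking that the product structures of $\ket{\psi}$ and $\ket{\p}$ (with respect to $\{A,B\}$ and $\{C,D\}$ respectively) are preserved under passage from $S$ to $T_0$, so that the inductive step really does apply. The substantive content is all packaged inside Lemma~\ref{lem:S-entangled} and hence, ultimately, inside the irreducibility results of Section~\ref{sec:irred-results}; the present lemma only has to chip away at $S$ one step at a time until the simplification is forced to localize within a single region of the four-way partition induced by $\{A,B\}$ and $\{C,D\}$.
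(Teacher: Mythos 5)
Your proof is correct and follows essentially the same route as the paper's: dispose of the disappearance case and the case $S\subseteq A,B,C,D$, then use $S$-separability of both $\ket{\psi}_A\otimes\ket{\psi}_B$ and $\ket{\p}_C\otimes\ket{\p}_D$ together with Lemma~\ref{lem:S-entangled} to force a simplification to some $T_0\subsetneq S$, and descend. The only cosmetic difference is that you package the descent as a strong induction on $|S|$ where the paper iterates the argument directly.
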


\begin{proof}
Let $\ket{\psi} := \ket{\psi}_A\otimes\ket{\psi}_B$ and $\ket{\p} := \ket{\p}_C\otimes\ket{\p}_D$.  Suppose $\ctrlZ{S}$ does not disappear on $\ket{\psi}$.  If $S$ is a subset of one of $A,B,C,D$, then we can set $T := S$ and we are done.  Otherwise, both $\ket{\psi}$ and $\ket{\p}$ are $S$-separable.  Therefore by Lemma~\ref{lem:S-entangled}, $\ctrlZ{S}$ simplifies to $\ctrlZ{T_1}$ on $\ket{\psi}$ for some  $T_1\subsetneq S$.  If $T_1$ is a subset of one of $A,B,C,D$, then we are done.  Otherwise, $\ket{\psi}$ and $\ket{\p}$ are both $T_1$-separable, and applying Lemma~\ref{lem:S-entangled} again, we get that $\ctrlZ{T_1}$ (hence also $\ctrlZ{S}$) simplifies on $\ket{\psi}$ to $\ctrlZ{T_2}$ for some $T_2\subsetneq T_1$, etc., eventually getting $\ctrlZ{S}$ to simplify to $\ctrlZ{T_j}$ on $\ket{\psi}$ for some $j$ such that $T_j$ is a subset of one of $A,B,C,D$.  Set $T := T_j$.
\end{proof}

\begin{rmrk}
Lemmas~\ref{lem:S-entangled} and~\ref{lem:simplify} hold not just for a $\ctrlZ{}$-gate but for any $r$-qubit gate $G_\eta$ defined for all $x = x_1\cdots x_r\in\two^r$ as
\begin{equation}\label{eqn:G-eta}
G_\eta\ket{x} := \begin{cases}
\eta\ket{x}, & \mbox{if $x = 1^r$,} \\
\ket{x},     & \mbox{otherwise,}
\end{cases}
\end{equation}
where $\eta\in\complexes$ satisfies $|\eta| = 1$ and $\eta \ne 1$.  One just replaces the ``$\mbox{}-2$'' in Eq.~(\ref{eqn:ket-phi}) with ``$\mbox{}+(\eta-1)$.''
\end{rmrk}

We will also need the next routine lemma, which says that if $\ctrlZ{S}$ disappears on some tensor product state $\ket{\psi}_A\otimes\ket{\psi}_B$, then one of the states $\ket{\psi}_A$ or $\ket{\psi}_B$ completely ensures that $\ctrlZ{S}$ disappears.  More precisely, we have the following:

\begin{lemma}\label{lem:no-zero-divisors}
Let $r$ and $S\subseteq[r]$ be as in Lemma~\ref{lem:S-entangled} and let $\{A,B\}$ be a partition of~$[r]$
Suppose $\ctrlZ{S}$ disappears on $\ket{\psi}_A\otimes\ket{\psi}_B$, for some states $\ket{\psi}_A\in\cH_A$ and $\ket{\psi}_B\in\cH_B$.  Then either $\ctrlZ{S}$ disappears on $\ket{\psi}_A\otimes\ket{\p}_B$ for all states $\ket{\p}_B\in\cH_B$, or $\ctrlZ{S}$ disappears on $\ket{\p}_A\otimes\ket{\psi}_B$ for all states $\ket{\p}_A\in\cH_A$.
\end{lemma}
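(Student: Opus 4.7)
The plan is to unpack the definition of ``disappears'' into a concrete algebraic condition on the coefficients of $\ket{\psi}_A$ and $\ket{\psi}_B$ in the computational basis, then invoke the fact that $\complexes$ has no zero divisors to split into the two desired cases.

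Let $S_A := S \cap A$ and $S_B := S \cap B$. Expand
\[
\ket{\psi}_A \;=\; \sum_{s \in \two^{|A|}} \alpha_s \ket{s}\;, \qquad \ket{\psi}_B \;=\; \sum_{t \in \two^{|B|}} \beta_t \ket{t}\;,
\]
so that $\ket{\psi}_A \otimes \ket{\psi}_B = \sum_{s,t} \alpha_s \beta_t \ket{s}\otimes\ket{t}$. By the remark immediately preceding Lemma~\ref{lem:S-entangled} (that $\ctrlZ{S}$ disappears on $\ket{\chi}$ iff $\braket{x}{\chi} = 0$ whenever $x$ has $1$'s in all positions of $S$), the hypothesis is equivalent to $\alpha_s \beta_t = 0$ for every pair $(s,t)$ such that $s$ has $1$'s in all positions of $S_A$ and $t$ has $1$'s in all positions of $S_B$. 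Call these index sets $\calP_A$ and $\calP_B$ respectively.

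Now I would argue by the classical ``no zero divisors'' dichotomy: suppose there exists $s^* \in \calP_A$ with $\alpha_{s^*} \ne 0$. Then for every $t \in \calP_B$ we must have $\beta_t = 0$. In this case, for an arbitrary replacement state $\ket{\p}_B = \sum_t \gamma_t \ket{t} \in \cH_B$, the coefficient of $\ket{s}\otimes\ket{t}$ in $\ket{\psi}_A \otimes \ket{\p}_B$ is $\alpha_s \gamma_t$, which is still $0$ whenever $s \in \calP_A$ and $t \in \calP_B$ (because such a $t$ is already excluded by the zero-coefficient condition on $\ket{\psi}_B$ --- wait, that is not quite right; let me redo this). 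Actually in this sub-case what vanishes is $\beta_t$ for $t \in \calP_B$, but we need to deduce that $\ctrlZ{S}$ disappears on $\ket{\p}_A \otimes \ket{\psi}_B$ for \emph{all} $\ket{\p}_A$. And indeed, for any $\ket{\p}_A = \sum_s \delta_s \ket{s}$, the relevant coefficients are $\delta_s \beta_t$ with $s \in \calP_A$ and $t \in \calP_B$; but $\beta_t = 0$ for all such $t$, so every such product is $0$. Symmetrically, if no such $s^*$ exists, then $\alpha_s = 0$ for every $s \in \calP_A$, and then $\ctrlZ{S}$ disappears on $\ket{\psi}_A \otimes \ket{\p}_B$ for every $\ket{\p}_B$.

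The proof is essentially a one-line observation once the disappearance condition is translated into vanishing of coefficient products; there is no real obstacle beyond being careful with the two symmetric cases and the fact that the substitution must preserve the product structure. The only subtlety worth mentioning is that ``disappears'' is a linear condition on the state --- the set of states killed by $\ctrlZ{S}$ is a subspace --- which is precisely what allows the argument to go through for \emph{all} replacement states, not just the original one.
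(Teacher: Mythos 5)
Your proof is correct and takes essentially the same route as the paper: both unpack ``disappears'' into the vanishing of the coefficients on basis states that are all-$1$ on $S$, observe that for a product state these coefficients factor as $\alpha_s\beta_t$, and use that $\complexes$ has no zero divisors to conclude that one factor's relevant coefficients all vanish (the paper just packages each factor's relevant coefficients into a single projection amplitude, so the dichotomy is on one product $\alpha\gamma=0$). Your mid-proof self-correction lands on the right case split, so no changes are needed.
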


\begin{proof}
Let $\cH_{\ones,A}$, $\cH_{\ones,B}$, and $\cH_{\ones}$ be the subspaces of $\cH_A$, $\cH_B$, and $\cH_r = \cH_{A\cup B}$, respectively, that are spanned by those basis vectors corresponding to strings with $1$'s in all the positions in $S\cap A$, $S\cap B$, and $S$, respectively.  We can write
\begin{align}
\ket{\psi}_A &= \alpha\ket{\ones}_A + \beta\ket{\ones_\perp}_A\;, \\
\ket{\psi}_B &= \gamma\ket{\ones}_B + \delta\ket{\ones_\perp}_B\;,
\end{align}
where $\alpha,\beta,\gamma,\delta\in\complexes$ and $\ket{\ones}_A$ is a unit vector in $\cH_{\ones,A}$ and $\ket{\ones_\perp}_A$ is unit vector in the orthogonal complement of $\cH_{\ones,A}$ in $\cH_A$ (spanned by the basis states that include at least one $0$ in a position in $S\cap A$).  Similarly for $\ket{\ones}_B$ and $\ket{\ones_\perp}_B$.  We then have
\[ \ket{\psi}_A\otimes\ket{\psi}_B = \alpha\gamma\ket{\ones} + u\;, \]
where $\ket{\ones}$ is a unit vector in $\cH_{\ones}$ and $u$ is some vector in its orthogonal complement $\cH_{\ones}^{\perp}$.  $\ket{\psi}_A\otimes\ket{\psi}_B$ turns off $\ctrlZ{S}$ if and only if it is in $\cH_{\ones}^{\perp}$, i.e., iff $\alpha\gamma = 0$.  If $\alpha = 0$, then $\ket{\psi}_A = \ket{\ones_\perp}_A$ up to a phase factor, which implies $\ket{\psi}_A\otimes\ket{\p}_B \in \cH_{\ones}^{\perp}$ for any $\ket{\p}_B\in\cH_B$.  Similarly if $\gamma = 0$.
\end{proof}

\section{Pure Parity States}

\begin{defn}[Subspace $\calP_b$]\label{def:parity-space}
Given $r\ge 1$ and $b\in\binary$, 
we define the subspace $\calP_b$ of $\cH_r$ to be the space spanned by $\set{\ket{x} \st x\in\bits{r} \wedge \oplus x = b}$.
\end{defn}

Clearly, $\dim\calP_0 = \dim\calP_1 = 2^{r-1}$, and $\cH_r$ is the direct sum of $\calP_0$ and $\calP_1$.

\begin{defn}[Parity of a State]
Given an $r$-qubit state $\ket{\psi}\in\cH_r$ and $b\in\binary$,
we say that \emph{$\ket{\psi}$ has pure parity~$b$} if $\ket{\psi}\in\calP_b$.  We say that $\ket{\psi}$ is a \emph{pure parity state} if $\ket{\psi}$ has pure parity~$b$ for some $b\in\binary$.
\end{defn}

Every classical state (i.e., computational basis state) is clearly a pure parity state, and the tensor product of pure parity states on disjoint sets of qubits is itself a pure parity state.  If a quantum circuit $C$ weakly computes $\oplus_n$ for some $n$, witnessed by an initial state $\ket{\psi}$ of the ancilla qubits (cf.\ Definition~\ref{def:compute-f}), then setting the input qubits to a state with pure parity $b$ must result in an output of the form $\ket{b}\otimes\ket{\p}$ for some state $\ket{\p}$ of the non-target qubits ($\ket{\p}$ may depend on the state of the input qubits).  In particular, the final state separates at $\{\{0\},\overline{\{0\}}\}$.

\begin{lemma}\label{lem:kill-parity}
Given any $r$-qubit unitary operators $U_1,\ldots,U_k$ for some $k < 2^{r-1}$ and any bit $b\in\binary$, there is an $r$-qubit state $\ket{\psi_b}$ with pure parity~$b$ such that $\bra{1^r} U_i \ket{\psi_b} = 0$ for all $1\le i\le k$.
\end{lemma}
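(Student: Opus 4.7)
The plan is to reduce to a dimension count inside $\calP_b$. First, I would rewrite each constraint $\bra{1^r} U_i \ket{\psi_b} = 0$ as $\langle v_i | \psi_b \rangle = 0$, where $v_i := U_i^* \ket{1^r}$. The question then becomes: does there exist a normalized state $\ket{\psi_b}\in\calP_b$ simultaneously orthogonal to $v_1,\ldots,v_k$?

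Second, I would project the constraints into $\calP_b$. Let $\Pi_b$ denote the orthogonal projection of $\cH_r$ onto $\calP_b$, and set $\tilde v_i := \Pi_b v_i$. Since any vector in $\calP_b$ is automatically orthogonal to every vector in $\calP_b^\perp$, for every $\ket{\psi_b}\in\calP_b$ we have $\langle v_i | \psi_b\rangle = \langle \tilde v_i | \psi_b \rangle$. So it suffices to find a nonzero $\ket{\psi_b}\in\calP_b$ orthogonal to each of the $\tilde v_i$, and then normalize.

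Third, I would finish with dimension counting. The vectors $\tilde v_1,\ldots,\tilde v_k$ span a subspace of $\calP_b$ of dimension at most $k$, so their orthogonal complement within $\calP_b$ has dimension at least $\dim\calP_b - k = 2^{r-1} - k$, which is strictly positive by hypothesis. Any nonzero vector from this complement, after normalization, yields the desired $\ket{\psi_b}$. I do not anticipate any real obstacle; the only subtlety is ensuring the pure-parity condition is preserved, and this is automatic because we select $\ket{\psi_b}$ inside $\calP_b$ from the outset.
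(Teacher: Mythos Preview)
Your proposal is correct and is essentially the same dimension-counting argument as the paper's proof. The paper intersects $\calP_b$ with $\cZ := \bigcap_i (U_i^*\ket{1^r})^{\perp}$ and bounds $\dim(\calP_b\cap\cZ)\ge 2^{r-1}-k$ via the inclusion--exclusion formula, whereas you project the vectors $v_i = U_i^*\ket{1^r}$ into $\calP_b$ and take their orthogonal complement there; these two descriptions yield literally the same subspace, so the approaches coincide.
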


\begin{proof}
Let $\calP_0$ and $\calP_1$ be as in Definition~\ref{def:parity-space}.  For $1\le i\le k$, let $\cZ_i\subseteq\cH_r$ be the $(2^r-1)$-dimensional subspace of $\cH_r$ spanned by $\set{U_i^*\ket{x}: x\in\binary^r\setminus \{1^r\}}$.  Then for all $i$, \ $\bra{1^r} U_i \ket{\psi} = 0$
for any state $\ket{\psi}\in\cZ_i$.  Letting $\cZ := \bigcap_{i=1}^k \cZ_i$, we see that $\dim(\cZ) \ge 2^r-k$.  For $b\in\binary$, we then have
\[ \dim(\calP_b\cap\cZ) = \dim\calP_b + \dim\cZ - \dim(\calP_b+\cZ) \ge \dim\calP_b + \dim\cZ - 2^r \ge 2^{r-1} + (2^r - k) - 2^r \ge 1\;. \]
It follows that we can choose a state (unit vector) $\ket{\psi_b}$ in $\calP_b\cap\cZ$, and this vector has the desired properties.
\end{proof}

We will use Lemma~\ref{lem:kill-parity} to turn off $\ctrlZ{}$-gates.  If some $\ctrlZ{}$-gate~$G$ that touches all $r$ qubits (and possibly other qubits) is applied to $U_i\ket{\psi_b}\otimes\cdots$, then $G$ is turned off, i.e., $G(U_i\ket{\psi_b}\otimes\cdots) = U_i\ket{\psi_b}\otimes\cdots$, where ``$\cdots$'' represents some state of the other qubits, if they are present.

\section{Quantum Circuit Lower Bounds}
\label{sec:lower-bounds}

In this section we prove 
 that no depth-2 $\QAC$-circuit computes~$\oplus_n$ for $n>3$ (see Definition~\ref{def:compute-f}),
which improves upon a previous version of our paper~\cite{PFGT:fanout}.

\begin{thm}\label{thm:depth-2-unclean}
No depth-$2$ $\QAC$-circuit computes $\oplus_n$ for $n>3$.
\end{thm}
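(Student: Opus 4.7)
Suppose for contradiction that there is a depth-$2$ $\QAC$-circuit $C$ computing $\oplus_n$ for some $n > 3$, with target qubit~$0$, input qubits $1,\ldots,n$, and $m$ ancilla qubits. On each of layers $1$ and $2$, at most one $\ctrlZ{}$-gate touches qubit~$0$; call these $\ctrlZ{S_1}$ and $\ctrlZ{S_2}$ (taking $S_i=\{0\}$ or $\emptyset$ to mean no nontrivial gate is present on that layer). The plan is to track how the state just before layer~$2$ factors, apply the Entanglement Lemma machinery to constrain $\ctrlZ{S_2}$, and then use pure-parity inputs to force enough structural simplifications to produce a contradiction.

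I would first dispose of the easy case $|S_2| \leq 1$, in which layer~$2$ does not nontrivially act on the target. Since single-qubit gates on layer~$2.5$ preserve the tensor-product structure across the partition $\{\{0\},\overline{\{0\}}\}$, the state after layer~$1.5$ must already separate the target. The state after layer~$0.5$ is a full tensor product of single-qubit states, so by Lemma~\ref{lem:S-entangled} applied to $\ctrlZ{S_1}$ on that product state, $\ctrlZ{S_1}$ must simplify on every input to a $\ctrlZ{T}$ with $|T|\le 1$. This makes the target's evolution independent of $x$ and hence unable to realize $\oplus_n$.

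In the main case $|S_2| \geq 2$, let $\ket{\psi_x}$ denote the state just before layer~$2$ on input $\ket{0}\otimes\ket{x}\otimes\ket{0^m}$. Because layer~$1$ is a tensor product of $\ctrlZ{}$-gates on pairwise disjoint qubit sets (interleaved with single-qubit layers), $\ket{\psi_x}$ factors as a tensor product across the partition of all qubits induced by layer~$1$. On the other hand, since the final state is $\ket{\oplus x}\otimes\ket{\p_x}$ and layer~$2.5$ is single-qubit, $\ctrlZ{S_2}\ket{\psi_x}$ separates at $\{\{0\},\overline{\{0\}}\}$. Lemma~\ref{lem:simplify}, applied with the layer-$1$ partition as $\{A,B\}$ and $\{C,D\} = \{\{0\},\overline{\{0\}}\}$, then forces, for every $x$, either $\ctrlZ{S_2}$ to disappear on $\ket{\psi_x}$ or to simplify to a $\ctrlZ{T}$ with $T$ lying inside a single cell of one of the two partitions; Lemma~\ref{lem:no-zero-divisors} further refines the disappearance case, showing that the disappearance is induced entirely by the state on one side of the partition.

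To turn these constraints into a contradiction, I would exploit Lemma~\ref{lem:kill-parity}: for each $b \in \{0,1\}$ I can feed in a superposition of pure-parity-$b$ inputs that disables any pre-specified bounded number of layer-$1$ $\ctrlZ{}$-gates whose support spans all input qubits, while still requiring the target to output $\ket{b}$. This drastically thins the factorization of $\ket{\psi_x}$, so combined with the simplification analysis above it forces the target's output to be independent of all but at most three of the input qubits, contradicting the $n$-wise dependence of $\oplus_n$ for $n > 3$. The main obstacle is the subcase where $\ctrlZ{S_1}$ also acts nontrivially on the target together with many input and ancilla qubits, so that $\ctrlZ{S_2}$ must precisely cancel the entanglement just created by $\ctrlZ{S_1}$; this is exactly where the irreducibility results of Lemmas~\ref{lem:most-general-two-zeros}, \ref{lem:most-general-one-zero}, and \ref{lem:most-general} are crucial (via the Entanglement Lemma) to rule out any such fine-tuned cancellation, pushing the circuit back into the structurally-limited regime handled by the preceding steps.
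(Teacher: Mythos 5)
Your proposal assembles the right toolbox --- Lemma~\ref{lem:simplify}, Lemma~\ref{lem:kill-parity}, Lemma~\ref{lem:no-zero-divisors}, and the irreducibility lemmas via the Entanglement Lemma --- and the overall shape matches the paper's proof. But the decisive step is asserted rather than derived, and as stated it does not follow. In both your easy case and your main case you pass from ``the relevant $\ctrlZ{}$-gate either disappears or simplifies to a gate confined to one cell of the partition'' to ``the target's output is independent of $x$.'' That inference is false on its face: the dichotomy leaves the target with (up to) \emph{two} possible final states, and two states are exactly enough to encode $\oplus x$ --- indeed the depth-$2$ circuit for $\oplus_3$ displayed after the theorem realizes precisely this dichotomy and computes parity correctly. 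The whole difficulty is to show that \emph{which branch occurs} is determined by too few input qubits. The paper does this by first proving two structural lemmas you never establish: no layer-$1$ gate touches three input qubits (Lemma~\ref{lem:no-three-input-qubits}), and the layer-$1$ gate on the target touches at most one input qubit (Lemma~\ref{lem:target-connects-only-one-input}, which itself already needs the Entanglement Lemma and Lemma~\ref{lem:no-zero-divisors}). Only with these in hand can one argue: since the target must track $\oplus x$ and only two behaviors of $G_0^{(2)}$ are possible, there is some $x_0$ on which $G_0^{(2)}$ disappears; the state $\ket{\p_{x_0}}$ before layer~$2$ separates at a partition $\{T\cup T_1,\,T_2\}$ with input qubits $2,3\in T_1$ and $4\in T_2$ (this choice is exactly what the structural lemmas license); and Lemma~\ref{lem:no-zero-divisors} then localizes the disappearance to one side, making the output independent of qubit~$4$ or of qubits~$2,3$. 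You cite Lemma~\ref{lem:no-zero-divisors} but never run this endgame, and without the structural lemmas there is no valid partition to run it on, so the argument does not terminate in a contradiction.

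Two smaller points. Lemma~\ref{lem:simplify} takes \emph{bi}partitions, not the full layer-$1$ partition; the useful instantiation is $\{T,\overline{T}\}$ with $T$ the support of the target's layer-$1$ gate, and concluding that $G_0^{(2)}$ simplifies to $\ctrlZ{S\cap T}$ (rather than to some $\ctrlZ{T'}$ with $T'\subseteq\overline{\{0\}}$) additionally requires the ``target is not pass-through'' fact from Lemma~\ref{lem:target-cant-pass-2}. And Lemma~\ref{lem:kill-parity} turns off only gates whose support contains \emph{all} $r$ committed qubits, with $k<2^{r-1}$ such gates; the paper only ever commits two or three designated input qubits, not ``any pre-specified bounded number of layer-$1$ gates,'' so the ``thinning'' you describe needs to be organized much more carefully than your sketch suggests.
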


This result is tight in the sense that there is a simple $4$-qubit depth-$2$ $\QAC$-circuit that computes~$\oplus_3$:
\begin{center}
\begin{quantikz}[row sep={0.6cm,between origins}]
\lstick{$\ket{t}$}   & \gate{H} & \control{} & \qw & \control{} & \gate{H} &\qw
\\
\lstick{$\ket{x_1}$} & \qw      & \ctrl{-1}  & \qw & \qw        & \qw      &\qw \\
\lstick{$\ket{x_2}$} & \gate{H} & \control{}&\gate{H}&\ctrl{-2} & \qw      &\qw \\
\lstick{$\ket{x_3}$} & \qw      & \ctrl{-1}  & \qw & \qw        & \qw      &\qw
\end{quantikz}
\;\;=\;\;
\begin{quantikz}[row sep={0.6cm,between origins}]
& \targ{}   & \targ{}   & \ghost{H} \\
& \ctrl{-1} & \qw       & \\
& \targ{}   & \ctrl{-2} & \\
& \ctrl{-1} & \qw       &
\end{quantikz}
\end{center}

We will prove Theorem~\ref{thm:depth-2-unclean} through a sequence of lemmas that may be useful in proving lower bounds for circuits of higher depth.  We will also make repeated use of the Entanglement Lemma (Lemma~\ref{lem:S-entangled}) and Lemma~\ref{lem:kill-parity}.  We adopt the conventions of Definition~\ref{def:gate-positions} to describe gates within circuits.

\begin{lemma}\label{lem:single-layer}
There is no depth-$1$ $\QAC$-circuit that weakly computes~$\parity_n$ for $n\ge 2$.
\end{lemma}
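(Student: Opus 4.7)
The plan is to analyze any depth-$1$ $\QAC$-circuit $C = G^{(1.5)} G^{(1)} G^{(0.5)}$ allegedly weakly computing $\oplus_n$ for $n\ge 2$. Let $\ket{\psi_x}$ be the state just after $G^{(0.5)}$; since $G^{(0.5)}$ is a tensor product of single-qubit gates, $\ket{\psi_x}$ factors as a product state on the target and on each input qubit (call the target factor $\ket{t_0}$ and the input factors $\ket{s_i(x_i)}$), tensored with a possibly-entangled ancilla state $\ket{\sigma}$. Weak parity computation demands that $G^{(1)}\ket{\psi_x}$ be separable across $\{\{0\},R\}$ (with $R := [r]\setminus\{0\}$), with target factor $(G_0^{(1.5)})^{-1}\ket{\oplus x}$. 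I would first observe that qubit~$0$ must be incident to some $\ctrlZ{S}$-gate in $G^{(1)}$ (which necessarily has $|S|\ge 2$, since single-qubit gates do not contribute to depth); otherwise the target is only ever acted on by single-qubit gates, so its final state is a fixed single-qubit state independent of~$x$, which is impossible for parity.

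Let $\ctrlZ{S}$ be the unique $\ctrlZ{}$-gate touching qubit~$0$ and write $G^{(1)} = \ctrlZ{S}\cdot G_R$, where $G_R$ consists of the remaining $\ctrlZ{}$-gates, all acting within $[r]\setminus S$. The intermediate state $\ket{\p_x} := G_R \ket{\psi_x}$ still factors as $\ket{t_0}\otimes\ket{\chi_x}$, so Lemma~\ref{lem:simplify}, applied with both partitions equal to $\{\{0\},R\}$, forces the simplification set $T_x$ of $\ctrlZ{S}$ on $\ket{\p_x}$ to be a subset of $\{0\}$ or of $R$. Using the decomposition $\ctrlZ{S} = \ket{0}\bra{0}_0 \otimes I_{S'} + \ket{1}\bra{1}_0 \otimes \ctrlZ{S'}$ (where $S' := S\setminus\{0\}$), a direct calculation shows that---after ruling out the easy case $\ket{t_0}\in\{\ket{0},\ket{1}\}$ up to phase---the target's state after $\ctrlZ{S}$ equals either $\ket{t_0}$ (when $\ctrlZ{S}$ disappears on $\ket{\p_x}$) or $Z\ket{t_0}$ (when $\ctrlZ{S}$ simplifies to $Z_0$ on $\ket{\p_x}$). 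Matching these two options, after $G_0^{(1.5)}$, to $\ket{0}$ and $\ket{1}$ forces $\ket{t_0}\perp Z\ket{t_0}$, hence $\ket{t_0}\propto \ket{0}+e^{i\theta}\ket{1}$ for some phase~$\theta$.

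The remaining combinatorial step is where I expect the main obstacle. WLOG, disappearance of $\ctrlZ{S}$ on $\ket{\p_x}$ corresponds to $\oplus x = 0$ and simplification to $Z_0$ to $\oplus x = 1$. Since $G_R$ is diagonal in the computational basis and does not touch $S'$, the support conditions on $\ket{\chi_x}$ (vanishing projection onto $\ket{1^{|S'|}}_{S'}$ when $\oplus x=0$; total support on $\ket{1^{|S'|}}_{S'}$ when $\oplus x = 1$) pull back to the same conditions on the pre-$G_R$ tensor product $\bigotimes_{i=1}^n \ket{s_i(x_i)}\otimes\ket{\sigma}$. Let $J := S'\cap[n]$ and $S_A := S'\setminus J$. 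The $\oplus x = 1$ case forces $\ket{\sigma} = \ket{1^{|S_A|}}_{S_A}\otimes\ket{\sigma'}$ (an $x$-independent constraint) and $x_i = b_i$ for each $i\in J$, where $b_i$ is the unique value with $G_i^{(0.5)}\ket{b_i}\propto\ket{1}$; the $\oplus x = 0$ case then forces $x|_J\ne b|_J$. Together these give $\oplus x = 1 \iff x|_J = b|_J$. A case split on $|J|$---empty (forcing every $x$ to have parity~$1$), full (forcing a unique $x$ to have parity~$1$, contradicting $2^{n-1}\ge 2$ for $n\ge 2$), or strictly intermediate (one can flip $\oplus x$ by varying $x_i$ for $i\notin J$ while fixing $x|_J$)---yields a contradiction. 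The main obstacle is handling the ancilla entanglement cleanly, which is made possible by the diagonal structure of $G_R$ and the fact that $G_R$ does not touch $S$, so the relevant projections and support conditions factor through $\ket{\sigma}$ without interference.
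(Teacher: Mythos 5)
Your proof is correct, but it takes a genuinely different route from the paper's. The paper's argument is much shorter: it first observes that \emph{every} input qubit must be incident to the single layer-$1$ $\ctrlZ{}$-gate touching the target (else some input is disconnected from the target), so $\{1,2\}\subseteq S$; it then invokes Lemma~\ref{lem:kill-parity} with $r=2$ to commit qubits~$1,2$ to a pure-parity-$b$ state that turns $G_0^{(1)}$ off for each $b\in\two$, isolating the target so that its final state $G_0^{(1.5)}G_0^{(0.5)}\ket{0}$ cannot depend on $b$. You instead keep classical inputs, use Lemma~\ref{lem:simplify} (really just the eigenvector dichotomy of $\ctrlZ{S'}$ on the non-target factor, which for the partition $\{\{0\},\overline{\{0\}}\}$ can be seen directly without the irreducibility machinery) to force ``disappears vs.\ simplifies to $Z_0$'' to track $\oplus x$, and then extract the combinatorial condition $\oplus x = 1 \iff x|_J = b|_J$ for $J = S'\cap[n]$, which no $J$ can satisfy when $n\ge 2$. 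Your version buys a sharper structural picture (it handles all possible intersections of $S$ with the input qubits uniformly, rather than first reducing to $[n]\subseteq S$, and it identifies exactly which Boolean functions a single layer \emph{could} compute), at the cost of more bookkeeping; the paper's version buys brevity and reuses the Lemma~\ref{lem:kill-parity} tool that recurs in the depth-$2$ argument. One small point to make explicit: when you assert that $b_i$ is ``the unique value with $G_i^{(0.5)}\ket{b_i}\propto\ket{1}$,'' such a value exists only if $G_i^{(0.5)}$ is semiclassical; if it is not, then no parity-$1$ input can meet the forced support condition, which is already a contradiction since parity-$1$ inputs exist, so your argument survives but should say so.
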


\begin{proof}
Consider such a circuit $C$ on $n\ge 2$ input qubits, witnessed by some fixed initial state of the ancilla qubits.  The target and first two input qubits must all be incident to a single gate $G_0^{(1)} = \ctrlZ{S}$ for some $S\supseteq\{0,1,2\}$, for otherwise there is an input qubit that does not interact with the target qubit at all, whence $C$ cannot weakly compute $\parity_n$.  Then by Lemma~\ref{lem:kill-parity} (with $r:=2$ and $U_1 := G_{\{1,2\}}^{(1)}$), for each $b\in\two$, qubits~$1$ and $2$ can be initially committed to a $2$-qubit state with pure parity $b$ that turns off $G_0^{(1)}$.  With either of these initial states (setting any other input qubits to $\ket{0}$), the target does not interact with any other qubits and so can only be $G_0^{(1.5)}G_0^{(0.5)}\ket{0}$.  But then the final state of the target does not depend on $b$, and thus $C$ does not weakly compute $\oplus_n$.
\end{proof}

\begin{defn}
We will say that a $1$-qubit gate $U$ is \emph{semiclassical} if its $2\times 2$ matrix representation with respect to the computational basis has two entries that are~$0$.  Equivalently, $U\ket{0}$ is a computational basis state up to a phase factor.
\end{defn}

Observe that a $1$-qubit unitary gate $U$ is semiclassical if and only if $U^*$ is semiclassical.

\begin{defn}
In a depth-$d$ $\QAC$-circuit, if the $1$-qubit gate $G_0^{(d+1/2)}$ in layer~$d+\frac{1}{2}$ of the target is semiclassical, then we say that the target is \emph{pass-through}.
\end{defn}

\begin{lemma}\label{lem:target-cant-pass}
For any $n\ge 1$ and $d\ge 2$, let $C$ be a depth-$d$ $\QAC$-circuit that $\ket{\alpha}$-computes $\parity_n$, for some state $\ket{\alpha}$.  If $C$'s target is either pass-through or does not encounter a multiqubit $\ctrlZ{}$-gate on layer~$d$, then there exists a depth-$(d-1)$ $\QAC$-circuit that $\ket{\alpha}$-computes $\parity_n$.
\end{lemma}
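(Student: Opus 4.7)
The plan is to show that in either case of the hypothesis, the state of the full register just after the single-qubit layer~$d-\tfrac{1}{2}$ factors cleanly across the target vs.\ non-target qubits, with the target in a state determined by $\oplus x$ up to the fixed $1$-qubit unitary $G_0^{(d+1/2)}$. Once that is established, the desired depth-$(d-1)$ circuit is obtained by deleting the final multi-qubit layer~$d$ and the final single-qubit layer~$d+\tfrac{1}{2}$, and absorbing $G_0^{(d+1/2)}$ into $G_0^{(d-1/2)}$.

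First, let $\ket{\omega_x}$ denote the state of the register after layer~$d$ of $C$ on input $\ket{0}\otimes\ket{x}\otimes\ket{\alpha}$. Since layer~$d+\tfrac{1}{2}$ is a tensor product of $1$-qubit gates and we are told
$G^{(d+1/2)}\ket{\omega_x} = \ket{\oplus x}\otimes\ket{\p_x}$,
inverting that layer immediately gives
\[
\ket{\omega_x} \;=\; \bigl((G_0^{(d+1/2)})^{*}\ket{\oplus x}\bigr)\otimes\ket{\eta_x}
\]
for some $(r{-}1)$-qubit state $\ket{\eta_x}$. Write $\ket{b_x} := (G_0^{(d+1/2)})^{*}\ket{\oplus x}$ for this target factor.

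Next, I would show that the state $\ket{\phi_x}$ just after layer~$d-\tfrac{1}{2}$ also factors as $\ket{b_x}\otimes(\text{non-target state})$. If the target does not encounter a multi-qubit $\ctrlZ{}$ at layer~$d$, then $G^{(d)} = I_0\otimes V$ for some unitary $V$ on the non-target qubits, and inverting it through $\ket{\omega_x}$ preserves the target factor trivially. In the pass-through case, $G_0^{(d+1/2)}$ is semiclassical, so $\ket{b_x}$ equals a computational basis state $\ket{c_x}$ up to a global phase, where $c_x\in\{\oplus x,\overline{\oplus x}\}$. If layer~$d$ applies some $\ctrlZ{S}$ with $0\in S$ and $|S|\ge 2$ on the target, a direct computation (using that $\ctrlZ{S}$ is diagonal in the computational basis) shows $\ctrlZ{S}$ acts as the identity on $\ket{\omega_x}$ when $c_x=0$ and as $I_0\otimes\ctrlZ{S\setminus\{0\}}$ on $\ket{\omega_x}$ when $c_x=1$; either way the target factor survives inversion through $G^{(d)}$.

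Given the factorization of $\ket{\phi_x}$, define $C'$ by taking the first $d-1$ multi-qubit layers of $C$ together with the interleaved single-qubit layers $\tfrac{1}{2},\tfrac{3}{2},\ldots,d-\tfrac{1}{2}$, and replacing $G_0^{(d-1/2)}$ by $G_0^{(d+1/2)}G_0^{(d-1/2)}$. Then $C'$ has depth $d-1$, and on input $\ket{0}\otimes\ket{x}\otimes\ket{\alpha}$ the state just after its final single-qubit layer is $\bigl(G_0^{(d+1/2)}\ket{b_x}\bigr)\otimes(\text{non-target state}) = \ket{\oplus x}\otimes(\text{non-target state})$, so $C'$ $\ket{\alpha}$-computes $\oplus_n$. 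The only real obstacle is verifying the factorization of $\ket{\phi_x}$ in the pass-through case when the target is hit by a nontrivial $\ctrlZ{S}$ on layer~$d$; the crux is that a $\ctrlZ{}$-gate on a qubit already in a definite computational basis state decouples from that qubit and simplifies to a $\ctrlZ{}$-gate (or identity) on the remaining controls, which is exactly what the semiclassicality of $G_0^{(d+1/2)}$ guarantees.
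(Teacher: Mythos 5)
Your proof is correct and follows essentially the same two-case argument as the paper's: invert the final single-qubit layer to factor the state after layer~$d$, observe that the target's factor survives inversion through layer~$d$ (trivially when no multiqubit gate touches the target, and via the simplification of a diagonal $\ctrlZ{}$-gate on a qubit in a computational basis state in the pass-through case), then delete layer~$d$ and fold $G_0^{(d+1/2)}$ into $G_0^{(d-1/2)}$. One small nit: in the first case the paper allows $G_0^{(d)}$ to be a width-one $\ctrlZ{}$-gate, i.e.\ a $Z$, so you should collapse $G_0^{(d+1/2)}G_0^{(d)}G_0^{(d-1/2)}$ into the target's final single-qubit gate rather than assuming $G^{(d)} = I_0\otimes V$.
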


\begin{proof}
Fix an initial ancilla state $\ket{\alpha}$ that witnesses $C$ $\ket{\alpha}$-computing $\parity_n$.
For any classical input $x$ combined with $\ket{\alpha}$, the final state of the target (qubit~$0$ after layer~$d+1/2$) is $\ket{b}$ unentangled with any other qubits, where $b := \oplus x$.  We have two cases:

\paragraph{Case~1: $C$'s target does not encounter a multiqubit $\ctrlZ{}$-gate on layer $d$.}  $G_0^{(d)}$ is then a $1$-qubit gate---either $I$ or $Z$.  Thus the final target state is not affected by any other non-target gates beyond layer $d-1$.  Let $C'$ be the depth~$(d-1)$ circuit obtained by removing all these gates and collapsing $G_0^{(d-1/2)}$, $G_0^{(d)}$, and $G_0^{(d+1/2)}$ into a single gate.  The final state of the target is thus the same with $C'$ as with $C$, and so $C'$ $\ket{\alpha}$-computes $\oplus_n$.

\paragraph{Case~2: $C$'s target is pass-through.}
Since qubit~$0$ is pass-through by assumption, the target just after layer~$d$ is in an unentangled computational basis state $\ket{\p_b}$ that equals $\left(G_0^{(d+1/2)}\right)^*\ket{b}$ up to a phase factor (which can be absorbed by the state of the other qubits).  Thus if $G_0^{(d)}$ is a multiqubit $\ctrlZ{}$-gate, it either disappears or simplifies to a $\ctrlZ{}$-gate not acting on the target, depending on $b$.  In either case, the (unentangled) state of the target is unchanged across layer~$d$.  Let $C'$ be the depth-$(d-1)$ circuit obtained from $C$ by removing all gates on layer~$d$, removing all non-target gates on layer~$d+1/2$, and combining $G_0^{(d+1/2)}$ with $G_0^{(d-1/2)}$.  Then $C'$ $\ket{\alpha}$-computes $\parity_n$.
\end{proof}

The following lemma is a corollary to Lemma~\ref{lem:target-cant-pass}.

\begin{lemma}\label{lem:target-cant-pass-2}
In any depth-$2$ $\QAC$-circuit weakly computing $\parity_n$ for $n\ge 2$, \ $G_0^{(2)}$ is a multiqubit $\ctrlZ{}$-gate, and the target is not pass-through.
\end{lemma}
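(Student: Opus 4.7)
The plan is to view this lemma as a direct contrapositive corollary obtained by chaining Lemma~\ref{lem:target-cant-pass} with Lemma~\ref{lem:single-layer}. The two hypotheses to be ruled out—that $G_0^{(2)}$ is not a multiqubit $\ctrlZ{}$-gate, and that the target is pass-through—are precisely the two trigger conditions of Lemma~\ref{lem:target-cant-pass}, and each of them allows us to collapse depth $2$ down to depth $1$, contradicting Lemma~\ref{lem:single-layer}.

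More concretely, I would start by unpacking ``weakly computes'' via Definition~\ref{def:compute-f}: there exists some ancilla state $\ket{\alpha}$ such that the given depth-$2$ circuit $C$ actually $\ket{\alpha}$-computes $\oplus_n$. This is important because Lemma~\ref{lem:target-cant-pass} is stated for $\ket{\alpha}$-computation, not for weak computation, so we need to fix a specific witness $\ket{\alpha}$ before invoking it.

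Next, I would argue by contradiction. Suppose either (i) $G_0^{(2)}$ fails to be a multiqubit $\ctrlZ{}$-gate, or (ii) the target of $C$ is pass-through (in the sense of the definition just before Lemma~\ref{lem:target-cant-pass}). In case~(i), $G_0^{(2)}$ is forced to be a $1$-qubit gate (either $I$ or a $1$-qubit $\ctrlZ{}$, i.e.\ a $Z$), so the target does not encounter a multiqubit $\ctrlZ{}$-gate on layer~$2$; case~(ii) is just the pass-through branch. Either way, the hypothesis of Lemma~\ref{lem:target-cant-pass} (with $d=2$) is satisfied, and we obtain a depth-$1$ $\QAC$-circuit $C'$ that $\ket{\alpha}$-computes $\oplus_n$. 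In particular, $C'$ weakly computes $\oplus_n$ with witness $\ket{\alpha}$.

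Finally, since $n\ge 2$, Lemma~\ref{lem:single-layer} rules out any depth-$1$ $\QAC$-circuit that weakly computes $\oplus_n$, so the existence of $C'$ is a contradiction. There is no real obstacle here: the lemma is a packaging result meant to be used as a convenient starting point in the main theorem, and both logical branches reduce uniformly to the same contradiction with Lemma~\ref{lem:single-layer}. The only small care required is to keep the ``weakly computes'' versus ``$\ket{\alpha}$-computes'' distinction straight when feeding the witness state into Lemma~\ref{lem:target-cant-pass}.
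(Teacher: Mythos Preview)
Your proposal is correct and is exactly the paper's approach: the paper's proof is the single line ``By Lemmas~\ref{lem:single-layer} and~\ref{lem:target-cant-pass},'' and you have simply unpacked that into its two branches, correctly handling the passage from weak computation to $\ket{\alpha}$-computation.
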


\begin{proof}
By Lemmas~\ref{lem:single-layer} and \ref{lem:target-cant-pass}.
\end{proof}

In the sequel, we assume that $C$ is an $(n+m+1)$-qubit depth-$2$ $\QAC$-circuit weakly computing~$\oplus_n$ for some $n\ge 3$ (cf.\ Definition~\ref{def:compute-f}).  By Lemma~\ref{lem:target-cant-pass-2}, $G_0^{(2)} = \ctrlZ{S}$ for some set $S$ that includes the target and at least one other qubit, and the target is not pass-through.  The next few lemmas restrict the topology of $C$ further.

\begin{lemma}\label{lem:no-three-input-qubits}
No gate on layer~$1$ can touch more than two input qubits.
\end{lemma}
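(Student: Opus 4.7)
The plan is a proof by contradiction. Suppose some layer-$1$ gate $G = \ctrlZ{S}$ touches three input qubits, and WLOG $\{1,2,3\} \subseteq S$. Since each qubit lies in at most one gate per layer, no other layer-$1$ gate touches qubits $1, 2,$ or $3$; and by Lemma~\ref{lem:target-cant-pass-2}, $G_0^{(2)} = \ctrlZ{S_0}$ is a multiqubit $\ctrlZ{}$-gate. I will apply Lemma~\ref{lem:kill-parity} with $r = 3$ and $k = 1$ (using $1 < 2^{r-1} = 4$) to the unitary $G_1^{(0.5)} \otimes G_2^{(0.5)} \otimes G_3^{(0.5)}$, obtaining for each $b \in \{0,1\}$ a pure-parity state $\ket{\psi_b}$ of parity $b$ on qubits $\{1, 2, 3\}$ that turns off $G$ when the circuit runs on the input $\ket{0}_0 \otimes \ket{\psi_b}_{\{1,2,3\}} \otimes \ket{0^{n-3}} \otimes \ket{\alpha}$, where $\ket{\alpha}$ is the witness ancilla state.

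With $G$ turned off and no other layer-$1$ gate touching $\{1, 2, 3\}$, the state at the start of layer~$2$ factors as $U\ket{\psi_b} \otimes \ket{\chi}$, where $U := G_{\{1,2,3\}}^{(1.5)} G_{\{1,2,3\}}^{(0.5)}$ is a product of single-qubit gates and $\ket{\chi}$ is a state on the other qubits that does not depend on $b$.  Since the input has total parity $b$, the target's final state must be $\ket{b}$ unentangled from the rest.  Any layer-$2$ gate other than $G_0^{(2)}$ has support disjoint from $S_0$ and so commutes with $G_0^{(2)}$; acting entirely on non-target qubits, it also commutes with the partial trace over those qubits.  Hence the target's marginal after all of layer~$2$ equals the target's marginal after $\ctrlZ{S_0}$ alone acting on $U\ket{\psi_b} \otimes \ket{\chi}$, and this marginal must equal $\ket{b'}\bra{b'}$, where $\ket{b'} := (G_0^{(2.5)})^* \ket{b}$.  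Since the full state is pure, this forces $\ctrlZ{S_0}(U\ket{\psi_b} \otimes \ket{\chi}) = \ket{b'}_0 \otimes \ket{\p''_b}$ for some $\ket{\p''_b}$ on the non-target qubits.

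Setting $r := n + m + 1$, I then invoke Lemma~\ref{lem:simplify} with partitions $\{A, B\} := \{\{1,2,3\}, [r]\setminus\{1,2,3\}\}$ and $\{C, D\} := \{\{0\}, [r]\setminus\{0\}\}$.  It yields that $\ctrlZ{S_0}$ either disappears on $U\ket{\psi_b} \otimes \ket{\chi}$, or simplifies to $\ctrlZ{T}$ for some $T \subseteq S_0$ contained in one of $A, B, C, D$.  A direct case analysis then completes the argument: when $\ctrlZ{S_0}$ disappears, or $T \subseteq A$, or $T \subseteq D$, the resulting action leaves the target untouched; when $T \subseteq B$ (even if $0 \in T$) or $T \subseteq C$, the action on the target depends only on the $b$-independent $\ket{\chi}$.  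In every case the target's marginal turns out to be $b$-independent, contradicting the requirement that it equal $\ket{b'}\bra{b'}$, which varies with $b$ because $\ket{0'}$ and $\ket{1'}$ are orthogonal.

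The principal obstacle will be the reduction in the second paragraph---that layer-$2$ gates other than $G_0^{(2)}$ can be safely ignored when computing the target's reduced state.  This relies on commuting such gates both past $G_0^{(2)}$ (disjoint supports) and past the partial trace over non-target qubits, and must be set up carefully so that no $b$-dependent entanglement they create on non-target qubits can influence the target.  Once this reduction is in hand, the five subcases produced by Lemma~\ref{lem:simplify} are all handled uniformly, each collapsing to the observation that the target's final marginal cannot depend on $b$.
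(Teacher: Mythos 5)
Your setup through the factorization $U\ket{\psi_b}\otimes\ket{\chi}$ and the deduction that $\ctrlZ{S_0}(U\ket{\psi_b}\otimes\ket{\chi})$ must separate at $\{\{0\},\overline{\{0\}}\}$ are fine, and the application of Lemma~\ref{lem:simplify} is legitimate. The gap is in the concluding case analysis: Lemma~\ref{lem:simplify} is invoked separately for $b=0$ and $b=1$, and \emph{which} alternative occurs---and which set $T$ results---can itself depend on $b$. So ``in every case the target's marginal is $b$-independent'' does not follow. Concretely, nothing you have established rules out that $\ctrlZ{S_0}$ disappears on $U\ket{\psi_0}\otimes\ket{\chi}$ but simplifies to $\ctrlZ{\{0\}}=Z$ on $U\ket{\psi_1}\otimes\ket{\chi}$ (for instance, if $U\ket{\psi_1}$ is all $1$'s on $S_0\cap\{1,2,3\}$ while $U\ket{\psi_0}$ has no amplitude there, qubit~$0$ of $\ket{\chi}$ is $(\ket{0}+\ket{1})/\sqrt2$, and the remaining qubits of $S_0$ are $\ket{1}$ in $\ket{\chi}$); then the two target marginals are $\ket{+}\bra{+}$ and $\ket{-}\bra{-}$, which is exactly what a correct parity circuit would produce, and no contradiction arises. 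The same failure occurs whenever $0\in T_b$ for one value of $b$ and not the other, or when $T_0\ne T_1$ both contain $0$. Note that the paper's own Lemma~\ref{lem:target-connects-only-one-input} runs into precisely this dichotomy (``$G_0^{(2)}$ must disappear for one of $b$'s values but not the other'') and needs genuine extra work---Lemma~\ref{lem:no-zero-divisors} together with a further input qubit lying outside the relevant block---to finish; your write-up has no analogous step, and since this lemma must hold already for $n\ge 3$, a spare input qubit is not guaranteed.

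The paper's proof sidesteps the problem with a different case split, on $S_0$ itself. If $S_0$ misses one of qubits $1,2,3$, say qubit~$3$, then a $2$-qubit pure-parity state on qubits $1,2$ killing the layer-$1$ gate already disconnects qubit~$3$ from the target entirely. If instead $\{1,2,3\}\subseteq S_0$, Lemma~\ref{lem:kill-parity} is applied with \emph{two} unitaries, $U_1:=G^{(0.5)}_{\{1,2,3\}}$ and $U_2:=G^{(1.5)}_{\{1,2,3\}}U_1$ (note $2<2^{3-1}$), so that the chosen pure-parity-$b$ state turns off the layer-$1$ gate \emph{and} $G_0^{(2)}$ simultaneously; the target is then disconnected from qubits $1$--$3$ outright and the contradiction is immediate. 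You would need to incorporate one of these mechanisms (or an argument via Lemma~\ref{lem:no-zero-divisors}) to make your final step go through.
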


\begin{proof}
Suppose some layer~$1$ gate touches at least three input qubits.  WLOG, $G_1^{(1)} = \ctrlZ{T}$ for some $T$ such that $\{1,2,3\}\subseteq T$.  We let $\ket{\alpha}$ be the initial state of the $m$ ancilla qubits.  We consider two cases and apply Lemma~\ref{lem:kill-parity} to each:

\paragraph{Case~1: $G_0^{(2)}$ does not touch one of the qubits~$1$, $2$, or $3$.}  WLOG, $3\notin S$.  By Lemma~\ref{lem:kill-parity} (with $r := 2$ and $U_1 := G_{\{1,2\}}^{(0.5)}$), we can choose an initial pure parity state $\ket{\psi}\in\cH_{\{1,2\}}$ (of pure parity~$0$, say) of qubits~$1$ and $2$ that turns $G_1^{(1)}$ off, regardless of the initial state of the other qubits.  But then, qubit~$3$ has no connection to the target at all, and so the final state of the target is independent of the third input bit, regardless of the rest of the input bits and the initial state of the ancilla.  Particularly, for any $b\in\two$, let the initial state of the circuit be
\[ \ket{0}\otimes\ket{\psi}\otimes\ket{b}\otimes\ket{0}^{\otimes(n-3)}\otimes\ket{\alpha}\;. \]
(We set the third input qubit to $\ket{b}$ and input qubits $4,\ldots\,$, if any, to $\ket{0}$.)  Then the final state of the circuit is of the form $\ket{0}\otimes\ket{\tau_b}$, where $\ket{\tau_b}$ is the final state of the non-target qubits.  $\ket{\tau_b}$ may depend on $b$, but the final state of the target does not, and thus $C$ does not weakly compute $\otimes_n$.

\paragraph{Case~2: $G_0^{(2)}$ touches all of the qubits~$1$, $2$, and $3$ (i.e., not Case~1).}  That is, $\{1,2,3\}\subseteq S$. By Lemma~\ref{lem:kill-parity} (with $r := 3$, \ $U_1 := G_{\{1,2,3\}}^{(0.5)}$, and $U_2 := G_{\{1,2,3\}}^{(1.5)}U_1$), for each $b\in\two$ we can choose an initial state $\ket{\psi_b}\in\cH_{\{1,2,3\}}$ with pure parity~$b$ on qubits~$1$, $2$, and $3$ that turns $G_1^{(1)}$ and $G_0^{(2)}$ \emph{both} off, regardless of the initial state of the other qubits.  Thus given the initial state $\ket{0}\otimes\ket{\psi_b}\otimes\ket{0}^{\otimes(n-3)}\otimes\ket{\alpha}$, the target has no connection to the first three input qubits, so its final value cannot depend on $b$.  Since the initial state $\ket{\psi_b}\otimes\ket{0}^{\otimes(n-3)}$ of the input qubits has pure parity~$b$, \ $C$ does not weakly compute $\otimes_n$.
\end{proof}

\begin{lemma}\label{lem:target-connects-only-one-input}
$G_0^{(1)}$ can only touch at most one input qubit.
\end{lemma}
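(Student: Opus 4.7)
The plan is to suppose for contradiction that $G_0^{(1)} = \ctrlZ{T_0}$ touches two input qubits $a$ and $b$, so by Lemma~\ref{lem:no-three-input-qubits}, $T_0 \cap \text{inputs} = \{a, b\}$. First I would apply Lemma~\ref{lem:kill-parity} with $r = 2$, $k = 1$, and $U_1 = G_a^{(0.5)} \otimes G_b^{(0.5)}$ to obtain, for each $\beta \in \two$, a pure-parity-$\beta$ state $\ket{\psi_\beta^{a,b}}$ on $\{a, b\}$ whose image under $U_1$ has zero $\ket{11}$-amplitude. Initializing qubits $\{a, b\}$ to $\ket{\psi_\beta^{a,b}}$ and the remaining input qubits to $\ket{0}$ gives an input of pure parity $\beta$, and $G_0^{(1)}$ is turned off. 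So, just before $G_0^{(2)}$, the target sits in the $\beta$-independent state $\ket{t'} := G_0^{(1.5)} G_0^{(0.5)} \ket{0}$, unentangled from the rest of the register.

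Then I split on how $S$ (the qubit set of $G_0^{(2)} = \ctrlZ{S}$) meets $\{a, b\}$. If $S \cap \{a, b\} = \emptyset$, then neither $a$ nor $b$ ever touches the target: no other layer-$1$ gate may share a qubit with $G_0^{(1)}$, and any layer-$2$ gate touching $\{a, b\}$ must be disjoint from $G_0^{(2)}$. So the target's final state cannot depend on $\beta$, contradicting weak computation of $\oplus_n$. Otherwise $S$ meets $\{a, b\}$, and I would invoke the Entanglement Lemma (Lemma~\ref{lem:S-entangled}) applied to $G_0^{(2)}$ with $\{A, B\} = \{C, D\} = \{\{0\}, \overline{\{0\}}\}$: both the pre- and post-states of $G_0^{(2)}$ separate across this partition (the former by construction, the latter because the target must end as $\ket{\beta}$ unentangled from the rest), and $S$ intersects both sides, so Lemma~\ref{lem:simplify} forces $\ctrlZ{S}$ either to disappear on the pre-state or to simplify to $\ctrlZ{T}$ with $T \subseteq \{0\}$ or $T \subseteq \overline{\{0\}}$.

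A direct case-by-case analysis of each such simplification shows that the target's final state must equal, up to a global phase, one of $V\ket{t'}$, $VZ\ket{t'}$, or $V\ket{1}$, where $V = G_0^{(2.5)}$—all independent of $\beta$. For consistency with $\oplus_n$-computation, two of these must be the distinct basis states $\ket{0}$ and $\ket{1}$, which pins $\ket{t'}$ and $V$ into a very specific balanced form and forces, for one value of $\beta$, the disappearance branch, and for the other, the $\ctrlZ{\{0\}}$-simplification branch. The latter requires qubit $a$ (and $b$, in the case $\{a, b\} \subseteq S$) to be in $\ket{1}$ just before $G_0^{(2)}$; tracing this back through $G_a^{(1.5)} G_a^{(0.5)}$ forces both $\ket{\psi_0^{a,b}}$ and $\ket{\psi_1^{a,b}}$ to be separable—hence computational-basis—states whose qubit-$a$ parts are determined by the inverse of $G_a^{(1.5)} G_a^{(0.5)}$. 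If the dimension-$\ge 1$ subspace of valid pure-parity-$\beta$ states guaranteed by Lemma~\ref{lem:kill-parity} contains any non-basis state, picking one immediately yields a contradiction with this forced structure; otherwise, the subspace collapses to a single basis-state line, and the resulting semiclassical constraints on $G_a^{(1.5)} G_a^{(0.5)}$, $G_b^{(0.5)}$, and $V$ can be ruled out by combining Lemma~\ref{lem:no-zero-divisors}, the non-pass-through property (Lemma~\ref{lem:target-cant-pass-2}), and correctness of the circuit on additional classical inputs such as single-bit flips outside $\{a, b\}$. The main obstacle is this final degenerate corner, where the forced semiclassical structure almost yields a valid depth-$2$ parity circuit and only a careful compounding of the earlier structural lemmas closes the loop.
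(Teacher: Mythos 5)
Your opening moves match the paper's: you use Lemma~\ref{lem:kill-parity} to turn off $G_0^{(1)}$ with a pure-parity-$\beta$ state on the two input qubits it touches, observe that the target then reaches layer~$2$ in a $\beta$-independent unentangled state $\ket{t'}$, and use the separability of the states on both sides of layer~$2$ together with Lemma~\ref{lem:simplify} and the non-pass-through property (Lemma~\ref{lem:target-cant-pass-2}) to conclude that $G_0^{(2)}$ must disappear for one value of $\beta$ and simplify to $\ctrlZ{\{0\}}$ for the other. Up to that point you are on the paper's track.

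The gap is in how you convert this dichotomy into a contradiction. At that stage there is none yet: a circuit could consistently behave exactly this way on the two particular states supplied by Lemma~\ref{lem:kill-parity}. You try to extract a contradiction from the internal structure of $\ket{\psi^{a,b}_\beta}$ (forced separability, forced basis form, forced semiclassical $1$-qubit gates), but with $r=2$ and $k=1$ the lemma only guarantees $\dim(\calP_\beta\cap\cZ_1)\ge 2^{r-1}+(2^r-k)-2^r=1$, and generically this intersection is exactly one-dimensional and spanned by a computational basis state. In that situation everything you ``force'' is perfectly consistent, so your ``degenerate corner'' is in fact the generic case, and your resolution of it is an unexecuted wish-list rather than an argument. (Also, the forced $\ket{1}$-structure applies only to the $\beta$ on the simplification branch, so ``forces both $\ket{\psi_0^{a,b}}$ and $\ket{\psi_1^{a,b}}$ to be separable'' overreaches.) What actually closes the proof in the paper is Lemma~\ref{lem:no-zero-divisors} applied to the partition $\{\{0,1,2\},\overline{\{0,1,2\}}\}$ of the state $\ket{\p_{b_0}}$ on which $G_0^{(2)}$ disappears: either the disappearance is insensitive to the qubits outside $\{0,1,2\}$, so flipping input qubit~$3$ (available since $n\ge 3$) changes the parity without changing the target; or it is insensitive to the qubits in $\{0,1,2\}$, so $G_0^{(2)}$ also disappears for the other parity. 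Both branches contradict weak computation of $\oplus_n$. This is the missing idea: the ``single-bit flips outside $\{a,b\}$'' you mention in passing, driven by Lemma~\ref{lem:no-zero-divisors}, are the heart of the argument, not a patch for a corner case.
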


\begin{proof}
Suppose some $G_0^{(1)} = \ctrlZ{T}$, where $T$ includes the target and at least two other input qubits.  WLOG, $\{0,1,2\}\subseteq T$.  By Lemma~\ref{lem:kill-parity} (with $r:=2$ and $U_1 := G_{\{1,2\}}^{(0.5)}$), for each $b\in\two$, we can choose an initial state $\ket{\psi_b}$ of pure parity~$b$ on qubits~$1$ and $2$ that turns $G_0^{(1)}$ off, regardless of the initial state of the other qubits.  For each $b$, set the initial state of the other input qubits to all $\ket{0}$, resulting in an initial state
\[ \ket{\psi_b'} := \ket{0}\otimes\ket{\psi_b}\otimes\ket{0}^{\otimes(n-2)}\otimes\ket{\alpha} \]
where $\ket{\alpha}$ is the initial state of the ancilla qubits.  Since $\ket{\psi_b'}$ turns $G_0^{(1)}$ off, the target is not connected to any other qubits before layer~$2$.  Applying $G^{(1.5)}G^{(1)}G^{(0.5)}$ to $\ket{\psi_b'}$ thus results in a state
\[ \ket{\p_b} := \ket{\p}_{\{0\}}\otimes\ket{\p_b}_{\{1,2\}}\otimes\ket{\p}_B \]
right before layer~$2$,
where $B := \overline{\{0,1,2\}}$, \ $\ket{\p}_{\{0\}} := G_0^{(1.5)}G_0^{(0.5)}\ket{0}$ independent of $b$, \ $\ket{\p_b}_{\{1,2\}} := G_{\{1,2\}}^{(1.5)}G_{\{1,2\}}^{(0.5)}\ket{\psi_b}$, and $\ket{\p}_B$ is the state of the qubits in $B$ and is independent of $b$.  Figure~\ref{fig:simplified-circuit}
\begin{figure}
\begin{center}
\begin{quantikz}
\lstick{target $\ket{0}$}                         & \gate{G_0^{(0.5)}} & \ctrl{1}\gategroup[6,steps=1,style={dashed,rounded corners,fill=blue!20,inner xsep=2pt},background]{$G_0^{(1)}$}       & \gate{G_0^{(1.5)}} & \qw \\
\lstick[wires=2]{$\ket{\psi_b}$}           & \gate{G_1^{(0.5)}} & \ctrl{1}       & \gate{G_1^{(1.5)}} & \qw \\
                                           & \gate{G_2^{(0.5)}} & \ctrl{1}       & \gate{G_2^{(1.5)}} & \qw \\
\lstick[wires=3]{$\ket{0}^{\otimes(n-2)}\otimes\ket{\alpha}$}
                                           & \gate{G_3^{(0.5)}} & \ctrl{1}       & \gate{G_3^{(1.5)}} & \qw \\
                                           & \vdots             & \vdots         & \vdots             &     \\
                                           & \gate{G_{m+n}^{(0.5)}} & \ctrl{-1}      & \gate{G_{m+n}^{(1.5)}} & \qw
\end{quantikz}
\hspace{5mm}=\hspace{3mm}
\begin{quantikz}
& \gate{G_0^{(1.5)}G_0^{(0.5)}} & \qw\rstick{$\ket{\p}_{\{0\}}$} \\
& \gate{G_1^{(1.5)}G_1^{(0.5)}} & \qw\rstick[wires=2]{$\ket{\p_b}_{\{1,2\}}$} \\
& \gate{G_2^{(1.5)}G_2^{(0.5)}} & \qw \\
& \gate{G_3^{(1.5)}G_3^{(0.5)}} & \qw\rstick[wires=3]{$\ket{\p}_B$} \\
& \vdots                        &     \\
& \gate{G_{m+n}^{(1.5)}G_{m+n}^{(0.5)}} & \qw
\end{quantikz}
\end{center}
\caption{The portion of a typical circuit $C$ before layer~$2$.  The top line is qubit~$0$ (the target).  $\ket{\psi_b}$ on qubits~$1$ and $2$ turns $G_0^{(1)}$ off.  Here, $G_0^{(1)}$ is depicted as touching all qubits, but this need not be the case.}\label{fig:simplified-circuit}
\end{figure}
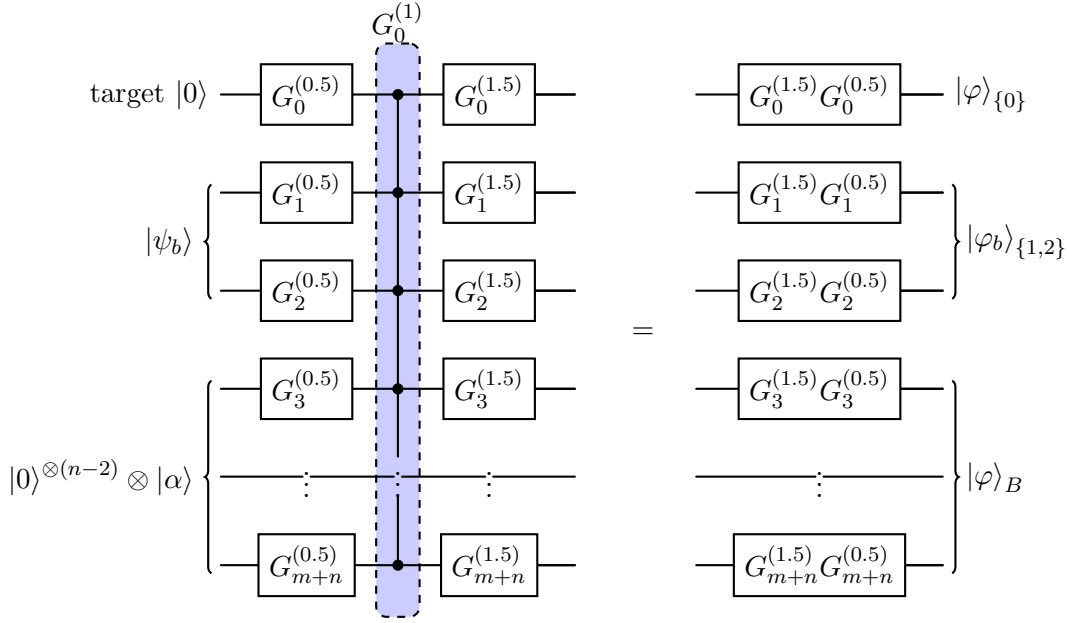
shows in a typical case how the circuit $C$ simplifies before layer~$2$ on initial state $\ket{\psi_b'}$.  Note that $\ket{\p_b}$ separates at $\{\{0\},\overline{\{0\}}\}$.

Since the initial state $\ket{\psi_b}\otimes\ket{0}^{\otimes(n-2)}$ of the input qubits has pure parity~$b$, the final state of $C$ must be of the form $\ket{b}\otimes\ket{\tau}$ for some $\ket{\tau}\in\cH_{\overline{\{0\}}}$, and thus separates at $\{\{0\},\overline{\{0\}}\}$.  It follows by running $\ket{b}\otimes\ket{\tau}$ backwards through layer $2.5$ (which contains only $1$-qubit gates) that the state $\ket{\p_b'}$ of the qubits immediately after layer~$2$ also separates at $\{\{0\},\overline{\{0\}}\}$.  Therefore, the states $\ket{\p_b}$ and $\ket{\p_b'}$ on either side of layer~$2$ both separate at $\{\{0\},\overline{\{0\}}\}$, and in particular, both states are $S$-separable.

Now applying Lemma~\ref{lem:simplify} (with $C := \{0\}$ and $D := \overline{\{0\}}$) we get that on state $\ket{\p_b}$, \ $G_0^{(2)}$ either (1) disappears, (2) simplifies to $\ctrlZ{\{0\}}$, or (3) simplifies to $\ctrlZ{\overline{\{0\}}\cap S}$.  Case~(3) is impossible because the target is not pass-through by Lemma~\ref{lem:target-cant-pass-2} and so its state is a proper superposition of $\ket{0}$ and $\ket{1}$ at layer~$2$.  Thus there are only two possibilities for $G_0^{(2)}$ given $b$: either $G_0^{(2)}$ disappears or simplifies to $\ctrlZ{\{0\}}$, which is the $1$-qubit $Z$-gate.  Therefore only two final states of the target are possible on initial state $\ket{\psi_b'}$:
\[ \ket{b} = \begin{cases}
G_0^{(2.5)}\ket{\p}_{\{0\}} & \mbox{if $G_0^{(2)}$ disappears on $\ket{\p_b}$,} \\
G_0^{(2.5)}Z\ket{\p}_{\{0\}} & \mbox{otherwise.}
\end{cases} \]
If follows that $G_0^{(2)}$ must disappear for one of $b$'s values---say $b_0$---but not the other one.  Thus we have that $G_0^{(2)}$ disappears on state $\ket{\p_{b_0}} = \ket{\p}_{\{0\}}\otimes\ket{\p_{b_0}}_{\{1,2\}}\otimes\ket{\p}_B$.  Noting that $\ket{\p_{b_0}}$ separates at $\{\{0,1,2\},B\}$, we now apply Lemma~\ref{lem:no-zero-divisors} (with $A := \{0,1,2\}$) to see that $G_0^{(2)}$ disappears --- and hence $\ket{b} = G_0^{(2.5)}\ket{\p}_{\{0\}}$ --- on $\ket{\p}_{\{0\}}\otimes\ket{\p_{b_0}}_{\{1,2\}}\otimes\ket{\sigma}_B$ for \emph{any} state $\ket{\sigma}_B\in\cH_B$.  That implies that the final state of the target does not depend on the input qubit~$3$, and so $C$ cannot weakly compute $\oplus_n$.
\end{proof}

We are now ready to prove Theorem~\ref{thm:depth-2-unclean}.  The idea of the proof is to show that $G_0^{(2)}$ must ``act classically'' on most of the input qubits.

\begin{proof}[Proof of Theorem~\ref{thm:depth-2-unclean}]
Suppose $C$ is a depth-$2$ $\QAC$-circuit that computes~$\oplus_n$ for some $n\ge 4$.  By Lemma~\ref{lem:target-cant-pass-2}, $C$'s target is not pass-through, and $G_0^{(2)} = \ctrlZ{S}$ for some $S$ that contains $0$ and at least one other qubit.  If some $\ctrlZ{}$-gate touches the target on layer~$1$, then let $T$ be such that $G_0^{(1)} = \ctrlZ{T}$; otherwise, set $T := \{0\}$.  By Lemma~\ref{lem:target-connects-only-one-input}, $T$ can include at most one input qubit.  ($T$ may contain any number of ancilla qubits, however.)  We can assume WLOG that $T\cap\{2,\ldots,n\} = \emptyset$.  For any $x\in\two^{n-1}$, define the initial state
\[ \ket{\psi_x} := \ket{0}\otimes\ket{0}\otimes\ket{x}\otimes\ket{0}^{\otimes m} \]
obtained by setting input qubit~$1$ to $\ket{0}$ and the rest of the input qubits to $\ket{x}$ (and the target and all ancilla qubits to $\ket{0}$).  Note that $\ket{\psi_x}$ is the tensor product of $1$-qubit states and hence separates at every partition of the qubits.  Let
\[ \ket{\p_x} := G^{(1.5)}G^{(1)}G^{(0.5)}\ket{\psi_x} \]
be the result of running the state $\ket{\psi_x}$ through layers~$0.5$--$1.5$ of the circuit.  It is evident that $\ket{\p_x}$ separates at $\{T,\overline{T}\}$.

\begin{claim}
Given initial state $\ket{\psi_x}$ for $x\in\two^{n-1}$, \ $G_0^{(2)}$ either disappears or simplifies to $\ctrlZ{S\cap T}$.
\end{claim}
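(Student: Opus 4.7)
My approach is to impose the target-separation of $\ctrlZ{S}\ket{\p_x}$ and combine it with the $\{T,\overline{T}\}$-separation of $\ket{\p_x}$ to pin down the action of $\ctrlZ{S}$.

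First, since $C$ computes $\oplus_n$ and the input qubits of $\ket{\psi_x}$ carry parity $\oplus x$, the final state has the form $\ket{\oplus x}_0\otimes\ket{\tau_x}$.  Running backward through layer~$2.5$, which is a tensor product of single-qubit gates, the state $\ket{\p_x'}:=G_0^{(2)}\ket{\p_x}$ separates at $\{\{0\},\overline{\{0\}}\}$ with target factor $\bigl(G_0^{(2.5)}\bigr)^{*}\ket{\oplus x}$.  By Lemma~\ref{lem:target-cant-pass-2} the target is not pass-through, so $G_0^{(2.5)}$ is not semiclassical; consequently this target factor is a genuine superposition whose $\ket{0}$- and $\ket{1}$-amplitudes are both nonzero.

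Next, I split $\ket{\alpha}_T = \ket{0}_0\otimes\ket{a_0}+\ket{1}_0\otimes\ket{a_1}$ and compute
$\ctrlZ{S}\ket{\p_x} = \ket{0}_0\otimes\ket{a_0}\otimes\ket{\beta} + \ket{1}_0\otimes\ctrlZ{S\setminus\{0\}}(\ket{a_1}\otimes\ket{\beta})$.
Matching this against the tensor-product form of $\ket{\p_x'}$ and using that both target amplitudes of $\ket{\p_x'}$ are nonzero forces $\ket{a_0},\ket{a_1}\ne 0$---otherwise the target after layer~$2$ would be a fixed basis state and $G_0^{(2.5)}$ would need to map a basis state to the basis state $\ket{\oplus x}$, contradicting non-semiclassicality---together with a proportionality $\ctrlZ{S\setminus\{0\}}(\ket{a_1}\otimes\ket{\beta}) = \mu\,\ket{a_0}\otimes\ket{\beta}$ for some nonzero scalar $\mu$.

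Third, I expand $\ctrlZ{S\setminus\{0\}} = I - 2P_{S\setminus\{0\}}$ and factor the projector across $T\setminus\{0\}$ and $\overline{T}$ as $P_{(S\cap T)\setminus\{0\}}\otimes P_{S\cap\overline{T}}$.  The proportionality then rearranges to
\[
(\ket{a_1}-\mu\ket{a_0})\otimes\ket{\beta}
\;=\;
2\,P_{(S\cap T)\setminus\{0\}}\ket{a_1}\otimes P_{S\cap\overline{T}}\ket{\beta}.
\]
If the right-hand side vanishes then one of the projected factors is zero, whence $P_S\ket{\p_x}=0$ and $\ctrlZ{S}$ disappears on $\ket{\p_x}$.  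Otherwise both sides are nonzero rank-one tensors; uniqueness of the tensor factorization combined with idempotence of $P_{S\cap\overline{T}}$ yields $\ket{\beta} = P_{S\cap\overline{T}}\ket{\beta}$, i.e., $\ket{\beta}$ is supported only on basis states carrying $1$'s throughout $S\cap\overline{T}$, from which a short direct check gives $\ctrlZ{S}\ket{\p_x} = \ctrlZ{S\cap T}\ket{\p_x}$.

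The main obstacle is handling the degenerate cases $\ket{a_0}=0$ or $\ket{a_1}=0$: ruling these out needs both the $\{\{0\},\overline{\{0\}}\}$-separation of $\ket{\p_x'}$ and non-semiclassicality of $G_0^{(2.5)}$, since otherwise the target factor at the end would be a fixed image of a basis state and could not equal $\ket{\oplus x}$.  An alternative path via Lemma~\ref{lem:simplify} with $(A,B,C,D)=(T,\overline{T},\{0\},\overline{\{0\}})$ would require separately eliminating the $0\notin T'$ outcomes by the same non-basis-state argument, so the direct tensor computation above seems cleaner because it lands exactly on $\ctrlZ{S\cap T}$ rather than merely on some $\ctrlZ{T'}$ with $T'\subseteq S\cap T$.
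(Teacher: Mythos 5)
Your proof is correct, and it takes a genuinely different route from the paper's. The paper proves the Claim by invoking Lemma~\ref{lem:simplify} with the partitions $\{T,\overline{T}\}$ and $\{\{0\},\overline{\{0\}}\}$---so it rests on the Entanglement Lemma (Lemma~\ref{lem:S-entangled}) and ultimately on the irreducibility results of Section~\ref{sec:irred-results}---and then, exactly as you anticipated in your closing remark, uses Lemma~\ref{lem:target-cant-pass-2} to discard the outcomes in which the simplified gate $\ctrlZ{T'}$ misses the target (such an outcome would put the target in the basis state $\ket{1}$ at layer~$2$, forcing $G_0^{(2.5)}$ to be semiclassical). You instead exploit the fact that the two relevant partitions are nested ($\{0\}\subseteq T$, hence $\overline{T}\subseteq\overline{\{0\}}$), which makes the tensor algebra tractable by hand: factoring $\ctrlZ{S\setminus\{0\}}=I-2P_{(S\cap T)\setminus\{0\}}\otimes P_{S\cap\overline{T}}$ and matching the $\ket{0}_0$- and $\ket{1}_0$-components of $\ctrlZ{S}\ket{\p_x}$ against the $\{\{0\},\overline{\{0\}}\}$-separable form of $\ket{\p_x'}$ yields either $P_S\ket{\p_x}=0$ (disappearance) or $P_{S\cap\overline{T}}\ket{\beta}=\ket{\beta}$ (whence $\ctrlZ{S}\ket{\p_x}=\ctrlZ{S\cap T}\ket{\p_x}$). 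Every step checks out, including the use of non-semiclassicality of $G_0^{(2.5)}$ (equivalently of its adjoint) to guarantee both target amplitudes of $\ket{\p_x'}$ are nonzero, which is what rules out the degenerate cases $\ket{a_0}=0$ or $\ket{a_1}=0$ and produces the nonzero proportionality constant $\mu$. What the paper's route buys is uniformity: Lemma~\ref{lem:simplify} handles partitions that genuinely cross, which is where the polynomial machinery is actually needed and which is the form one would want at depth~$3$ and beyond. What your route buys is that, for this particular Claim, the heavy machinery is unnecessary, and you land directly on $\ctrlZ{S\cap T}$ rather than on some $\ctrlZ{T'}$ with $T'\subseteq S\cap T$ that must then be relaxed. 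One cosmetic point: when $S\subseteq T$ your conclusion $\ctrlZ{S}\ket{\p_x}=\ctrlZ{S\cap T}\ket{\p_x}$ is vacuous rather than a ``simplification'' in the strict sense of Definition~\ref{def:simplify}, but this is precisely how the paper states and uses the Claim (its own proof begins ``If $S\subseteq T$ we are done''), so it is not a gap.
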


\begin{proof}[Proof of the Claim]
If $S\subseteq T$ we are done, so assume $S\not\subseteq T$.  By assumption, running $C$ on $\ket{\psi_x}$ results in a state of the form $\ket{b_x}\otimes\ket{\tau}$, where $b_x := \oplus x$ and $\ket{\tau}\in\cH_{\overline{\{0\}}}$ is some state of the non-target qubits.  Running this state backwards through layer~$2.5$ as in the proof of Lemma~\ref{lem:target-connects-only-one-input}, we get that the state $\ket{\p_x'}$ of the qubits just after layer~$2$ separates at $\{\{0\},\overline{\{0\}}\}$ and hence is $S$-separable.  Likewise, $\ket{\p_x}$ is also $S$-separable.  By Lemma~\ref{lem:simplify}, either $G_0^{(2)}$ disappears on $\ket{\p_x}$ or simplifies to $\ctrlZ{S\cap A}$ for some subset $A$ of one of the four sets $T,\overline{T},\{0\},\overline{\{0\}}$.  Since $C$'s target is not pass-through by Lemma~\ref{lem:target-cant-pass-2}, we can assume $0\in A$, and thus $A\subseteq S\cap T$.  This implies the weaker statement that $G_0^{(2)}$ simplifies to $\ctrlZ{S\cap T}$ on $\ket{\p_x}$ in the case where $G_0^{(2)}$ does not disappear.
\end{proof}

Since $\ket{\p_x}$ separates at $\{T,\overline{T}\}$, we can write
\[ \ket{\p_x} = \ket{\p}_T\otimes\ket{\p_x}_{\overline{T}}\;, \]
where $\ket{\p}_T\in\cH_T$ does not depend on $x$ and $\ket{\p_x}_{\overline{T}}\in\cH_{\overline{T}}$.  From the Claim it follows that, given initial state $\ket{\psi_x}$, the qubits in $T$ do not entangle with any other qubits on layer~$2$ of the circuit and so can only be in one of two possible final states after layer~$2.5$:
\[ \ket{\tau_x}_T = \begin{cases}
G_T^{(2.5)}\ket{\p}_T & \mbox{if $G_0^{(2)}$ disappears on $\ket{\p_x}$,} \\
G_T^{(2.5)}(\ctrlZ{S\cap T})\ket{\p}_T & \mbox{if $G_0^{(2)}$ does not disappear on $\ket{\p_x}$,}
\end{cases} \]
unentangled with any other qubits.  Since $\ket{\tau_x}$ determines the final target value, it must change according to $\oplus x$ (because $T$ includes the target), there must exist an $x_0\in\two^{n-1}$ such that $G_0^{(2)}$ disappears on $\ket{\p_{x_0}}$.  Fix such an $x_0$.

By Lemma~\ref{lem:no-three-input-qubits}, input qubits~$2,3,4$ cannot all be touched by the same gate on layer~$1$.  Without loss of generality, we can assume that qubit~$4$ does not share a layer-$1$ gate with qubits $2$ and $3$.  This means that we can decompose $\ket{\p_{x_0}}$ further:
\[ \ket{\p_{x_0}} = \ket{\p}_T\otimes \ket{\p_{x_0}}_{T_1} \otimes \ket{\p_{x_0}}_{T_2} \]
for some partition $\{T_1,T_2\}$ of $\overline{T}$ such that $T_1$ contains qubits $2$ and $3$ and $T_2$ contains qubit $4$.  We then have that $\ket{\p_{x_0}}$ separates at $\{T\cup T_1,T_2\}$.  By Lemma~\ref{lem:no-zero-divisors}, either $G_0^{(2)}$ disappears on $\ket{\p}_T\otimes\ket{\p_{x_0}}_{T_1} \otimes \ket{\sigma}_{T_2}$ for any $\ket{\sigma}_{T_2}\in\cH_{T_2}$ or $G_0^{(2)}$ disappears on $\ket{\sigma}_{T\cup T_1} \otimes \ket{\p_{x_0}}_{T_2}$ for any $\ket{\sigma}_{T\cup T_1}\in\cH_{T\cup T_1}$.  In the former case, the final target value does not depend on input qubit~$4$; in the latter, it does not depend on input qubits~$2$ or $3$.  In either case, $C$ cannot compute $\oplus_n$.
\end{proof}

\begin{rmrk}
The condition that all the ancilla qubits are initially $\ket{0}$ in Theorem~\ref{thm:depth-2-unclean} can be relaxed to allow for a more general initial ancilla state, provided the overall initial state of the circuit separates at $\{T,T_1\cup T_2\}$ and at $\{T\cup T_1,T_2\}$.  That is, $C$ cannot $\ket{\alpha}$-compute $\oplus_n$ for any $\ket{\alpha}$ such that $\ket{0^{n+1}}\otimes\ket{\alpha}$ separates at $\{T,T_1\cup T_2\}$ and at $\{T\cup T_1,T_2\}$.
\end{rmrk}

\begin{rmrk}
Theorem~\ref{thm:depth-2-unclean} also holds for depth-$2$ circuits that include $G_\eta$-gates as in Eq.~(\ref{eqn:G-eta}), and the value of $\eta$ need not be the same for each gate.
\end{rmrk}

\subsection{Further Research}

Our techniques currently work for depth~$2$, but obviously, we would like to prove limitations on $\QAC$-circuits of higher depth.
We hope the entanglement lemma (Lemma~\ref{lem:S-entangled}) will be useful for depth~$3$ and beyond, however.  Lemma~\ref{lem:kill-parity} is stronger than needed for the current results; by committing clusters of input qubits to certain states, we can turn off $\ctrlZ{}$-gates through more than two layers.  These two lemmas as well as Lemma~\ref{lem:no-zero-divisors} provide powerful tools for dealing with $\QAC$-circuits of higher depth.  By simplifying a circuit in the right way, one hopes to reduce its effective depth, and this in turn may lead to an inductive proof of the limitations of such circuits.

More specifically, Lemma~\ref{lem:S-entangled} may be useful for depth~$3$ and beyond because it disallows many different circuit topologies for $\QAC$-circuits computing parity.  For example, the following circuit topology is impossible for simulating parity (or any classical reversible function for that matter) cleanly unless the middle gate simplifies:
\begin{center}
\begin{quantikz}
\lstick{$1$} & \ctrl{2}   & \qw        & \ctrl{1}   & \qw \\
\lstick{$2$} & \control{} & \ctrl{2}   & \control{} & \qw \\
\lstick{$3$} & \control{} & \control{} & \ctrl{2}   & \qw \\
\lstick{$4$} & \ctrl{2}   & \control{} & \control{} & \qw \\
\lstick{$5$} & \control{} & \qw        & \control{} & \qw \\
\lstick{$6$} & \control{} & \qw        & \qw        & \qw
\end{quantikz}
\end{center}
(Here only the $\ctrlZ{}$-gates are shown; the single qubit gates are suppressed.)
The reason is that, for any classical input, the state on the far left is completely separable, and so the state immediately after the first layer is $\{2,3,4\}$-separable (separating at $\{\{1,2,3\},\{4,5,6\}\}$).  If the middle gate does not simplify, then by the lemma, the state $\ket{\psi}$ immediately to its right must be $\{2,3,4\}$-entangled.  Now assuming a clean simulation, the state on the far right is completely separable, and so running the circuit backwards from the right, we see that $\ket{\psi}$ must be $\{2,3,4\}$-separable (separating at $\{\{1,2\},\{3,4,5,6\}\}$).  This is a contradiction.

Finally, we note that the techniques used to prove that parity cannot be computed by classical $\AC^0$-circuits (i.e., random restrictions and switching lemmas) are not necessarily needed or even relevant here, because fanout is taken for granted in the classical case, unlike in the quantum case.

\section*{Acknowledgments}

The authors would like to thank Alexander Duncan for helpful discussions regarding the results in Section~\ref{sec:irred-results}.

\bibliographystyle{alpha}
\newcommand{\etalchar}[1]{$^{#1}$}

\end{document}